\newtheorem{definition}{Definition}
\newtheorem{proposition}[definition]{Proposition}
\newtheorem{theorem}[definition]{Theorem}
\newtheorem{corollary}[definition]{Corollary}
\def\squareforqed{\hbox{\rlap{$\sqcap$}$\sqcup$}}
\def\qed{\ifmmode\squareforqed\else{\unskip\nobreak\hfil
\penalty50\hskip1em\null\nobreak\hfil\squareforqed
\parfillskip=0pt\finalhyphendemerits=0\endgraf}\fi}
\def\endenv{\ifmmode\;\else{\unskip\nobreak\hfil
\penalty50\hskip1em\null\nobreak\hfil\;
\parfillskip=0pt\finalhyphendemerits=0\endgraf}\fi}
\newenvironment{proof}{\noindent \textbf{{Proof~} }}{\qed}
\newenvironment{remark}{\noindent \textbf{{Remark~}}}{\qed}
\mathchardef\ordinarycolon\mathcode`\:
\def\vcentcolon{\mathrel{\mathop\ordinarycolon}}
\newcommand{\nc}{\newcommand}
\nc{\rnc}{\renewcommand}
\nc{\beg}{\begin{equation}}
\nc{\eeq}{{\end{equation}}}
\nc{\beqa}{\begin{eqnarray}}
\nc{\eeqa}{\end{eqnarray}}
\nc{\lbar}[1]{\overline{#1}}
\nc{\bra}[1]{\langle#1|}
\nc{\ket}[1]{|#1\rangle}
\nc{\ketbra}[2]{|#1\rangle\!\langle#2|}
\nc{\braket}[2]{\langle#1|#2\rangle}
\nc{\proj}[1]{| #1\rangle\!\langle #1 |}
\nc{\avg}[1]{\langle#1\rangle}
\nc{\Rank}{\operatorname{Rank}}
\nc{\smfrac}[2]{\mbox{$\frac{#1}{#2}$}}
\nc{\tr}{\operatorname{Tr}}
\nc{\ox}{\otimes}
\nc{\dg}{\dagger}
\nc{\dn}{\downarrow}
\nc{\cA}{{\cal A}}
\nc{\cB}{{\cal B}}
\nc{\cC}{{\cal C}}
\nc{\cD}{{\cal D}}
\nc{\cE}{{\cal E}}
\nc{\cF}{{\cal F}}
\nc{\cG}{{\cal G}}
\nc{\cH}{{\cal H}}
\nc{\cI}{{\cal I}}
\nc{\cJ}{{\cal J}}
\nc{\cK}{{\cal K}}
\nc{\cL}{{\cal L}}
\nc{\cM}{{\cal M}}
\nc{\cN}{{\cal N}}
\nc{\cO}{{\cal O}}
\nc{\cP}{{\cal P}}
\nc{\cQ}{{\cal Q}}
\nc{\cR}{{\cal R}}
\nc{\cS}{{\cal S}}
\nc{\cT}{{\cal T}}
\nc{\cV}{{\cal V}}
\nc{\cX}{{\cal X}}
\nc{\cY}{{\cal Y}}
\nc{\cZ}{{\cal Z}}
\nc{\cW}{{\cal W}}
\nc{\csupp}{{\operatorname{csupp}}}
\nc{\qsupp}{{\operatorname{qsupp}}}
\nc{\var}{{\operatorname{var}}}
\nc{\rar}{\rightarrow}
\nc{\lrar}{\longrightarrow}
\nc{\polylog}{{\operatorname{polylog}}}
\nc{\wt}{{\operatorname{wt}}}
\nc{\av}[1]{{\left\langle {#1} \right\rangle}}
\nc{\supp}{{\operatorname{supp}}}
\def\a{\alpha}
\def\b{\beta}
\def\e{\epsilon}
\def\o{\omega}
\def\U{\Upsilon}
\def\O{\Omega}
\nc{\RR}{{{\mathbb R}}}
\nc{\CC}{{{\mathbb C}}}
\nc{\FF}{{{\mathbb F}}}
\nc{\NN}{{{\mathbb N}}}
\nc{\ZZ}{{{\mathbb Z}}}
\nc{\PP}{{{\mathbb P}}}
\nc{\QQ}{{{\mathbb Q}}}
\nc{\UU}{{{\mathbb U}}}
\nc{\EE}{{{\mathbb E}}}
\nc{\id}{{\operatorname{id}}}
\nc{\CHSH}{{\operatorname{CHSH}}}
\nc{\be}{\begin{equation}}
\nc{\ee}{{\end{equation}}}
\nc{\bea}{\begin{eqnarray}}
\nc{\eea}{\end{eqnarray}}
\nc{\Hom}[2]{\mbox{Hom}(\CC^{#1},\CC^{#2})}
\nc{\rU}{\mbox{U}}
\nc{\ob}[1]{#1}
\nc{\SEP}{{\text{SEP}}}
\nc{\NS}{{\text{NS}}}
\nc{\LOCC}{{\text{LOCC}}}
\nc{\PPT}{{\text{PPT}}}
\nc{\EXT}{{\text{EXT}}}
\nc{\Sym}{{\operatorname{Sym}}}
\nc{\ERLO}{{E_{\text{r,LO}}}}
\nc{\ERLOCC}{{E_{\text{r,LOCC}}}}
\nc{\ERPPT}{{E_{\text{r,PPT}}}}
\nc{\ERLOCCinfty}{{E^{\infty}_{\text{r,LOCC}}}}
\nc{\Aram}{{\operatorname{\sf A}}}
\begin{document}
\title{Semidefinite programming strong converse bounds for classical capacity}

\author{Xin Wang$^{1}$}
\email{xin.wang-8@student.uts.edu.au}
\author{Wei Xie$^{1}$}
\email{xievvvei@gmail.com}
\author{Runyao Duan$^{1,2}$}
\email{runyao.duan@uts.edu.au}
\affiliation{$^1$Centre for Quantum Software and Information, Faculty of Engineering and Information Technology, University of Technology Sydney, NSW 2007, Australia}
\affiliation{$^2$UTS-AMSS Joint Research Laboratory for Quantum Computation and Quantum Information Processing, Academy of Mathematics and Systems Science, Chinese Academy of Sciences, Beijing 100190, China}
\thanks{A preliminary version of this paper was presented at the 20th Annual Conference on Quantum Information Processing and the IEEE International Symposium on Information Theory in 2017 \cite{Wang2017b}.}

\begin{abstract}
We investigate the classical communication over quantum channels when assisted by no-signalling (NS) and PPT-preserving (PPT) codes, for which both the optimal success probability of a given transmission rate and the one-shot $\epsilon$-error capacity are formalized as semidefinite programs (SDPs). Based on this, we obtain improved SDP finite blocklength converse bounds of general quantum channels for entanglement-assisted codes and unassisted codes. Furthermore, we derive two SDP strong converse bounds for the classical capacity of general quantum channels: for any code with a rate exceeding either of the two bounds of the channel, the success probability vanishes exponentially fast as the number of channel uses increases. In particular, applying our efficiently computable bounds, we derive an improved upper bound on the classical capacity of the amplitude damping channel. We also establish the strong converse property for the classical and private capacities of a new class of quantum channels. We finally study the zero-error setting and provide efficiently computable upper bounds on the one-shot zero-error capacity of a general quantum channel.
\end{abstract}

\maketitle

\section{Introduction}
The reliable transmission of classical information via noisy quantum channels is central to quantum information theory. 
The classical capacity of a noisy quantum channel is the highest rate at which it can convey classical   information reliably over asymptotically many uses of the channel. 
The Holevo-Schumacher-Westmoreland (HSW) theorem \cite{Holevo1973,Holevo1998,Schumacher1997} gives a full characterization of the classical capacity of quantum channels:
\begin{equation}\label{CN}
C(\cN):=\sup_{n\ge 1}\frac{\chi(\cN^{\ox n})}{n}, \\
\end{equation}
where $\chi(\cN)$ is the Holevo capacity of the channel $\cN$ given by 
$\chi(\cN):=\max_{\{(p_i,\rho_i)\}} H\left(\sum_ip_i\cN(\rho_i)\right)-\sum_ip_iH(\cN(\rho_i))$, $\{(p_i,\rho_i)\}_i$ is an ensemble of quantum states on $A$ and $H(\sigma)=-\tr \sigma\log\sigma$ is the von Neumann entropy of a quantum state. Throughout this paper, $\log$ denotes the binary logarithm.

For certain classes of quantum channels  (depolarizing channel \cite{King2003}, erasure channel \cite{Bennett1997}, unital qubit channel \cite{King2002}, etc. \cite{Amosov2000,Datta2006,Fukuda2005,Konig2012}), the classical capacity of the channel is equal to the Holevo capacity, since their Holevo capacities are all additive. However, for a general quantum channel, our understanding of the classical capacity is still limited. The work of Hastings  \cite{Hastings2008a} shows that the Holevo capacity is generally not additive, and thus the regularization in Eq. (\ref{CN}) is necessary in general.
Since the complexity of computing the Holevo capacity is NP-complete \cite{Beigi2007}, the regularized Holevo capacity of a general quantum channel is notoriously difficult to calculate.
Even for the qubit amplitude damping channel, the classical capacity remains unknown.

The converse part of the HSW theorem states that if the communication rate exceeds the capacity, then the error probability of any coding scheme cannot approach zero in the limit of many channel uses. 
This kind of  ``weak'' converse suggests the possibility for one to increase communication rates by allowing an increased error probability.
A   {\em strong converse property} leaves no such room for the trade-off; i.e.,  the error probability necessarily converges to one in the limit of many channel uses whenever the rate exceeds the capacity of the channel. For classical channels, the strong converse property for the classical capacity was established by  Wolfowitz \cite{Wolfowitz1978}. 
For quantum channels, the strong converse property for the classical capacity has been confirmed for several classes of channels \cite{Ogawa1999,Winter1999,Koenig2009,Wilde2013a,Wilde2014a}. 
 Winter \cite{Winter1999} and Ogawa and Nagaoka \cite{Ogawa1999} independently established the strong converse property for the classical capacity of classical-quantum channels. Koenig and Wehner \cite{Koenig2009} proved the strong converse property for particular covariant quantum channels. Recently, for the entanglement-breaking and Hadamard channels, the strong converse property was proved by Wilde, Winter and Yang \cite{Wilde2014a}. Moreover, the strong converse property for the pure-loss bosonic channel was proved by Wilde and Winter \cite{Wilde2013a}.
Unfortunately, for a general quantum channel, less is known about the strong converse property of the classical capacity, and it remains open whether this property holds for all quantum channels. 
A {\em strong converse bound} for the classical capacity is a quantity such that the success probability of transmitting classical messages vanishes exponentially fast as the number of channel uses increases if the rate of communication exceeds this quantity, which forbids the trade-off between rate and error in the limit of many channel uses.

Another fundamental problem, of both theoretical and practical interest, is the trade-off between the channel uses, communication rate and error probability in the non-asymptotic (or finite blocklength) regime. In a realistic setting, the number of channel uses is necessarily limited in quantum information processing. Therefore one has to make a trade-off between the transmission rate and error tolerance. 
Note that one only needs to study one-shot communication over the channel since it can correspond to a finite blocklength and one can also study the asymptotic capacity via the finite blocklength approach. The study of  finite blocklength regime has recently garnered great interest in classical information theory (e.g., \cite{Polyanskiy2010,Hayashi2009,Matthews2012}) as well as in quantum information theory (e.g., \cite{Matthews2014,Wang2012,Renes2011,Tomamichel2013a,Berta2011a,Leung2015c,Tomamichel2015,Beigi2015,Tomamichel2015b,Tomamichel2016,Fang2017,Cheng2017b,Chubb2017}). For classical channels, Polyanskiy, Poor, and Verd\'u \cite{Polyanskiy2010} derive the finite blocklength converse bound via hypothesis testing and Matthews \cite{Matthews2012} provides an alternative proof of this converse bound via classical no-signalling codes. For  classical-quantum channels, the one-shot converse and achievability bounds are given in \cite{Mosonyi2009a,Wang2012,Renes2011}. 
Recently, the one-shot converse bounds for entanglement-assisted and unassisted codes were given in \cite{Matthews2014}, which generalizes the hypothesis testing approach in \cite{Polyanskiy2010} to quantum channels.

To gain insights into the generally intractable problem of evaluating the capacities of quantum channels,  a natural approach is to study the performance of extra free resources in the coding scheme. This scheme, called a {\em code}, is equivalently a bipartite operation performed jointly by the sender Alice and the receiver Bob to assist the communication \cite{Leung2015c}.  The {\em PPT-preserving codes}, i.e. the PPT-preserving bipartite operations, include all operations that can be implemented by local operations and classical communication (LOCC) and were introduced to study entanglement distillation in an early paper by Rains \cite{Rains2001}. 
The {\em no-signalling (NS) codes} refer to the bipartite quantum operations with the no-signalling constraints, which arise in the research of the relativistic causality of quantum operations \cite{Beckman2001, Eggeling2002a, Piani2006, Oreshkov2012}. Recently these general codes have been used to study the zero-error classical communication \cite{Duan2016} and quantum communication \cite{Leung2015c} over quantum channels. 
Our work follows this approach and focuses on classical  communication via quantum
channels assisted by  NS and  NS$\cap$PPT codes. 

\section{Summary of results}
In this paper, we focus on the reliable classical communication over quantum channels assisted by no-signalling and PPT-preserving codes under both non-asymptotic (or finite blocklength) and asymptotic settings.  The summary of our results is as follows.

In Section \ref{NS PPT finite},  we formalize the optimal average success probability of transmitting classical messages over a quantum channel assisted by NS or NS$\cap$PPT codes as SDPs. Using these SDPs,  we establish the one-shot  NS-assisted (or NS$\cap$PPT-assisted) $\e$-error capacity, i.e., the maximum rate of classical communication with a fixed error threshold.  We further compare these one-shot $\e$-error capacities with the previous SDP-computable entanglement-assisted (or unassisted) converse bound derived by the technique of quantum hypothesis testing in \cite{Matthews2014}. Our one-shot $\e$-error capacities, which consider potentially stronger assistances, are always no larger than the previous SDP bounds, and the inequalities can be strict even for qubit channels or classical-quantum channels.  This means that our one-shot $\e$-error capacities can provide tighter finite blocklength converse bounds for the entanglement-assisted and unassisted classical capacity. Moreover, our one-shot $\e$-error capacities also reduce to the Polyanskiy-Poor-Verd\'u (PPV) converse bound  \cite{Polyanskiy2010} for classical channels.
Furthermore, in common with the quantum hypothesis testing converse bound \cite{Matthews2014} and the bound of Datta and Hsieh \cite{Datta2013c}, the large block length behaviour of our one-shot NS-assisted  $\e$-error capacity also recovers the converse part of the formula for entanglement-assisted capacity \cite{Bennett1999} and implies that no-signalling-assisted classical capacity coincides with the entanglement-assisted classical capacity.

In Section \ref{NS PPT asymptotics}, we derive two SDP strong converse bounds for the  NS$\cap$PPT-assisted classical capacity of a general quantum channel based on the one-shot characterization of the optimal success probability. These bounds also provide efficiently computable strong converse bounds for the classical capacity.  As a special case, we show that $\log(1+\sqrt{1-\gamma})$ is a strong converse bound for the classical capacity of the amplitude damping channel with parameter $\gamma$, and this improves the best previously known upper bound in  \cite{Brandao2011c}. Furthermore, applying our strong converse bounds, we also prove the strong converse property for the classical and private capacities of a new class of quantum channels.

In Section \ref{NS PPT zero error}, we consider the zero-error communication problem \cite{Shannon1956},  which requires that the communication is with zero probability of error. To be specific, based on our SDPs of optimal success probability, we derive the one-shot NS-assisted (or NS$\cap$PPT-assisted)  zero-error capacity of general quantum channels. Our result of the NS-assisted capacity provides an alternative proof of the NS-assisted zero-error capacity in \cite{Duan2016}. Moreover, our one-shot NS$\cap$PPT-assisted zero-error capacity gives an SDP-computable upper bound on the one-shot unassisted zero-error capacity, and it can be strictly smaller than the previous upper bound in \cite{Duan2013}.

Finally, in Section \ref{conclusion}, we make a conclusion and leave some interesting open questions.
\section{Preliminaries}
In the following, we will frequently use symbols such as $A$ (or $A'$) and $B$ (or $B'$) to denote (finite-dimensional) Hilbert spaces associated with Alice and Bob, respectively. We use $d_A$ to denote the dimension of system $A$. The set of linear operators over $A$ is denoted by $\cL(A)$. We usually write an operator with subscript indicating the system that the operator acts on, such as $T_{AB}$, and write $T_A:=\tr_B T_{AB}$. Note that for a linear operator $R\in\cL(A)$, we define $|R|=\sqrt{R^\dagger R}$, where $R^\dagger$ is the conjugate transpose of $R$, and the trace norm of $R$ is given by $\|R\|_1=\tr |R|$. The operator norm $\|R\|_\infty$ is defined as the maximum eigenvalue of $|R|$. 
A deterministic quantum operation (quantum channel) $\cN$ ($A'\to B$) is simply a completely positive (CP) and trace-preserving (TP) linear map from $\cL(A')$ to $\cL(B)$.  The Choi-Jamio\l{}kowski matrix \cite{Jamiokowski1972,Choi1975}  of $\cN$ is given by $J_{\cN}=\sum_{ij} \ketbra{i_A}{j_{A}} \ox \cN(\ketbra{i_{A'}}{j_{A'}})$, where $\{\ket{i_A}\}$ and $\{\ket{i_{A'}}\}$ are orthonormal bases on isomorphic Hilbert spaces $A$ and $A'$, respectively. 
 A positive semidefinite operator $E \in \cL(A \ox B)$ is
said to be a positive
partial transpose operator (or simply PPT) if $E^{T_{B}}\geq 0$, where ${T_{B}}$ means the partial transpose with respect to the party
$B$, i.e., $(\ketbra{ij}{kl})^{T_{B}}=\ketbra{il}{kj}$.
 As shown in \cite{Rains2001},
a bipartite operation $\Pi(A_i B_i\to A_o B_o)$ is PPT-preserving if and only if its Choi-Jamio\l{}kowski matrix $Z_{A_iB_iA_oB_o}$  is PPT. We sometimes omit the identity operator or operation $\1$, for example, $\cE(A\to B)(X_{AC})\equiv (\cE(A\to B)\ox\1_C)(X_{AC})$.

The constraints of PPT and NS can be mathematically characterized as follows.  A bipartite operation $\Pi(A_iB_i\to A_oB_o)$ is no-signalling and PPT-preserving if and only if its Choi-Jamio\l{}kowski matrix $Z_{A_iB_iA_oB_o}$ satisfies \cite{Leung2015c}:
\begin{equation}\label{code constraints}
\begin{split}
Z_{A_iB_iA_oB_o}\ge 0,  &\quad (\text{CP})\\
Z_{A_iB_i} = \1_{A_i B_i}, &\quad (\text{TP})\\
Z_{A_iB_iA_oB_o}^{T_{B_iB_o}}\ge 0, &\quad(\text{PPT}) \\
Z_{A_iB_iB_o}=\frac{\1_{A_i}}{d_{A_i}}\ox Z_{B_iB_o}, &\quad (A \not\rightarrow B)\\
Z_{A_iB_iA_o}=\frac{\1_{B_i}}{d_{B_i}}\ox Z_{A_iA_o}, &\quad (B \not\rightarrow A) \\
\end{split}\end{equation}
where the five lines correspond to characterize that $\Pi$ is completely positive, trace-preserving, PPT-preserving, no-signalling from A to B, no-signalling from B to A, respectively. The structure of no-signalling codes is also studied in \cite{Duan2016}.

Semidefinite programming  \cite{Vandenberghe1996} is a subfield of convex optimization and is a  powerful tool in quantum information theory with many applications (e.g., \cite{Matthews2014,Leung2015c,Duan2016,Rains2001,Wang2016,Harrow2015,Wang2016c,Li2017,Berta2015,Xie2017}).
There are known polynomial-time algorithms for semidefinite programming \cite{Khachiyan1980}. 
In this work, we use the CVX software (a Matlab-based convex modeling framework)
\cite{Grant2008}  and QETLAB (A Matlab Toolbox for Quantum
Entanglement) \cite{NathanielJohnston2016}
to solve the SDPs. 
Details about  semidefinite programming can be found in \cite{Watrous2011b}.

\section{Classical communication assisted by NS and PPT codes}
\label{NS PPT finite}
\subsection{Semidefinite programs for optimal success probability}
Suppose Alice wants to send the classical message labeled by $\{1,\dots,m\}$ to Bob using the composite channel $\cM=\Pi\circ\cN$, where $\Pi$ is a bipartite operation that generalizes the usual encoding scheme $\cE$ and decoding scheme $\cD$, see Fig. 1 for details. 
In this paper, we consider $\Pi$ as the bipartite operation implementing the $\rm{NS\cap PPT}$ or $\rm{NS}$ assistance. 
After the action of $\cE$ and $\cN$, the message results in quantum state at Bob's side. Bob then performs a POVM with $m$ outcomes on the resulting quantum state. The POVM is a component of the operation $\cD$. Since the results of the POVM and the input messages are both classical, it is natural to assume that $\cM$ is with classical registers throughout this paper, that is, $\Delta\circ\cM\circ\Delta=\cM$ for some completely dephasing channel $\Delta$. If the outcome $k\in\{1,\dots,m\}$ happens, he concludes that the message with label $k$ was sent. Let $\O$ be some class of bipartite operations. The average success probability of the general code $\Pi$ and the $\O$-class code is defined as follows.

\begin{definition}
The average success probability of $\cN$ to transmit $m$ messages assisted with the code $\Pi$ is defined 
by
\begin{equation}
f(\cN,\Pi,m)=\frac{1}{m}\sum_{k=1}^{m}\tr(\cM(\proj k)\proj{k}),
\end{equation}
where $\cM\equiv\Pi\circ\cN$ and $\{\ket{k}\}$ is the computational basis in system $A_i$.

Furthermore, the optimal average success probability of $\cN$ to transmit $m$ messages assisted with $\O$-class code is defined by
\begin{equation}
f_\O(\cN,m)= \sup_{\Pi} f(\cN,\Pi,m),
\end{equation}
where the maximum is over the codes in class $\O$.
\end{definition}

We now define the $\O$-assisted classical capacity of a quantum channel as follows.
\begin{definition}
\begin{equation}
        C_{\O}(\cN)
        := \sup \left\{r: \mathop {\lim }\limits_{n \to \infty }  
        f_\O(\cN^{\ox n}, 2^{rn}) = 1\right\}.
\end{equation}
\end{definition}

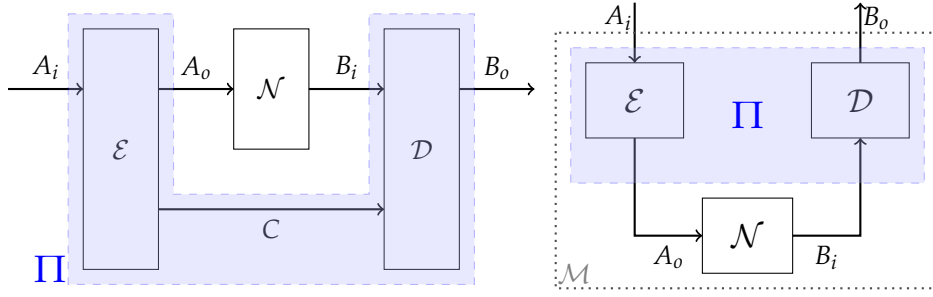
\begin{figure}
\centering
\begin{tikzpicture}
    \def\xa{1};\def\xb{2};\def\xc{3};\def\xd{4};\def\xe{5};\def\xf{6};\def\xg{7};
    \def\ya{0.8};\def\yc{-1.6};\def\yd{-2.4};  \def\o{0.2};
    \pgfmathsetmacro\yb{-\ya}
    \draw[thick,->] (0,0) -- node[above] {$A_i$} (\xa,0);
    \draw[thick,->] (\xb,0) -- node[above] {$A_o$} (\xc,0);
    \draw[thick,->] (\xd,0) -- node[above] {$B_i$} (\xe,0);
    \draw[thick,->] (\xf,0) -- node[above] {$B_o$} (\xg,0);
    \draw[thick,->] (\xb,\yc) -- node[below] {$C$} (\xe,\yc);
    \draw (\xa,\ya) rectangle (\xb,\yd) node[midway] {$\cE$};
    \draw (\xc,\ya) rectangle (\xd,\yb) node[midway] {$\cN$};
    \draw (\xe,\ya) rectangle (\xf,\yd) node[midway] {$\cD$};
    \draw[dashed,blue,fill=blue!30,opacity=0.3] (\xa-\o,\ya+\o) -- (\xb+\o,\ya+\o) -- (\xb+\o,\yc+\o) -- (\xe-\o,\yc+\o)-- (\xe-\o,\ya+\o)  -- (\xf+\o,\ya+\o) -- (\xf+\o,\yd-\o) -- 
    (\xa-\o,\yd-\o) -- (\xa-\o,\ya+\o);
    \node[blue,left,shift={(0.1,0.2)}] (d) at (\xa-\o,\yd-\o) {\Large $\Pi$};
\end{tikzpicture}
  \label{fig:test2}
\vspace{0.6cm}
\begin{tikzpicture}
    \def\xbb{0.6};\def\xb{1.5};\def\xsh{1.3};\def\ysh{1};
    \def\ya{1.3};\def\yc{3.1};\def\lo{0.2};\def\loo{0.4};
    \pgfmathsetmacro\xa{-\xb};\pgfmathsetmacro\xaa{-\xbb};
    \pgfmathsetmacro\xc{\xa-\xsh/2};\pgfmathsetmacro\xd{\xa+\xsh/2};
    \pgfmathsetmacro\xe{\xb-\xsh/2};\pgfmathsetmacro\xf{\xb+\xsh/2};
    \pgfmathsetmacro\yb{\ya+\ysh};
    \draw[thick,->] (\xa,\yc) -- node[left,shift={(0.1,0.2)}] {$A_i$} (\xa,\yb);
    \draw[thick,<-] (\xb,\yc) -- node[right,shift={(-0.1,0.2)}] {$B_o$} (\xb,\yb);
    \draw (\xc,\yb) rectangle (\xd,\ya) node[midway] {\large $\cE$};
    \draw (\xe,\yb) rectangle (\xf,\ya) node[midway] {\large $\cD$};
    \draw[thick,->] (\xa,\ya) -- (\xa,0) -- node[below] {$A_o$} (\xaa,0);
    \draw[thick,<-] (\xb,\ya) -- (\xb,0) -- node[below] {$B_i$} (\xbb,0);
    \draw (\xaa,\ysh/2) rectangle (\xbb,-\ysh/2) node[midway] {\large $\cN$};
    \draw[dashed,blue,fill=blue!30,opacity=0.3] (\xc-\lo,\yb+\lo) rectangle (\xf+\lo,\ya-0.6);
    \node[blue] (dots) at (0,\yb-0.7) {\Large $\Pi$};
    \draw[thick,dotted,black!70] (\xc-\loo,\yb+\loo) rectangle (\xf+\loo,-\ysh/2-\lo);
    \node[black!50] (dots) at (\xc-0.15,-\ysh/2) {$\cM$};
\end{tikzpicture}
\caption{Bipartite operation $\Pi(A_iB_i\to A_oB_o)$ is equivalently the coding scheme ($\cE$,$\cD$) with free extra resources, such entanglement or no-signalling correlations. The whole operation is to emulate a noiseless classical (or quantum) channel $\cM(A_i\to B_o)$ using a given noisy quantum channel $\cN(A_o\to B_i)$ and the bipartite operation $\Pi$.}
\label{fig:QNSC}
\end{figure}

As described above, one can simulate a channel $\cM$ with the channel $\cN$ and code $\Pi$, where $\Pi$ is a bipartite CPTP operation from $A_iB_i$ to $A_oB_o$ which is no-signalling (NS) and PPT-preserving (PPT). In this work we shall also consider other classes of codes, such as entanglement-assisted (EA) code, unassisted (UA) code. The class of entanglement-assisted codes corresponds to bipartite operations of the form $\Pi(A_iB_i\to A_oB_o)=\cD(B_i\hat B\to B_o)\cE(A_i\hat A\to A_o)\varphi_{\hat A\hat B}$, where $\cE,\cD$ are encoding and decoding operations respectively, and $\varphi_{\hat A\hat B}$ can be any shared entangled state of arbitrary systems $\hat A$ and $\hat B$. we use $\O$ to denote specific class of codes such as $\O\in\{\text{NS},\text{PPT},\text{NS}\cap\text{PPT},\text{EA},\text{UA}\}$ in the following. 

Let $\cM(A_i\to B_o)$ denote the resulting composition channel of $\Pi$ and $\cN$, written $\cM=\Pi\circ\cN$. As both $\cM$ and $\cN$ are quantum channels, there exist quantum channels $\cE(A_i\to A_oC)$ and $\cD(B_iC\to B_o)$, where $\cE$ is an isometry operation and $C$ is a quantum register, such that \cite{Chiribella2008}
\begin{equation}
\cM(A_i\to B_o) = \cD(B_iC\to B_0) \circ \cN(A_o \to B_i) \circ \cE(A_i\to A_oC).
\end{equation}
Based on this,  the Choi-Jamio\l{}kowski matrix of $\cM$ is given by \cite{Leung2015c}
\begin{equation}
J_{\cM}=\tr_{A_oB_i} (J_{\cN}^T\ox\1_{A_iB_o})Z_{A_iA_oB_iB_o}.
\end{equation}
The operations $\cE$ and $\cD$ can be considered as generalized encoding and decoding operations respectively, except that the register $C$ may be not possessed by Alice or Bob. If the Hilbert space with $C$ is trivial, 
$\cE$ and $\cD$ become the unassisted local encoding/decoding operations. Moreover, the coding schemes $\cE,\cD$ with register $C$ can be designed to be forward-assisted codes \cite{Leung2015c}.

We are now able to derive the one-shot characterization of classical communication assisted by NS (or \text{NS}$\cap$\text{PPT}) codes.
 \begin{theorem}
 \label{theoremFidelity}
 For a given quantum channel $\cN$, the optimal success probability of $\cN$ to transmit $m$ messages assisted by NS$\cap$PPT codes is given by
\begin{equation}
\label{SDP f}
\begin{split}
 f_{{\text{\rm{NS}}\cap\text{\rm{PPT}}}} (\cN, m)= \max &\ \tr J_{\cN}F_{AB} \\
  \operatorname{s.t.}  &\  0\le F_{AB}\le \rho_A\otimes \1_B,\\
  &\ \tr\rho_A=1,\\
 &\ \tr_{A}F_{AB}=\1_B/m, \\
&\ 0\le F_{AB}^{T_B}\le \rho_A\otimes \1_B  \ (\rm{PPT}).
\end{split}\end{equation}
Similarly, when assisted by NS codes, one can remove the PPT constraint to obtain the optimal success probability as follows:
\begin{equation}
\label{SDP f NS}
\begin{split}
 f_{{\text{\rm{NS}}}} (\cN, m)= \max &\ \tr J_{\cN}F_{AB} \\
 \operatorname{s.t.}  &\ 0\le F_{AB}\le \rho_A\otimes \1_B, \\
 &\ \tr\rho_A=1,\\
 &\ \tr_{A}F_{AB}=\1_B/m.
\end{split}\end{equation}\end{theorem}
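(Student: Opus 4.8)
The plan is to convert the optimisation over codes into a semidefinite program over the Choi matrix $Z_{A_iA_oB_iB_o}$ of $\Pi$ subject to \eqref{code constraints}, and then to eliminate $Z$ in favour of the single operator $F_{AB}$ of \eqref{SDP f}. Since $\cM=\Pi\circ\cN$ may be assumed to carry classical registers ($\Delta\circ\cM\circ\Delta=\cM$), the success probability sees only the diagonal action of $\cM$, so
\[ f(\cN,\Pi,m)=\tr\!\big(J_{\cM}\,D\big),\qquad D:=\tfrac{1}{m}\sum_{k=1}^{m}\proj{k}_{A_i}\ox\proj{k}_{B_o}. \]
Substituting $J_{\cM}=\tr_{A_oB_i}\big((J_{\cN}^{T}\ox\1_{A_iB_o})Z\big)$ and using cyclicity of the trace gives $f(\cN,\Pi,m)=\tr(J_{\cN}F)$, where $F=F_{AB}$ lives on the input and output systems of $\cN$ and is the partial contraction of $Z$ against $D$ over $A_iB_o$; the partial transposes produced by the link--product formula may be absorbed, using that $D$ is diagonal and that entrywise conjugation preserves the constraint set \eqref{code constraints}. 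Thus $f_{\mathrm{NS}\cap\mathrm{PPT}}(\cN,m)$ equals the maximum of $\tr(J_{\cN}F)$ over all $F$ arising in this way from a $Z$ satisfying \eqref{code constraints}, and the theorem is the claim that this set of attainable $F$ coincides with the feasible region of \eqref{SDP f}.

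For the inclusion ``feasible $Z\Rightarrow$ feasible $F$'' I would check the constraints of \eqref{SDP f} one at a time. Because $D$ is diagonal, $F\ge0$ is immediate from the CP line $Z\ge0$, and $F^{T_B}\ge0$ follows in exactly the same way from the PPT line $Z^{T_{B_iB_o}}\ge0$ (the transpose on $B_o$ is free, so only the transpose on $B_i$ remains). The $A\not\to B$ identity together with the TP line $Z_{A_iB_i}=\1_{A_iB_i}$, which forces $Z_{B_i}=d_{A_i}\1_{B_i}$, gives $\tr_{A}F=\1_{B}/m$; taking $\rho_A$ to be the normalised input operator $Z_{A_iA_o}$, which is well defined thanks to the $B\not\to A$ identity, gives $\tr\rho_A=1$; and feeding the two no--signalling identities into $Z\ge0$, respectively $Z^{T_{B_iB_o}}\ge0$, yields the sandwich bounds $0\le F\le\rho_A\ox\1_B$ and $0\le F^{T_B}\le\rho_A\ox\1_B$. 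Hence every code satisfies $f(\cN,\Pi,m)=\tr(J_{\cN}F)\le$ (value of \eqref{SDP f}).

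The reverse inclusion is the substantive step: given $F,\rho_A$ feasible for \eqref{SDP f}, I must exhibit a no--signalling, PPT--preserving code, with $d_{A_i}=d_{B_o}=m$, whose success probability equals $\tr(J_{\cN}F)$. The natural recipe is to take $Z=\sum_{k}\proj{k}_{A_i}\ox F_{A_oB_i}\ox\proj{k}_{B_o}+Z'$, where the ``diagonal'' piece already contracts against $D$ to $F$ and is manifestly positive and PPT by $F\ge0$ and $F^{T_B}\ge0$, while $Z'$ is supported off the code subspace $\mathrm{span}\{\ket{kk}_{A_iB_o}\}$ (so it leaves $F$ unchanged) and is built out of $\rho_A\ox\1_B-F$, its partial transpose, and maximally mixed ``fillers'' chosen to restore the trace condition $Z_{A_iB_i}=\1_{A_iB_i}$ and the two no--signalling marginal identities. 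I expect the main obstacle to be precisely this simultaneous fit: the inequalities $0\le F\le\rho_A\ox\1_B$ and $0\le F^{T_B}\le\rho_A\ox\1_B$ are exactly what one needs for the complementary blocks to be positive and PPT, but arranging $Z'$ so that the $A\not\to B$ and $B\not\to A$ identities hold exactly, without spoiling positivity or the partial transpose, is the delicate part, and here one leans on the explicit structure of no--signalling codes (as in \eqref{code constraints} and \cite{Duan2016}). Finally, \eqref{SDP f NS} follows by rerunning the argument with the PPT line of \eqref{code constraints} removed, which is exactly what deletes the two $F^{T_B}$ constraints.
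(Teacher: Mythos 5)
Your overall strategy (re-express $f$ via the Choi matrix of $\cM$ and the diagonal projector $D$, reduce to a single operator $F$ on the channel's in/out systems, and argue a two-sided inclusion between feasible codes and the feasible set of \eqref{SDP f}) is aligned with the paper, but there is a gap in both directions that the paper closes with a step you omit: \emph{symmetrization}. The paper observes that conjugating $Z$ by $\tau_{A_i}\ox\tau_{B_o}$, $\tau\in S_m$, preserves all of \eqref{code constraints}, and that the feasible set is convex, so one may average and assume without loss of generality that $Z$ is invariant under this twirling. Such a $Z$ necessarily has the block form $\widetilde Z = F_{A_oB_i}\ox D_{A_iB_o} + E_{A_oB_i}\ox(\1-D_{A_iB_o})$, and then the constraints on $\widetilde Z$ are \emph{equivalent}, not just sufficient, to the SDP constraints on $(F,\rho_A)$. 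Without this step, your forward inclusion is not quite right as written: for a non-symmetric $Z$ the diagonal blocks $F_k=\bra{kk}_{A_iB_o}Z\ket{kk}_{A_iB_o}$ can all differ, and ``taking $\rho_A$ to be the normalised $Z_{A_iA_o}$'' does not produce an operator on $A_o$ alone, nor does it obviously bound each $F_k$. You need either the twirl or at least an explicit averaging/convexity argument to reduce to a single $F$ and a single $\rho_A$.

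For the reverse inclusion, the ``delicate simultaneous fit'' you flag is actually automatic once you commit to the block ansatz. Given $(F,\rho_A)$ feasible for \eqref{SDP f} with $m\ge 2$, set $E:=(\rho_A\ox\1_B - F)/(m-1)$ and $Z:=F\ox D + E\ox(\1-D)$. Then $E\ge 0$ and $E^{T_B}\ge 0$ follow from the two upper sandwich bounds; $\tr_A E=(\1_B-\1_B/m)/(m-1)=\1_B/m$; hence $\tr_{A_oB_o}Z=\1_{A_iB_i}$ (TP), $\tr_{A_o}Z=\frac{1}{m}\1_{A_iB_iB_o}$ ($A\not\to B$), and $\tr_{B_o}Z=(F+(m-1)E)\ox\1_{A_i}=\rho_A\ox\1_{A_iB_i}$ ($B\not\to A$), while CP and PPT reduce to $F,E\ge0$ and $F^{T_B},E^{T_B}\ge0$ because $D$ and $\1-D$ are diagonal and invariant under $T_{B_o}$. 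So the construction you gesture at does work, but only once you recognise that the off-diagonal block $E$ is uniquely forced by the $B\not\to A$ constraint to be $(\rho_A\ox\1_B-F)/(m-1)$; there is no freedom to ``fill'' and nothing to tune. In short: your plan is sound, but you are missing the twirling lemma that justifies the block form and, with it, the explicit formula for $E$ that dissolves the difficulty you anticipate. The NS-only case \eqref{SDP f NS} then follows exactly as you say, by dropping the PPT line.
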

\begin{proof}
In this proof, we first use the Choi-Jamio\l{}kowski representations of quantum channels to refine the average success probability and then exploit symmetry to simplify the optimization over all possible codes. Finally, we impose the no-signalling and PPT-preserving constraints to obtain the semidefinite program of the optimal average success probability.

Without loss of generality, we assume  that $A_i$ and $B_o$ are classical registers with size $m$, i.e., the inputs and outputs are $\{\ket{k}_{A_i}\}_{k=1}^m$ and $\{\ket{k'}_{B_i}\}_{k'=1}^m$, respectively. 
For some NS$\cap$PPT code $\Pi$, the Choi-Jamio\l{}kowski  matrix of $\cM=\Pi\circ\cN$ is given by 
$J_{\cM}=\sum_{ij}\ketbra{i}{j}_{A_i}\ox\cM(\ketbra{i}{j}_{A'_i})$,
where ${A'_i}$ is isometric to $A_i$. 
Then, we can simplify $f(\cN,\Pi,m)$ to
\begin{equation}\begin{split}
f& (\cN,\Pi,m) \\
& =\frac{1}{m}\sum_{k=1}^{m}\tr\left( \cM(\proj k_{A'_i})\proj{k}_{B_o} \right) \\
&=\frac{1}{m}\tr \left( \sum_{i,j=1}^{m}(\ketbra{i}{j}_{A_i}\ox\cM(\ketbra {i}{j}_{A'_i}))\sum_{k=1}^m \ketbra{kk}{kk}_{A_iB_o} \right) \\
&=\frac{1}{m}\tr J_{\cM}\sum_{k=1}^m \ketbra{kk}{kk}_{A_iB_o}.
\end{split}\end{equation}

Then, denoting $D_{A_iB_o}=\sum_{k=1}^m \ketbra{kk}{kk}_{A_iB_o}$, we have
\begin{align*}
 f_{\text{NS}\cap\text{PPT}}(\cN, m)=\max_{\cM=\Pi\circ\cN} & \frac{1}{m}\tr (J_{\cM} D_{A_iB_o}),
\end{align*}
where $\cM=\Pi\circ\cN$ and  $\Pi$ is any feasible NS$\cap$PPT bipartite operation . (See FIG. \ref{fig:QNSC} for the implementation of $\cM$.) 
Noting  that $J_{\cM}=\tr_{A_oB_i} (J_{\cN}^T\ox\1_{A_iB_o})Z_{A_iA_oB_iB_o}$, 
we can further simplify  $f(\cN, m)$ as
\begin{equation}
\label{sdp_fidelity}
 \begin{split}
 f_{\text{NS}\cap\text{PPT}} & (\cN,m) \\
 =\max &\ \tr (J_{\cN}^T\ox \1_{A_i B_o})Z_{A_iA_oB_iB_o}(\1_{A_oB_i}\ox D_{A_iB_o})/m,\\
 \operatorname{s.t.} &\ Z_{A_iA_oB_iB_o} \text{ satisfies Eq. } (\ref{code constraints}).
 \end{split}
 \end{equation}
 
The next step is to  simplify   $f(\cN, m)$ by exploiting symmetry.  For any  permutation $\tau \in  S_m$, where $S_m$ is the symmetric group of degree $m$, if $Z_{A_iA_oB_iB_o}$ is feasible (satisfying the constraints in Eq. (\ref{code constraints})),
then it is not difficult to check that
\begin{equation}
Z'_{A_iA_oB_iB_o} = (\tau_{A_i} \ox\tau_{B_o}\ox\1_{A_oB_i}) Z_{A_iA_oB_iB_o} (\tau_{A_i} \ox\tau_{B_o}\ox\1_{A_oB_i}) ^{\dagger}
\end{equation}
is also feasible. And any convex combination $\lambda Z'+(1-\lambda) Z'' (0\le \lambda\le 1)$ of two operators satisfying Eq. (\ref{code constraints}) can also checked to be feasible. Therefore, if $Z_{A_iA_oB_iB_o}$ is feasible, so is
\begin{equation}\begin{split}
& \widetilde Z_{A_iA_oB_iB_o} = \cP_{A_iB_o} (Z_{A_iA_oB_iB_o}) \\
& := \frac{1}{m!}\sum_{\tau_{A_i},\tau_{B_o}\in S_m}(\tau_{A_i} \ox\tau_{B_o}) Z_{A_iA_oB_iB_o} (\tau_{A_i} \ox\tau_{B_o}) ^{\dagger},
\end{split}\end{equation}
where $\cP_{A_iB_o}$ is a twirling operation on $A_iB_o$.

Noticing that $\cP_{A_iB_o}(D_{A_iB_o})=D_{A_iB_o}$, we have 
\begin{equation}\begin{split}
\tr_{A_iB_o} & (Z_{A_iB_iA_oB_o}(\1_{A_oB_i}\otimes D_{A_iB_o}))  \\
=& \tr_{A_iB_o} (Z_{A_iB_iA_oB_o}(\1_{A_oB_i}\otimes \cP_{A_iB_o}(D_{A_iB_o})) \\
= &\tr_{A_iB_o} ( \widetilde Z_{A_iA_oB_iB_o}  (\1_{A_oB_i}\otimes D_{A_iB_o})).
\end{split}\end{equation}
Thus, it is easy to see that the optimal success probability equals to
\begin{align*}
 f_{\text{NS}\cap\text{PPT}} & (\cN,m) \\
 =\max &\ \tr (J_{\cN}^T\ox \1_{A_i B_o})\widetilde Z_{A_iA_oB_iB_o}(\1_{A_oB_i}\ox D_{A_i B_o})/m \\
  \operatorname{s.t.} &\  \widetilde Z_{A_iA_oB_iB_o} \text{ satisfies Eq. } (\ref{code constraints}).
  \end{align*}
  
It is worth noting that $\widetilde Z_{A_iA_oB_iB_o}$ can be rewritten as \cite{Duan2016}
$$\widetilde Z_{A_iA_oB_iB_o}=F_{A_oB_i}\otimes D_{A_iB_o} +E_{A_oB_i}\otimes (\1-D_{A_iB_o}),$$
for some operators $E_{A_oB_i}$ and $F_{A_oB_i}$.
Thus, the objective function can be simplified to
$\tr J_{\cN}^TF$.
Also, the  CP and PPT constraints are equivalent to \begin{equation}\label{F PSD}
E_{A_oB_i}\ge 0,F_{A_oB_i}\ge 0, E_{A_oB_i}^{T_{B_i}}\ge 0,F_{A_oB_i}^{T_{B_i}}\ge 0.
\end{equation}
Furthermore,
the $B \not\rightarrow A$ constraint is equivalent to $\tr_{B_o}\widetilde Z_{A_iA_oB_iB_o}=\tr_{B_oB_i}\widetilde Z_{A_iA_oB_iB_o}\otimes {\1_{B_i}}/{d_{B_i}}$, i.e.
\begin{equation}\label{E F}
\begin{split}
& F _{A_oB_i}+(m-1)E_{A_oB_i} \\
& = \tr_{B_i}(F_{A_oB_i}+(m-1)E_{A_oB_i})\otimes\frac{\1_{B_i}}{d_{B_i}}=:\rho_{A_o}\otimes\1_{B_i}.
\end{split}\end{equation}
and the TP constraint holds if and only if $\tr_{A_oB_o}Z_{A_iA_oB_iB_o}=\1_{A_iB_i}$, i.e., 
\begin{equation}
\tr_{A_o}(F_{A_oB_i}+(m-1)E_{A_oB_i})=\1_{B_i},
\end{equation}
 which is equivalent to 
 \begin{equation}\label{rho state}
 \tr\rho_{A_o}=\tr(F_{A_oB_i}+(m-1)E_{A_oB_i})/d_{B_i}=\tr \1_{B_i}/d_{B_i}=1.
 \end{equation}

As $\Pi$ is no-signalling from A to B, we have $\tr_{A_o}\widetilde Z_{A_iA_oB_iB_o}=\tr_{A_oA_i}\widetilde Z_{A_iA_oB_iB_o}\otimes \frac{\1_{A_i}}{m}$, i.e.,
\begin{equation}\begin{split}
&\tr_{A_o}F_{A_oB_i}\otimes D_{A_iB_o}+\tr_{A_o}E_{A_oB_i}\otimes (\1-D_{A_iB_o}) \\
=&\tr_{A_o}(F_{A_oB_i}+(m-1)E_{A_oB_i})\otimes\frac{\1_{A_iB_o}}{m}=\1_{A_iB_iB_o}/m.
\end{split}\end{equation}
Since $D_{A_iB_o}$ and $\1-D_{A_iB_o}$ are orthogonal positive operators, 
we have 
\begin{equation}\label{F NS}
\tr_{A_o}F_{A_oB_i}=\tr_{A_o}E_{A_oB_i}=\1_{B_i}/m.
\end{equation} 

Finally, combining Eq. (\ref{F PSD}),  (\ref{E F}), (\ref{rho state}), (\ref{F NS}), we have that
\begin{equation}\begin{split}
 f_{\text{NS}\cap\text{PPT}} (\cN, m) = \max&\ \tr J_{\cN}F_{A_oB_i} \\
  \rm{ s.t. }  &\  0\le F_{A_oB_i}\le \rho_{A_o}\otimes \1_{B_i},\\
  &\ \tr\rho_{A_o}=1,\\
 &\ \tr_{A_o}F_{A_oB_i}=\1_{B_i}/m,\\
&\ 0\le F_{A_oB_i}^{T_{B_i} }\le \rho_{A_o}\otimes \1_{B_i} .
\end{split}\end{equation}
This gives the SDP in Theorem \ref{theoremFidelity}, where
we assume that $A_o=A$ and $B_i=B$ for simplification.
\end{proof}
\textbf{Remark:}
The dual SDP for $f_{\text{NS}\cap\text{PPT}}(\cN, m)$ is given by
\begin{equation}\label{dual SDP f} \begin{split}
 f_{\text{NS}\cap\text{PPT}} (\cN, m) = \min & \ t+\tr S_B/m  \\
 \text{s.t.}  &  J_{\cN}\le X_{AB}+\1_A\ox S_B+(W_{AB}-Y_{AB})^{T_B}, \\
 &\tr_B(X_{AB}+W_{AB})\le t\1_A,\\
 & X_{AB},Y_{AB},W_{AB}\ge0, S_B=S_B^\dagger.\\
\end{split}\end{equation}
To remove the PPT constraint, set $Y_{AB}=W_{AB}=0$.
It is worth noting that the strong duality holds here  since the Slater's condition can be easily checked. Indeed, choosing $X_{AB}=Y_{AB}=W_{AB}=\|J_{\cN}\|_{\infty}\1_{AB}$, $S_B=\1_B$ and $t=3 d_B \|J_{\cN}\|_{\infty}$ in SDP (\ref{dual SDP f}), we have $(X_{AB},Y_{AB},W_{AB},S_B,t)$ is in the relative interior of the feasible region. 

It is worthing noting that $f_{\text{NS}} (\cN, m)$ can be obtained by removing the PPT constraint and it corresponds with the optimal NS-assisted channel fidelity in \cite{Leung2015c}.


\subsection{Improved SDP converse bounds in finite blocklength}
For  given $0\le \e <1$, the {\em one-shot $\e$-error classical capacity assisted by $\O$-class codes} is defined as
\begin{equation}\label{error capacity}
C_\O^{(1)}(\cN,\e):=\sup\{\log\lambda: 1-f_\O(\cN,\lambda)\le \e\}.
\end{equation}

We now derive the one-shot $\e$-error classical capacity assisted by NS or NS$\cap$PPT codes as follows.

\begin{theorem}\label{converse PPT}
For given channel $\cN$ and error threshold $\e$, the one-shot $\e$-error \rm{NS$\cap$PPT}-assisted and \rm{NS}-assisted  capacities are given by
\begin{equation}\label{C1 NSPPT}
\begin{split}
C_{\text{\rm{NS}}\cap\text{\rm{PPT}}}^{(1)} (\cN,\e)= -\log \min &\  \eta \\ 
\operatorname{s.t.}  &\  0\le F_{AB}\le \rho_A\otimes \1_B, \\
&\ \tr\rho_A=1, \tr_{A}F_{AB}=\eta\1_B, \\
&\ \tr J_{\cN}F_{AB}\ge1-\e, \\
&\  0\le F_{AB}^{T_B}\le \rho_A\otimes \1_B \ (\rm{PPT}), \\
\end{split}\end{equation}
and
\begin{equation}\label{C1 NS}
\begin{split}
C_{\rm{NS}}^{(1)} (\cN,\e) =-\log \min &\ \eta  \\ \rm{s.t.} &\  0\le F_{AB}\le \rho_A\otimes \1_B,\tr\rho_A=1, \\
&\tr_{A}F_{AB}=\eta\1_B, 
\tr J_{\cN}F_{AB}\ge1-\e,\\
\end{split}\end{equation}
respectively.
\end{theorem}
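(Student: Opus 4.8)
The statement is, in effect, a reformulation of Theorem~\ref{theoremFidelity}, and the plan is to obtain it by ``inverting'' the success-probability SDP: rather than fixing the number of messages and maximising the success probability $\tr J_{\cN}F_{AB}$, one fixes the required success-probability level $1-\e$ and minimises $\eta=1/m$. To make this precise I would introduce, for $\eta\in(0,1]$, the quantity $g(\eta)$ defined as the optimal value of the program~(\ref{SDP f}) with the constraint $\tr_{A}F_{AB}=\1_B/m$ replaced by $\tr_{A}F_{AB}=\eta\1_B$. Theorem~\ref{theoremFidelity} says precisely that $f_{\text{NS}\cap\text{PPT}}(\cN,m)=g(1/m)$ for positive integers $m$; following the convention of the finite-blocklength literature---and consistently with the appearance of $f_{\text{NS}\cap\text{PPT}}(\cN^{\ox n},2^{rn})$ in the capacity definition---one extends this to real $\lambda\ge1$ by setting $f_{\text{NS}\cap\text{PPT}}(\cN,\lambda):=g(1/\lambda)$. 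With this convention the condition $1-f_{\text{NS}\cap\text{PPT}}(\cN,\lambda)\le\e$ in~(\ref{error capacity}) reads $g(1/\lambda)\ge1-\e$, so
\[
C^{(1)}_{\text{NS}\cap\text{PPT}}(\cN,\e)=\sup\{\log\lambda:\ \lambda\ge1,\ g(1/\lambda)\ge1-\e\}=-\log\inf\{\eta\in(0,1]:g(\eta)\ge1-\e\},
\]
the last equality by the substitution $\eta=1/\lambda$ together with the fact that $-\log$ is decreasing.

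It then remains to identify that infimum with the optimal value of the SDP~(\ref{C1 NSPPT}) and to check it is attained. For a fixed $\eta$, the inequality $g(\eta)\ge1-\e$ is, by definition of $g$, equivalent to the existence of operators $(F_{AB},\rho_A)$ satisfying the positivity, normalisation and partial-trace constraints of~(\ref{SDP f})---with $\1_B/m$ read as $\eta\1_B$---together with the extra inequality $\tr J_{\cN}F_{AB}\ge1-\e$; minimising $\eta$ over all such data is exactly the program~(\ref{C1 NSPPT}). Its feasible set is compact---$0\le F_{AB}\le\rho_A\otimes\1_B$ with $\tr\rho_A=1$ bounds $F_{AB}$ and forces $\eta=\tr F_{AB}/d_B\in[0,1]$, and all constraints, including $\tr J_{\cN}F_{AB}\ge1-\e$, are closed---and nonempty, since $\eta=1$, $F_{AB}=\rho_A\otimes\1_B$ (for any state $\rho_A$) is feasible as $\tr J_{\cN}(\rho_A\otimes\1_B)=\tr\rho_A=1\ge1-\e$, using $\tr_B J_{\cN}=\1_A$. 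Hence the infimum is a minimum and equals the value of~(\ref{C1 NSPPT}), giving the claimed identity for $C^{(1)}_{\text{NS}\cap\text{PPT}}$. The NS identity~(\ref{C1 NS}) follows by the same argument with the two partial-transpose constraints removed everywhere.

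There is no deep obstacle here; the two points deserving care are (i) the (standard) convention that the message number is treated as a continuous parameter, with $f_{\text{NS}\cap\text{PPT}}(\cN,\cdot)$ extended through the SDP of Theorem~\ref{theoremFidelity}, which is exactly what makes~(\ref{C1 NSPPT}) an exact equality rather than one involving a floor function, and (ii) the attainment of the minimum in~(\ref{C1 NSPPT}). If one prefers to avoid invoking compactness for (ii), one can instead note that $g$ is concave on $(0,1]$---being the partial maximisation of a linear objective over a feasible set that is convex and depends affinely on $\eta$---so that its superlevel set $\{\eta:g(\eta)\ge1-\e\}$ is a subinterval of $(0,1]$ which, because $\e<1$, has a strictly positive left endpoint that it contains.
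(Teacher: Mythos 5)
Your proposal is correct and follows essentially the same route as the paper: the paper also establishes equality by pairing up feasible points of the two SDPs (an optimal $\{\rho,F,\eta\}$ for the claimed program gives a feasible point of $f_{\text{NS}\cap\text{PPT}}(\cN,\eta^{-1})$ with value $\ge1-\e$, and conversely an optimal $\{\rho,F\}$ for $f_{\text{NS}\cap\text{PPT}}(\cN,\lambda)$ at the supremum rate gives a feasible $\{\rho,F,\lambda^{-1}\}$ for the claimed program), which is exactly what your identification of the SDP with $\inf\{\eta:g(\eta)\ge1-\e\}$ amounts to. Your treatment is somewhat more scrupulous than the paper's in that it explicitly records the continuous-parameter convention for the message count and verifies that the infimum is attained (via compactness, or via concavity of $g$), points the paper passes over tacitly.
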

\begin{proof}
When assisted by NS$\cap$PPT codes, by Eq. (\ref{error capacity}), we have that
\begin{equation}\label{R error}
\begin{split}
C_{\text{NS}\cap\text{PPT}}^{(1)}(\cN,\e)=\log \max \lambda \ \rm{ s.t. } f_{\text{NS}\cap\text{PPT}}(\cN,\lambda)\ge 1-\e.
\end{split}\end{equation}
To simplify Eq. (\ref{R error}), we suppose that
\begin{equation}\label{tilde R}
\begin{split}
\U(\cN,\e)= -\log \min &\ \eta \\
\operatorname{s.t.}  &\  0\le F_{AB}\le \rho_A\otimes \1_B, \\
&\ \tr\rho_A=1, \tr_{A}F_{AB}=\eta\1_B, \\
 &\ \tr J_{\cN}F_{AB}\ge 1-\e, \\
&\ 0\le F_{AB}^{T_B}\le \rho_A\otimes \1_B \ (\rm{PPT}).\\
\end{split}
\end{equation}

On one hand, for given $\e$, suppose that the optimal solution to the SDP (\ref{tilde R}) of $\U(\cN,\e)$ is $\{\rho,F,\eta\}$. Then, it is clear that $\{\rho,F\}$ is a feasible solution of the SDP (\ref{SDP f}) of
$f_{\text{NS}\cap\text{PPT}}(\cN,\eta^{-1})$, which means that $f_{\text{NS}\cap\text{PPT}}(\cN,\eta^{-1})\ge \tr J_\cN F\ge 1-\e$. Therefore, 
\begin{equation}\label{R 1}
C^{(1)}_{\text{NS}\cap\text{PPT}}(\cN,\e)\ge \log \eta^{-1}=\U(\cN,\e).
\end{equation}

On the other hand, for given $\e$, suppose that the value of $C^{(1)}_{\text{NS}\cap\text{PPT}}(\cN,\e)$ is $\log\lambda$ and the optimal solution of $f_{\text{NS}\cap\text{PPT}}(\cN,\lambda)$ is $\{\rho,F\}$. It is  easy to check that $\{\rho,F,\lambda^{-1}\}$ satisfies the constrains in SDP (\ref{tilde R}) of $\U(\cN,\e)$. Therefore, 
\begin{equation}\label{R 2}
\U(\cN,\e)\ge -\log \lambda^{-1}= C^{(1)}_{\text{NS}\cap\text{PPT}}(\cN,\e).\end{equation}

Hence, combining Eqs. (\ref{tilde R}), (\ref{R 1}) and (\ref{R 2}), it is clear that
\begin{equation}\begin{split}
C_{\text{\rm{NS}}\cap\text{\rm{PPT}}}^{(1)}(\cN,\e)=\U(\cN,\e)\\
= -\log \min &\ \eta \\  \rm{ s.t. }  &\  0\le F_{AB}\le \rho_A\otimes \1_B, \\ 
&\ \tr\rho_A=1, \tr_{A}F_{AB}=\eta\1_B, \\
&\ \tr J_{\cN}F_{AB}\ge1-\e, \\
&\ 0\le F_{AB}^{T_B}\le \rho_A\otimes \1_B \ (\rm{PPT}).\\
\end{split}\end{equation}
And one can obtain $C_{\text{\rm{NS}}}^{(1)}(\cN,\e)$ by removing the  PPT constraint.
\end{proof}

Noticing that no-signalling-assisted codes are potentially stronger than the entanglement-assisted codes, $C_{\text{\rm{NS}}}^{(1)}(\cN,\e)$ and 
$C_{\text{\rm{NS}}\cap\text{\rm{PPT}}}^{(1)}(\cN,\e)$ provide converse bounds of classical communication for entanglement-assisted and unassisted codes, respectively.
\begin{corollary}
For a given channel $\cN$ and error threshold $\e$, 
\begin{align*}
C^{(1)}_{\rm{E}}(\cN,\e)&\le C^{(1)}_{\rm{NS}}(\cN,\e),\\
C^{(1)}(\cN,\e)&\le C^{(1)}_{\rm{NS}\cap {PPT}}(\cN,\e).
\end{align*}
\end{corollary}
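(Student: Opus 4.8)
The plan is to deduce both inequalities from a single monotonicity principle: enlarging the class of codes can only increase the optimal success probability, hence only increase the one-shot $\e$-error capacity. Concretely, by Eq.~(\ref{error capacity}) we have $C^{(1)}_\O(\cN,\e)=\sup\{\log\lambda:f_\O(\cN,\lambda)\ge 1-\e\}$, so if one code class $\O$ is contained in another class $\O'$, then every $\O$-code is an $\O'$-code, giving $f_\O(\cN,m)\le f_{\O'}(\cN,m)$ for all $m$; the feasible set $\{\lambda:f_\O(\cN,\lambda)\ge1-\e\}$ is therefore contained in $\{\lambda:f_{\O'}(\cN,\lambda)\ge1-\e\}$, and taking suprema yields $C^{(1)}_\O(\cN,\e)\le C^{(1)}_{\O'}(\cN,\e)$. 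Thus it suffices to show the class inclusions $\mathrm{EA}\subseteq\mathrm{NS}$ and $\mathrm{UA}\subseteq\mathrm{NS}\cap\mathrm{PPT}$ among bipartite operations.

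For $\mathrm{EA}\subseteq\mathrm{NS}$, I would take an arbitrary entanglement-assisted code $\Pi(A_iB_i\to A_oB_o)=\cD(B_i\hat B\to B_o)\circ\cE(A_i\hat A\to A_o)(\,\cdot\,\otimes\varphi_{\hat A\hat B})$ with $\cE,\cD$ CPTP, and verify the constraints of Eq.~(\ref{code constraints}) line by line. The CP and TP lines are immediate since $\Pi$ is a composition of CPTP maps with a fixed-state preparation. For the $A\not\to B$ line, use that tracing out Alice's output together with trace-preservation collapses $\cE$, i.e. $\tr_{A_o}\circ\,\cE(A_i\hat A\to A_o)=\tr_{A_i\hat A}$, so that $\tr_{A_o}\circ\Pi$ sends $X_{A_iB_i}$ to $\cD(B_i\hat B\to B_o)(\tr_{A_i}X_{A_iB_i}\otimes\varphi_{\hat B})$, which depends on $X$ only through $\tr_{A_i}X$; in Choi language this is exactly $Z_{A_iB_iB_o}=(\1_{A_i}/d_{A_i})\otimes Z_{B_iB_o}$. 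The $B\not\to A$ line follows by the symmetric argument applied to $\cD$. Hence $\Pi$ is an NS code, so $f_{\mathrm E}(\cN,m)\le f_{\mathrm{NS}}(\cN,m)$ and the first inequality follows from the monotonicity principle.

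For $\mathrm{UA}\subseteq\mathrm{NS}\cap\mathrm{PPT}$, an unassisted code has trivial register $C$, hence is a product operation $\Pi=\cE(A_i\to A_o)\otimes\cD(B_i\to B_o)$ with Choi matrix $Z=J_{\cE}^{A_iA_o}\otimes J_{\cD}^{B_iB_o}$. It is trivially no-signalling in both directions, and it is PPT because the partial transpose over Bob's systems is the full transpose of $J_{\cD}$, namely $Z^{T_{B_iB_o}}=J_{\cE}\otimes (J_{\cD})^{T}$, and $J_{\cE},(J_{\cD})^{T}\ge0$ give $Z^{T_{B_iB_o}}\ge0$. Thus every unassisted code lies in $\mathrm{NS}\cap\mathrm{PPT}$, so $f_{\mathrm{UA}}(\cN,m)\le f_{\mathrm{NS}\cap\mathrm{PPT}}(\cN,m)$ and therefore $C^{(1)}(\cN,\e)\le C^{(1)}_{\mathrm{NS}\cap\mathrm{PPT}}(\cN,\e)$. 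The only step requiring a short computation is the Choi-matrix reformulation of the no-signalling condition for EA codes; everything else is immediate from the definitions, so I do not anticipate a genuine obstacle here.
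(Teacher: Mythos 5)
Your proposal is correct and follows the same route the paper takes (the paper supplies only the one-line observation that NS codes are potentially stronger than entanglement-assisted codes, and that NS$\cap$PPT codes are at least as strong as unassisted codes). You fill in the routine verifications of the class inclusions $\mathrm{EA}\subseteq\mathrm{NS}$ and $\mathrm{UA}\subseteq\mathrm{NS}\cap\mathrm{PPT}$, which the paper treats as self-evident; both your monotonicity reduction and your inclusion checks are sound.
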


We further compare our one-shot $\e$-error capacities with the previous SDP converse bounds derived by the quantum hypothesis testing technique in   \cite{Matthews2014}.
To be specific, for a given channel $\cN(A\to B)$ and error thresold $\e$,   Matthews and Wehner   \cite{Matthews2014}  establish that
\begin{equation}\label{R E}
\begin{split}
C^{(1)}_{\rm{E}}& (\cN,\e) \le R_{\rm{E}}(\cN,\e)\\
&=\max_{\rho_A}\min_{\sigma_B}D_H^\e((id_{A'}\ox\cN)(\rho_{A'A})||\rho_{A'}\ox \sigma_B)\\
&=-\log \min \eta   \\
&\phantom{==-\log}  \rm{ s.t. }    0\le F_{AB}\le \rho_A\otimes \1_B,\tr\rho_A=1,\\
&\phantom{ ==-\log \min } \tr_{A}F_{AB}\le \eta\1_B, \tr J_{\cN}F_{AB}\ge 1-\e,
\end{split}
\end{equation}
and 
\begin{equation}
\begin{split}
C^{(1)}  (\cN,\e)\le R & _{\rm{E\cap PPT}} (\cN,\e)\\
=\max_{\rho_A}\min_{\sigma_B} & D_{H,PPT}^\e((id_{A'}\ox\cN)(\rho_{A'A})||\rho_{A'}\ox \sigma_B)\\
=-\log \min &\ \eta  \\
\rm{ s.t. } &\   0\le F_{AB}\le \rho_A\otimes \1_B,\tr\rho_A=1, \\
&\     \tr_{A} F_{AB}\le \eta\1_B, \tr J_{\cN}F_{AB}\ge 1-\e,\\
 &\     0\le F_{AB}^{T_B}\le \rho_A\otimes \1_B,
\end{split}
\end{equation}
where $\rho_{A'A}=(\1_{A'}\ox\rho_A^{\frac{1}{2}})\Phi_{A'A}(\1_{A'}\ox\rho_A^{\frac{1}{2}})$ is a purification of $\rho_A$ and $\rho_{A'}=\tr_A\rho_{A'A}$.
Moreover, 
\begin{equation}
\begin{split}
D_H^\e(\rho_0||\rho_1)=-\log\min&\ \tr T\rho_1 \\
\rm{ s.t. }&\ 1-\tr T\rho_0\le\e, 0\le T \le \1
\end{split}
\end{equation}
 is the hypothesis testing relative entropy \cite{Wang2012,Matthews2014} and $D_{H,PPT}^\e(\rho_0||\rho_1)$ is the similar quantity with a PPT constraint on the POVM.

Interestingly, our one-shot $\e$-error capacities are similar to these quantum hypothesis testing relative entropy converse bounds.
However, there is a crucial difference that our quantities require that a stricter condition, i.e., $\tr_{A}F_{AB}= \eta\1_B$. This makes
one-shot $\e$-error capacities ($C_{\text{\rm{NS}}\cap\text{\rm{PPT}}}^{(1)}(\cN,\e)$ and $C_{\text{\rm{NS}}}^{(1)}(\cN,\e)$) always smaller than or equal to 
the SDP converse bounds in \cite{Matthews2014}, and the inequalities can be strict.
\begin{proposition}\label{R NS R E}
For a given channel $\cN(A\to B)$ and error threshold $\e$,
\begin{align*}
C^{(1)}_{\rm{NS}} & (\cN,\e)\le R_{\rm{E}}(\cN,\e) \\
=& \max_{\rho_A}\min_{\sigma_B}D_H^\e((id_{A'}\ox\cN)(\rho_{A'A})||\rho_{A'}\ox \sigma_B),\\
 C^{(1)}_{\rm{NS}\cap {PPT}} & (\cN,\e) \le R_{\rm{E\cap PPT}}(\cN,\e) \\
=& \max_{\rho_A}\min_{\sigma_B}D_{H,PPT}^\e((id_{A'}\ox\cN)(\rho_{A'A})||\rho_{A'}\ox \sigma_B).
\end{align*}
In particular, both inequalities can be strict for some quantum channels such as the amplitude damping channels and  the simplest classical-quantum channels.
\end{proposition}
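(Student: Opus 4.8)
The plan is to establish the inequality part first and then the strictness part by explicit examples. For the inequality $C^{(1)}_{\rm{NS}}(\cN,\e)\le R_{\rm{E}}(\cN,\e)$, I would simply observe that the two SDPs in Eq.~(\ref{C1 NS}) and Eq.~(\ref{R E}) are nearly identical, with the sole distinction that $C^{(1)}_{\rm{NS}}(\cN,\e)$ carries the equality constraint $\tr_A F_{AB}=\eta\1_B$ while $R_{\rm{E}}(\cN,\e)$ carries only the inequality $\tr_A F_{AB}\le\eta\1_B$. Since any feasible point of the former is a feasible point of the latter with the same objective value $\eta$, the minimum defining $R_{\rm{E}}$ is taken over a larger set and hence is no larger; applying $-\log$ (which is order-reversing) gives $C^{(1)}_{\rm{NS}}(\cN,\e)\le R_{\rm{E}}(\cN,\e)$. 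The same argument, carried out verbatim with the additional PPT constraints $0\le F_{AB}^{T_B}\le\rho_A\ox\1_B$ present in both programs, yields $C^{(1)}_{\rm{NS}\cap{PPT}}(\cN,\e)\le R_{\rm{E\cap PPT}}(\cN,\e)$. The identification of $R_{\rm{E}}$ with the max-min hypothesis-testing quantity is already recorded in Eq.~(\ref{R E}), so nothing further is needed there.

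For the strictness claims I would exhibit two concrete channels. For the classical-quantum case, I would take one of the simplest binary classical-quantum channels — for instance the channel sending the two classical inputs to two fixed pure states $\ket{0}$ and $\cos\theta\ket{0}+\sin\theta\ket{1}$ on a qubit — write down $J_\cN$ explicitly, fix a small target error $\e$ (e.g.\ $\e$ near the value where $m=2$ transmission becomes feasible), and compute both $C^{(1)}_{\rm{NS}}(\cN,\e)$ and $R_{\rm{E}}(\cN,\e)$, either in closed form by solving the small SDPs by hand (the feasible region is low-dimensional once symmetry is used) or numerically via the CVX/QETLAB setup already mentioned in the preliminaries. The point is that in $R_{\rm{E}}$ one can slacken $\tr_A F_{AB}$ to be strictly below $\eta\1_B$ on part of its support, effectively ``wasting'' less of the budget $\eta$, and so achieve a strictly smaller $\eta$ than is possible under the tight equality constraint. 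For the amplitude damping channel $\cA_\gamma$ with a suitable parameter $\gamma\in(0,1)$ and a suitable $\e$, I would do the same: plug $J_{\cA_\gamma}$ into both SDPs and display numerical values showing $C^{(1)}_{\rm{NS}}(\cA_\gamma,\e)<R_{\rm{E}}(\cA_\gamma,\e)$ (and likewise for the PPT-constrained pair), which also connects to the improved amplitude-damping bounds promised in the summary.

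The main obstacle is not the inequality — that is essentially immediate from comparing the two feasibility sets — but rather producing a clean, verifiable witness of strictness. One has to choose the channel and the error parameter $\e$ so that the gap is genuinely nonzero and not an artifact of numerical precision; ideally I would pin down at least the classical-quantum example analytically, using the permutation symmetry already exploited in the proof of Theorem~\ref{theoremFidelity} to reduce $F_{AB}$ to a two- or three-parameter family, so that both optimal $\eta$'s can be written down exactly and compared. For the amplitude damping channel an exact solution may be more delicate, so there I would be content with a numerically computed pair of values together with the structural explanation (the equality constraint $\tr_A F_{AB}=\eta\1_B$ is strictly more restrictive than $\tr_A F_{AB}\le\eta\1_B$) for why the separation must occur. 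A secondary point to be careful about is the degenerate regime: for $\e$ large enough that even one message is sent with $\eta=1$ trivially, or for $\e$ so small that the constraint $\tr J_\cN F_{AB}\ge 1-\e$ forces both programs to the same corner, the inequality collapses to equality, so the witnessing $\e$ must be chosen in the intermediate range where the slack in $\tr_A F_{AB}$ actually buys something.
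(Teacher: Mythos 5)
Your proof is correct and follows essentially the same route as the paper: the inequality is obtained by observing that every feasible point of the SDP for $C^{(1)}_{\rm NS}$ (resp.\ $C^{(1)}_{\rm NS\cap PPT}$) is also feasible for $R_{\rm E}$ (resp.\ $R_{\rm E\cap PPT}$) because the equality constraint $\tr_A F_{AB}=\eta\1_B$ is strictly stronger than the inequality $\tr_A F_{AB}\le\eta\1_B$, and strictness is witnessed numerically for the amplitude damping channel and a simple binary classical-quantum channel, exactly the two families the paper uses in Figs.~\ref{ad1}--\ref{cq1}. The only cosmetic difference is your parametrization of the classical-quantum example (a $\{\ket 0,\cos\theta\ket 0+\sin\theta\ket 1\}$ pair versus the paper's symmetric $a\ket 0\pm b\ket 1$), which is unitarily equivalent and changes nothing.
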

\begin{proof}
This can be proved by the fact that any feasible solution of the SDP (\ref{C1 NS}) of 
$C^{(1)}_{\text{NS}}(\cN,\e)$ (or $C_{\text{NS}\cap\text{PPT}}^{(1)}(\cN,\e)$) is also feasible to the SDP (\ref{R E}) of $R_{\rm{E}}(\cN,\e)$ (or $R_{\rm{E\cap PPT}}(\cN,\e)$). 

We further show that the inequality can be strict by the example of qubit amplitude damping channel  $\cN_{\gamma}^{AD}=\sum_{i=0}^1 E_i\cdot E_i^\dag$ $(0\le \gamma\leq 1)$,
with $E_0=\ketbra{0}{0}+\sqrt{1-\gamma}\ketbra{1}{1}$ and $E_1=\sqrt{\gamma}\ketbra{0}{1}$. We compare the above bounds in FIG. \ref{ad1} and FIG. \ref{ad2}. It is clear that our bounds can be strictly better than the quantum hypothesis testing bounds in \cite{Matthews2014} in this case.
\begin{figure}[htbp]
\vspace{-0.15cm}
\centering
\begin{minipage}{.48\textwidth}
 \centering{\includegraphics[width=5.8cm]{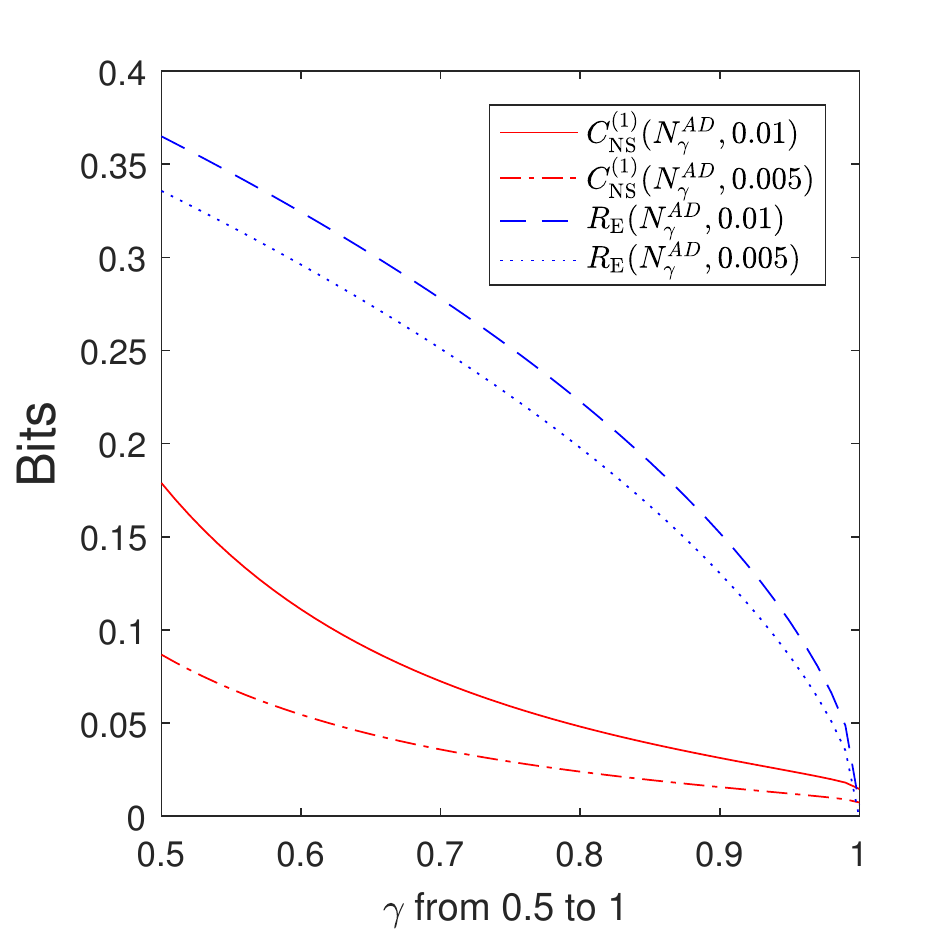}}
  \caption{ The red solid and dash-dot lines depict $C_{\rm{NS}}^{(1)}(\cN_{\gamma}^{AD},0.01)$  and $C_{\rm{NS}}^{(1)}(\cN_{\gamma}^{AD},0.005)$, respectively.
 The blue dashed and dotted lines depict  $R_{\rm{E}}(\cN_{\gamma}^{AD},0.01)$ and  $R_{\rm{E}}(\cN_{\gamma}^{AD},0.005)$.}
\label{ad1}
\end{minipage}
\vspace{0.3cm}
\begin{minipage}{.48\textwidth}
 \centering{\includegraphics[width=5.6cm]{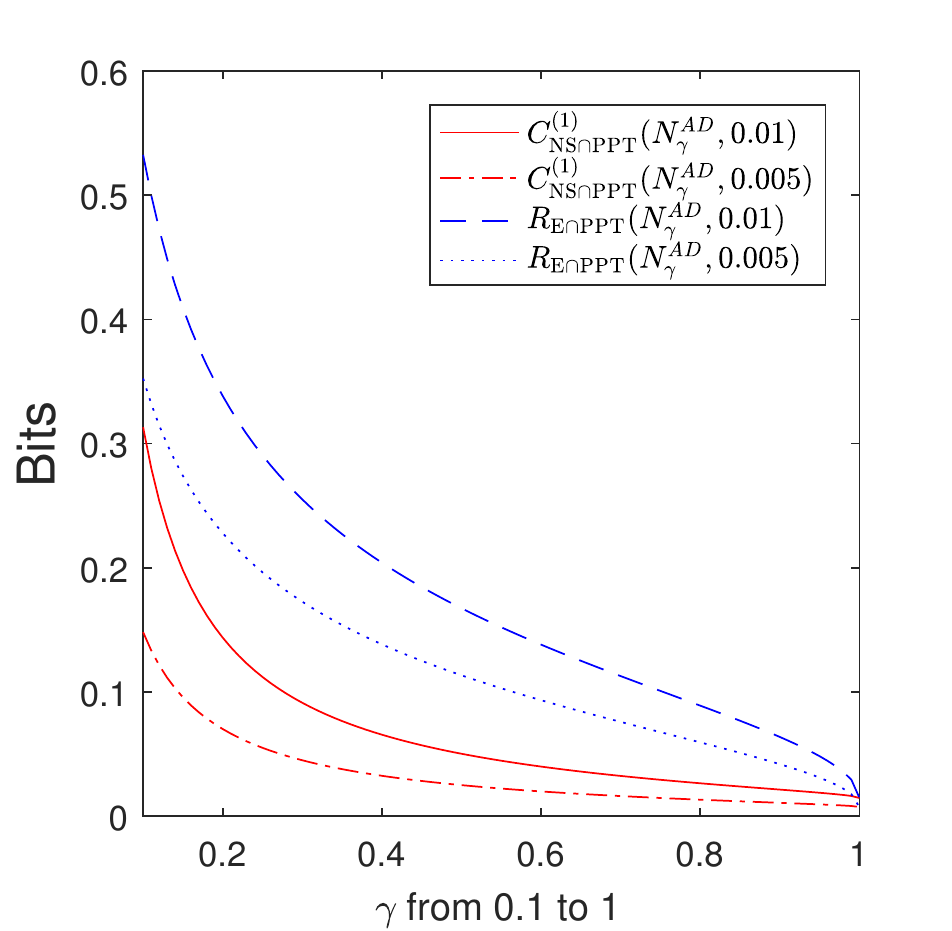}}
  \caption{ The red solid and dash-dot lines depict $C_{\rm{NS\cap PPT}}^{(1)}(\cN_{\gamma}^{AD},0.01)$  and $C_{\rm{NS\cap PPT}}^{(1)}(\cN_{\gamma}^{AD},0.005)$, respectively.
 The blue dashed and dotted lines depict  $R_{\rm{E\cap PPT}}(\cN_{\gamma}^{AD},0.01)$ and  $R_{\rm{E\cap PPT}}(\cN_{\gamma}^{AD},0.005)$, respectively. }
\label{ad2}
\end{minipage}
\vspace{-0.5cm}
\end{figure}

Another example is the simplest classical-quantum channel $\cN_{a}^{cq}$ which has only two inputs and two pure output states $ \proj{\psi_i}$,
w.l.o.g.
\begin{align*}
  \ket{\psi_0} = a\ket{0} + b\ket{1},  \\
  \ket{\psi_1} = a\ket{0} - b\ket{1},
\end{align*}
with $a \geq b = \sqrt{1-a^2}$.
The comparison is presented in FIG. \ref{cq1} and it is clear that our bound can be strictly tighter for this class of classical-quantum channels.

\end{proof}
\begin{figure}[htbp]
 \centering{\includegraphics[width=6.0cm]{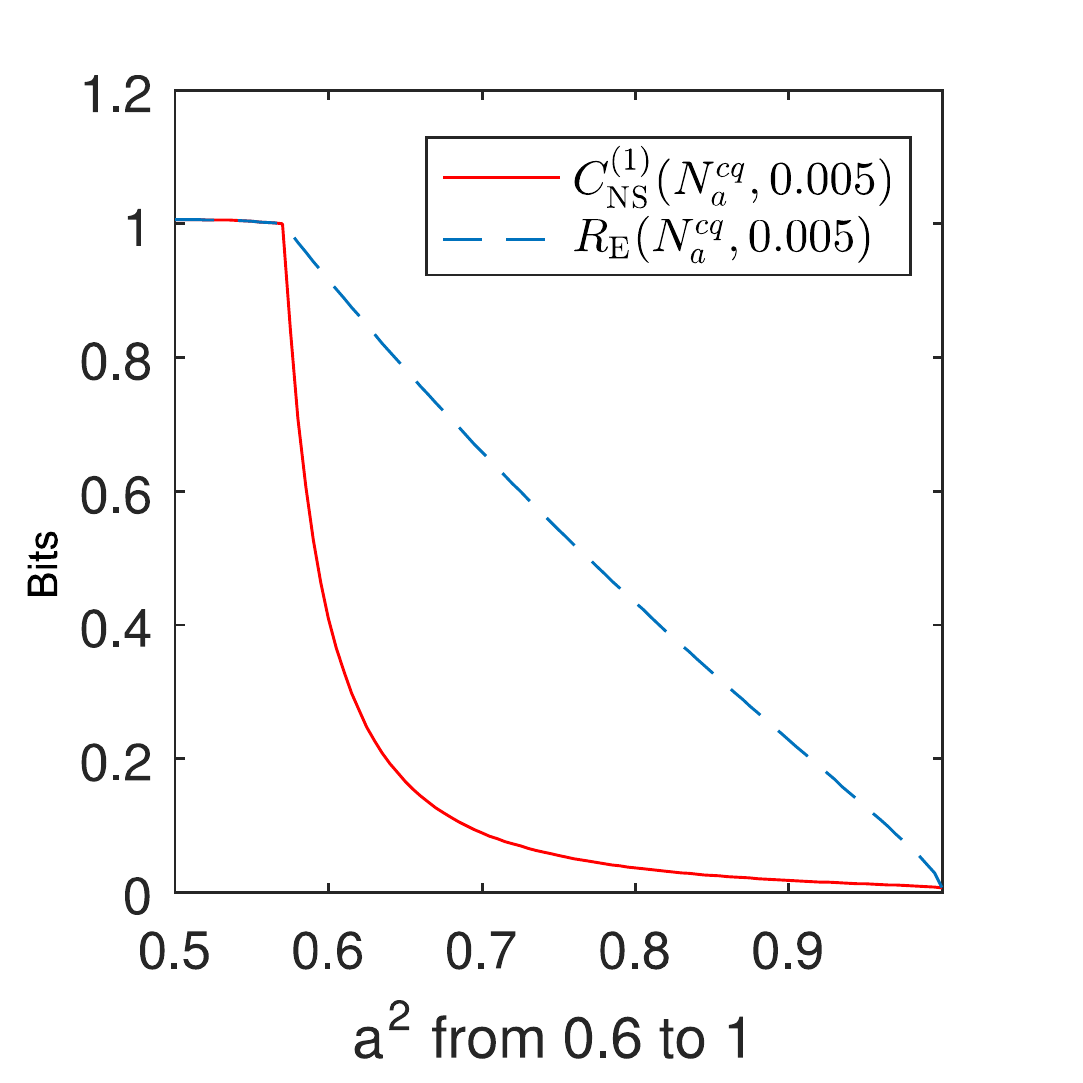}}
  \caption{When $\e=0.005$,  $C_{\rm{NS}}^{(1)}(\cN_{a}^{cq},\e)$ (red solid line) can be strictly smaller than the previous SDP bound  $R_{\rm{E}}(\cN_{a}^{cq},\e)$ (blue dashed line). Note that $C_{\rm{NS\cap PPT}}^{(1)}(\cN_{a}^{cq},\e)=C_{\rm{NS}}^{(1)}(\cN_{a}^{cq},\e)$ and  $R_{\rm{E\cap PPT}}(\cN_{a}^{cq},\e)=R_{\rm{E}}(\cN_{a}^{cq},\e)$ in this case.}
\label{cq1}
\end{figure}

We then consider the asymptotic performance of the one-shot NS-assisted $\e$-error capacity.
Interestingly, in common with the bound $R_{\rm E}(\cN,\e)$ \cite{Matthews2014} and the bound of Datta and Hsieh \cite{Datta2013c}, the asymptotic behaviour of $R_{\text{NS}}(\cN,\e)$ also recovers the converse part of the formula for entanglement-assisted capacity \cite{Bennett1999} and it implies that $C_{\rm{NS}}(\cN)=C_{\rm E}(\cN)$. (See Corollary \ref{C E C NS}.) In \cite{Leung2015c}, Leung and  Matthews  have already shown that the entanglement-assisted quantum capacity of a quantum channel is equal to the NS-assisted quantum capacity. 
It is worth noting  that our result is equivalent to their result due to superdense coding \cite{Bennett1992} and teleportation \cite{Bennett1993}.
\begin{corollary}\label{C E C NS}
For any quantum channel $\cN (A\to B)$,
\begin{equation*}
\lim_{\e\to0}\lim_{n\to \infty}\frac{1}{n} C_{\text{\rm{NS}}}^{(1)}(\cN^{\ox n},\e)\le \max_{\rho_A} I(\rho_A;\cN),
\end{equation*}
where $I(\rho_A;\cN):=H(\rho_A)+H(\cN(\rho_A))-H((\rm{id}\ox\cN)\phi_{\rho_A})$, and $\phi_{\rho_A}$ is a purification of $\rho_A$.
As a consequence, 
\begin{equation*}
C_{\rm{NS}}(\cN)=C_{\rm E}(\cN).
\end{equation*}
\end{corollary}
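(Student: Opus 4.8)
The plan is to upper bound the regularized one-shot $\mathrm{NS}$-assisted $\e$-error capacity by the regularized hypothesis-testing converse bound $R_{\mathrm E}$ of \cite{Matthews2014}, and then to evaluate the latter using the asymptotic equipartition property for the hypothesis-testing relative entropy together with the additivity of the entanglement-assisted channel mutual information. By Proposition \ref{R NS R E} applied to $\cN^{\ox n}$ we have $C^{(1)}_{\mathrm{NS}}(\cN^{\ox n},\e)\le R_{\mathrm E}(\cN^{\ox n},\e)$ for every $n$ and every $\e\in[0,1)$, so it is enough to prove
\begin{equation*}
\lim_{\e\to0}\ \limsup_{n\to\infty}\ \tfrac1n\, R_{\mathrm E}(\cN^{\ox n},\e)\ \le\ \max_{\rho_A} I(\rho_A;\cN);
\end{equation*}
the inequality in the statement then follows (the $\lim_n$ there may be read as $\limsup_n$, which is all that is needed for an upper bound).

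For the asymptotic evaluation I will use the elementary one-shot bound $D_H^\e(\rho\|\sigma)\le\frac{1}{1-\e}\bigl(D(\rho\|\sigma)+h_2(\e)\bigr)$ (valid for $\e\in(0,\tfrac12]$), where $D$ is the Umegaki relative entropy and $h_2$ the binary entropy; it follows from the data-processing inequality applied to the optimal two-outcome test $\{T,\1-T\}$ and a short classical computation. Write $\omega_{A'^nB^n}=(\mathrm{id}_{A'^n}\ox\cN^{\ox n})(\rho_{A'^nA^n})$ for a purification $\rho_{A'^nA^n}$ of an arbitrary input $\rho_{A^n}$. Applying the one-shot bound to each pair $(\omega_{A'^nB^n},\,\rho_{A'^n}\ox\sigma_{B^n})$, then minimizing both sides over $\sigma_{B^n}$ --- which turns $\min_{\sigma_{B^n}}D(\omega_{A'^nB^n}\|\rho_{A'^n}\ox\sigma_{B^n})$ into the quantum mutual information $I(A'^n;B^n)_\omega$ (attained at $\sigma_{B^n}=\omega_{B^n}$) --- and finally maximizing over $\rho_{A^n}$, we obtain
\begin{equation*}
R_{\mathrm E}(\cN^{\ox n},\e)\ \le\ \frac{1}{1-\e}\Bigl(\max_{\rho_{A^n}} I(A'^n;B^n)_\omega + h_2(\e)\Bigr)\ =\ \frac{1}{1-\e}\Bigl(n\max_{\rho_A} I(\rho_A;\cN)+h_2(\e)\Bigr),
\end{equation*}
where the last step is the additivity of the entanglement-assisted capacity, $\max_{\rho_{A^n}}I(A'^n;B^n)_\omega=n\max_{\rho_A}I(\rho_A;\cN)$ \cite{Bennett1999}. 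Dividing by $n$, letting $n\to\infty$ and then $\e\to0$ gives the displayed inequality; equivalently, this is just the fact that $R_{\mathrm E}$ already recovers the converse part of the entanglement-assisted capacity theorem, as noted in \cite{Matthews2014} (see also \cite{Datta2013c}).

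For the consequence $C_{\mathrm{NS}}(\cN)=C_{\mathrm E}(\cN)$: the inequality $C_{\mathrm{NS}}(\cN)\ge C_{\mathrm E}(\cN)$ holds because every entanglement-assisted code is a completely positive, trace-preserving, no-signalling code, so $f_{\mathrm{NS}}(\cN^{\ox n},m)\ge f_{\mathrm{EA}}(\cN^{\ox n},m)$, combined with the entanglement-assisted coding theorem $C_{\mathrm E}(\cN)=\max_{\rho_A}I(\rho_A;\cN)$ \cite{Bennett1999}. For the reverse inequality, fix any $r>\max_{\rho_A}I(\rho_A;\cN)$. The quantity $\limsup_n\tfrac1n C^{(1)}_{\mathrm{NS}}(\cN^{\ox n},\e)$ is non-decreasing in $\e$, and by the inequality just established its limit as $\e\to0$ is at most $\max_{\rho_A}I(\rho_A;\cN)<r$; hence there exists $\e_0>0$ with $C^{(1)}_{\mathrm{NS}}(\cN^{\ox n},\e_0)<rn$ for all sufficiently large $n$. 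By the definition of $C^{(1)}_{\mathrm{NS}}$ in Eq.~(\ref{error capacity}), $C^{(1)}_{\mathrm{NS}}(\cN^{\ox n},\e_0)<\log 2^{rn}$ forces $1-f_{\mathrm{NS}}(\cN^{\ox n},2^{rn})>\e_0$ for all large $n$, so $\lim_n f_{\mathrm{NS}}(\cN^{\ox n},2^{rn})\neq1$ and hence $r$ is not admissible in the definition of $C_{\mathrm{NS}}(\cN)$; thus $C_{\mathrm{NS}}(\cN)\le r$. Since $r$ was an arbitrary number exceeding $\max_{\rho_A}I(\rho_A;\cN)$, we obtain $C_{\mathrm{NS}}(\cN)\le\max_{\rho_A}I(\rho_A;\cN)=C_{\mathrm E}(\cN)$, and hence equality.

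The main obstacle is the middle step: one must pass from a single-copy comparison of $D_H^\e$ with the relative entropy --- which controls a single block but must be applied to the correlated $n$-letter inputs $\rho_{A^n}$ appearing in $R_{\mathrm E}(\cN^{\ox n},\e)$ --- to the $n$-letter estimate, then single-letterize through the additivity $\max_\rho I(\rho;\cN^{\ox n})=n\max_\rho I(\rho;\cN)$, all while keeping the error overhead $h_2(\e)/(1-\e)$ independent of $n$ so that the outer limit $\e\to0$ eliminates it. The two substantive ingredients --- the hypothesis-testing asymptotic equipartition property (or its one-shot surrogate above) and the additivity of the entanglement-assisted capacity --- are available in the literature; the remainder is bookkeeping with the nested limits and with the supremum definitions of $C^{(1)}_{\mathrm{NS}}$ and $C_{\mathrm{NS}}$.
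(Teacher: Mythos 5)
Your proposal is correct and follows the same architecture as the paper's proof: reduce $C^{(1)}_{\rm NS}(\cN^{\ox n},\e)$ to $R_{\rm E}(\cN^{\ox n},\e)$ via Proposition \ref{R NS R E}, invoke the fact that $R_{\rm E}$ asymptotically recovers the entanglement-assisted capacity, and close with the trivial reverse inclusion $C_{\rm NS}\ge C_{\rm E}$. The only difference is that the paper simply cites \cite{Matthews2014} for the asymptotic evaluation of $R_{\rm E}$, whereas you re-derive it from the one-shot bound $D_H^\e(\rho\|\sigma)\le\tfrac{1}{1-\e}\bigl(D(\rho\|\sigma)+h_2(\e)\bigr)$ and additivity of the channel mutual information, and you spell out the bookkeeping with the nested limits and the supremum definition of $C_{\rm NS}$ more carefully than the paper does.
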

\begin{proof}
In \cite{Matthews2014}, Matthews and Wehner prove that 
$$\lim_{\e\to0}\lim_{n\to \infty}\frac{1}{n} R_{\rm{E}}(\cN^{\ox n},\e)\le  \max_{\rho_A} I(\rho_A;\cN).$$

By Proposition \ref{R NS R E}, we immediately obtain that
\begin{equation}
\begin{split}
\lim_{\e\to0}\lim_{n\to \infty}\frac{1}{n} C^{(1)}_{\rm{NS}}(\cN^{\ox n},\e) &\le \lim_{\e\to0}\lim_{n\to \infty}\frac{1}{n} R_{\rm{E}}(\cN^{\ox n},\e) \\
&\le  \max_{\rho_A} I(\rho_A;\cN),
\end{split}
\end{equation}
which means that $C_{\rm{NS}}(\cN) \le C_{\rm{E}}(\cN)$.
Noticing that no-signalling codes are potentially stronger than the entanglement codes, it holds  that $C_{\rm{NS}}(\cN) \ge C_{\rm E}(\cN)$. Therefore, we have that $C_{\rm{NS}}(\cN)=C_{\rm {E}}(\cN)$.
\end{proof}

\subsection{Reduction to Polyanskiy-Poor-Verd\'u converse bound}
For classical-quantum channels, the one-shot $\e$-error NS-assisted (or NS$\cap$PPT-assisted)
capacity can be further simplified based on the structure of the channel. 
\begin{proposition}
For the classical-quantum channel that acts as $\cN: x\to \rho_x$, the Choi matrix of $\cN$ is given by $J_\cN=\sum_x \proj x\ox \rho_x$. Then, the SDP (\ref{C1 NS}) of $C_{\rm{NS}}^{(1)}(\cN,\e)$ and the SDP (\ref{C1 NSPPT}) of $C_{\rm{NS\cap PPT}}^{(1)}(\cN,\e)$ 
can be simplified to
\begin{equation}\label{C cq SDP}
\begin{split}
C_{\rm{NS}}^{(1)}(\cN,\e)=C_{\rm{NS\cap PPT}}^{(1)} & (\cN,\e) \\
=\log \max &\ \sum{s_x}   \\ \rm{ s.t. } &\    0\le Q_x\le s_x \1_B, \forall x,\\
&\sum_x Q_x= \1_B, \\
&\sum_x\tr Q_x\rho_x\ge \sum_x(1-\e)s_x.
\end{split}\end{equation}

\end{proposition}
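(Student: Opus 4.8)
The plan is to exploit the block-diagonal structure $J_\cN=\sum_x\proj{x}_A\ox\rho_x$ of the Choi matrix, combined with a dephasing (pinching) symmetrization on system $A$, and then a change of variables that turns the minimization over $\eta$ in SDP (\ref{C1 NS}) into the maximization of $\sum_x s_x$ in SDP (\ref{C cq SDP}).

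First I would write a general $F_{AB}$ as $\sum_{x,x'}\ketbra{x}{x'}_A\ox F_{xx'}$ with $F_{xx'}\in\cL(B)$ and observe that $\tr J_\cN F_{AB}=\sum_x\tr\rho_x F_{xx}$, so only the diagonal blocks matter. Let $\cP_A$ be the pinching channel $\cP_A(X)=\sum_x\proj{x}X\proj{x}$; then $\cP_A\ox\id_B$ is completely positive, trace-preserving and unital, it fixes $J_\cN$, and it maps $\rho_A\ox\1_B$ to $\cP_A(\rho_A)\ox\1_B$. Given a feasible triple $(\rho_A,F_{AB},\eta)$ of SDP (\ref{C1 NS}), I would apply $\cP_A\ox\id_B$ to $\rho_A$ and $F_{AB}$: complete positivity preserves $F_{AB}\ge0$ and, applied to $\rho_A\ox\1_B-F_{AB}\ge0$, preserves the upper bound; trace-preservation keeps $\tr\rho_A=1$ and $\tr_A F_{AB}=\eta\1_B$; and the objective is unchanged since $(\cP_A\ox\id_B)(J_\cN)=J_\cN$. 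Hence without loss of generality $\rho_A=\sum_x p_x\proj{x}$ with $p_x\ge0$ and $\sum_x p_x=1$, and $F_{AB}=\sum_x\proj{x}_A\ox Q_x$ with $Q_x\ge0$; the constraints of (\ref{C1 NS}) then read $0\le Q_x\le p_x\1_B$, $\sum_x Q_x=\eta\1_B$, and $\sum_x\tr\rho_x Q_x\ge1-\e$.

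Next, noting that a feasible point must have $\eta>0$ (otherwise $F_{AB}=0$ and $\tr J_\cN F_{AB}=0<1-\e$), I would substitute $\lambda=1/\eta$, $s_x=\lambda p_x$, $\widetilde Q_x=\lambda Q_x$, which turns the constraints into $0\le\widetilde Q_x\le s_x\1_B$, $\sum_x\widetilde Q_x=\1_B$, $\sum_x\tr\rho_x\widetilde Q_x\ge\lambda(1-\e)=(1-\e)\sum_x s_x$, and turns ``minimize $\eta$'' into ``maximize $\sum_x s_x$''; running this substitution in both directions gives $C_{\rm{NS}}^{(1)}(\cN,\e)=-\log\min\eta=\log\max\sum_x s_x$, i.e.\ SDP (\ref{C cq SDP}). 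For the $\rm{NS\cap PPT}$ case the same pinching argument applies (the pinching on $A$ commutes with the partial transpose on $B$, so the PPT constraints are preserved), and once $F_{AB}=\sum_x\proj{x}_A\ox Q_x$ its partial transpose on the single system $B$ equals $\sum_x\proj{x}_A\ox Q_x^{T}$; since $Q_x^T\ge0\iff Q_x\ge0$ and $Q_x^T\le p_x\1_B\iff Q_x\le p_x\1_B$, the PPT constraint is implied by the others, so $C_{\rm{NS\cap PPT}}^{(1)}(\cN,\e)=C_{\rm{NS}}^{(1)}(\cN,\e)$.

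The main obstacle I anticipate is making the pinching step airtight: one must confirm that $\cP_A\ox\id_B$ preserves every constraint of the SDP — in particular the two-sided operator bound and the equality $\tr_A F_{AB}=\eta\1_B$ — and leaves the objective invariant, so that the symmetrized solution is genuinely feasible with the same value; once this is in place, the change of variables and the redundancy of the PPT constraint for a single-system output are routine.
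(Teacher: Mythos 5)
Your proof is correct and follows essentially the same route as the paper: reduce to block-diagonal $\rho_A$ and $F_{AB}$ and then change variables $Q_x = F_x/\eta$, $s_x = p_x/\eta$. The pinching argument you supply is a careful justification of the reduction the paper only describes as ``easily simplifies'', and your observation that partial transposition on $B$ becomes vacuous once $F_{AB}$ is block-diagonal in $A$ makes explicit the equality with the $\mathrm{NS}\cap\mathrm{PPT}$ quantity, which the paper leaves as a ``similar method''.
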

\begin{proof}
When $J_\cN=\sum_x \proj x\ox \rho_x$,  the SDP (\ref{C1 NS}) easily simplifies to
\begin{equation}\label{C cq SDP 1}
\begin{split}
C_{\rm{NS}}^{(1)}(\cN,\e)=-\log \min &\ \eta \\ \rm{ s.t. } &\    0\le F_x\le p_x \1_B, \forall x,\\
&\sum_x p_x=1,\\
&\sum_x F_x/\eta= \1_B, \\
&\sum_x\tr F_x\rho_x\ge (1-\e).
\end{split}\end{equation}

By assuming that $Q_x=F_x/\eta$ and $s_x=p_x/\eta$, the above SDP simplifies to
\begin{equation}\label{C cq SDP 2}
\begin{split}
C_{\rm{NS}}^{(1)}(\cN,\e)=\log \max &\ \sum s_x \\ \rm{ s.t. } &\    0\le Q_x\le s_x \1_B, \forall x,\\
&\sum_x Q_x= \1_B, \\
&\sum_x\tr Q_x\rho_x\ge (1-\e)\sum s_x,
\end{split}\end{equation}
where we use the fact $\sum s_x=\sum p_x/\eta=1/\eta$. One can use a similar method to simplify $C_{\rm{NS\cap PPT}}^{(1)}(\cN,\e)$ as well.
\end{proof}

Furthermore, for the classical channels, Polyanskiy, Poor, and Verd\'u \cite{Polyanskiy2010} derive the finite blocklength converse via hypothesis testing.   In \cite{Matthews2012}, an alternative proof of PPV converse was provided by considering the assistance of the classical no-signalling correlations. Here, we are going to show that 
both $C_{\rm{NS}}^{(1)}(\cN,\e)$ and $C_{\rm{NS\cap PPT}}^{(1)}(\cN,\e)$ will   reduce to the PPV converse.

Let us first recall the linear program for the PPV converse bound of a classical channel $\cN(y|x)$ \cite{Polyanskiy2010,Matthews2012}:
\begin{equation}
\label{LP PPV}
    \begin{split}
    R^{PPV}(\cN,\e)=\max &\ \sum_{x} s_x \\ \rm{ s.t. } &\ Q_{xy}\le s_x, \forall x,y,\\
    &\sum_x Q_{xy}\le 1, \forall y,\\
    &\sum_{x,y} \cN(y|x)Q_{xy}\ge (1-\e)\sum_x s_x.
    \end{split}
\end{equation}

For classical channels, we can further simplify the SDP (\ref{C cq SDP}) to a linear program which coincides with the Polyanskiy-Poor-Verd\'u converse bound.

\begin{proposition}
For a classical channel $\cN(y|x)$,
\begin{equation}
    C_{\rm{NS}}^{(1)}(\cN,\e)=C_{\rm{NS\cap PPT}}^{(1)}(\cN,\e)=R^{PPV}(\cN,\e).
\end{equation}
\end{proposition}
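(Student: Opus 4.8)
The plan is to reduce the semidefinite program (\ref{C cq SDP}) to the Polyanskiy--Poor--Verd\'u linear program (\ref{LP PPV}). First I would observe that a classical channel $\cN(y|x)$ is exactly the classical-quantum channel $x\mapsto\rho_x:=\sum_y\cN(y|x)\proj y$, so by the preceding proposition both $C_{\rm{NS}}^{(1)}(\cN,\e)$ and $C_{\rm{NS\cap PPT}}^{(1)}(\cN,\e)$ are given by $\log$ of the optimal value of SDP (\ref{C cq SDP}) with these $\rho_x$. The crucial step is to show this optimum is attained with every $Q_x$ diagonal in the common eigenbasis $\{\ket y\}$ of the $\rho_x$: given a feasible $\{Q_x,s_x\}$, I would let $\Delta$ be the completely dephasing channel in that basis and replace $Q_x$ by $\Delta(Q_x)$, checking that $0\le\Delta(Q_x)\le s_x\1_B$ (positivity and unitality of $\Delta$), that $\sum_x\Delta(Q_x)=\Delta(\1_B)=\1_B$, and that $\tr\Delta(Q_x)\rho_x=\tr Q_x\rho_x$ since $\Delta$ is self-adjoint and fixes the diagonal $\rho_x$. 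Hence feasibility and the objective $\sum_x s_x$ are preserved, so the optimum is unchanged.

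Writing then $Q_{xy}:=\bra y Q_x\ket y\ge 0$, the constraints of (\ref{C cq SDP}) become $0\le Q_{xy}\le s_x$, $\sum_x Q_{xy}=1$ and $\sum_{x,y}\cN(y|x)Q_{xy}\ge(1-\e)\sum_x s_x$, i.e.\ precisely the program (\ref{LP PPV}) with the inequality $\sum_x Q_{xy}\le 1$ strengthened to the equality $\sum_x Q_{xy}=1$. So it remains to prove that this strengthening does not lower the optimal value. One direction is trivial: the equality-constrained feasible set is contained in that of (\ref{LP PPV}) with the same objective.

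For the reverse inequality I would first note that $R^{PPV}(\cN,\e)\ge 1$: taking a single input $x^\ast$ with $s_{x^\ast}=1$, $Q_{x^\ast y}\equiv 1$ and all other variables zero is feasible for (\ref{LP PPV}) with objective $1$ (one message is always transmissible with zero error). Consequently, at an optimal point of (\ref{LP PPV}) one has $\sum_x s_x\ge 1\ge\sum_x Q_{xy}$ for each $y$, hence $\sum_x(s_x-Q_{xy})\ge\delta_y:=1-\sum_x Q_{xy}\ge 0$. So for each $y$ I can pick $u_{xy}\in[0,\,s_x-Q_{xy}]$ with $\sum_x u_{xy}=\delta_y$ (water-filling), and pass to $\widetilde Q_{xy}:=Q_{xy}+u_{xy}$: this saturates $\sum_x\widetilde Q_{xy}=1$, still obeys $0\le\widetilde Q_{xy}\le s_x$, and only increases $\sum_{x,y}\cN(y|x)\widetilde Q_{xy}$, so $\{\widetilde Q_{xy},s_x\}$ is equality-feasible with the same objective $R^{PPV}(\cN,\e)$. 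Combining the two inequalities yields $C_{\rm{NS}}^{(1)}(\cN,\e)=C_{\rm{NS\cap PPT}}^{(1)}(\cN,\e)=R^{PPV}(\cN,\e)$.

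The main obstacle is exactly this last equivalence: the reduced SDP inherits the equality $\sum_x Q_{xy}=1$ from $\sum_x Q_x=\1_B$, whereas (\ref{LP PPV}) only imposes $\sum_x Q_{xy}\le 1$, and one must argue that the resulting slack can always be absorbed into the $Q_{xy}$ without violating $Q_{xy}\le s_x$ or the error constraint — which is where the elementary bound $\sum_x s_x\ge 1$ (equivalently $R^{PPV}\ge 1$) is needed. By contrast, the dephasing reduction in the first step is routine once one notices that all the $\rho_x$ are jointly diagonal.
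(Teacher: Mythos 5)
Your proof is correct and follows the same two-step route as the paper: reduce the SDP (\ref{C cq SDP}) to a diagonal (classical) linear program with the equality constraint $\sum_x Q_{xy}=1$, and then argue that this equals the PPV program (\ref{LP PPV}) which only imposes $\sum_x Q_{xy}\le 1$. Where the paper merely asserts that the $Q_x$ may be taken diagonal and cites \cite{Matthews2012} for the equivalence of the two linear programs, you supply the explicit dephasing argument and a self-contained water-filling construction (together with the elementary observation $R^{PPV}(\cN,\e)\ge 1$), which fills in precisely the details the paper leaves implicit.
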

\begin{proof}
The idea is to further simplify the SDP (\ref{C cq SDP}) via the structure of classical channels. For input $x$, the corresponding outputs can be seemed as $\rho_x=\sum_y \cN(y|x)\proj{y}$. Then, $Q_x$ should be 
diagonal for any $x$, i.e., $Q_x=\sum_y Q_{xy}$.
Thus,
 SDP (\ref{C cq SDP}) can be easily simplified to
\begin{equation}
\label{c classical channel}
    \begin{split}
    C_{\rm{NS}}^{(1)}(\cN,\e)= C_{\rm{NS\cap PPT}}^{(1)} & (\cN,\e)\\
    =    \log\max &\ \sum_{x}s_x \\ \rm{ s.t. } &\ Q_{xy}\le s_x, \forall x,y,\\
    &\sum_x Q_{xy}= 1, \forall y,\\
    &\sum_{x,y}  \cN(y|x)Q_{xy}\ge (1-\e)\sum_x s_x.
    \end{split}
\end{equation}
Using the similar technique in \cite{Matthews2012}, the constraint $\sum_x Q_{xy}= 1$ can be relaxed to $\sum_x Q_{xy}\le 1$ in this case, which means that the linear program (\ref{c classical channel}) is equal to the  linear program (\ref{LP PPV}).
\end{proof}

\section{Strong converse bounds for classical communication}
\label{NS PPT asymptotics}
\subsection{SDP strong converse bounds for the classical capacity}
It is well known that evaluating the classical capacity of a general channel is extremely difficult. To the best of our knowledge, the only known nontrivial strong converse bound for the classical capacity is the entanglement-assisted capacity \cite{Bennett1999} and there is also computable single-shot upper bound derived from entanglement measures \cite{Brandao2011c}.
In this section, we will  derive two  SDP strong converse bounds for the classical capacity of a general quantum channel. Our bounds are efficiently computable and do not depend on any special properties of the channel. We also show that for some classes of quantum channels,  our bound can be strictly smaller than the entanglement-assisted capacity and the previous bound in \cite{Brandao2011c}.

Before introducing the strong converse bounds, we first show a single-shot SDP to estimate the optimal success probability of classical communication via multiple uses of the channel.
\begin{proposition}\label{F A UB}
For any quantum channel $\cN$ and given $m$,
$$f_{\text{\rm{NS}}\cap\text{\rm{PPT}}}(\cN, m)\le f^{+}(\cN,m),$$
where
\begin{equation}\label{prime F+}
\begin{split}
 f^{+}(\cN,m)=\min &\ \tr Z_B \\  \rm{ s.t. } &\ -R_{AB}\le J_{\cN}^{T_B}\le R_{AB},  \\
 & -m\1_A\ox Z_B\le R_{AB}^{T_B}\le m\1_A\ox Z_B.
\end{split}\end{equation}

Furthermore, it holds that
$f_{\text{\rm{NS}}\cap\text{\rm{PPT}}}(\cN_1\ox\cN_2, m_1m_2)\le f^{+}(\cN_1, m_1)f^{+}(\cN_2,m_2)$.
Consequently,
\begin{equation}
 f_{\text{\rm{NS}}\cap\text{\rm{PPT}}}(\cN^{\ox n}, m^n)\le f^{+}(\cN, m)^n.
\end{equation}
\end{proposition}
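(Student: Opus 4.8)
The plan is to prove all three assertions by exhibiting explicit feasible points, with no appeal to duality. For the one-shot bound I would argue directly from the primal SDP~(\ref{SDP f}): let $(\rho_A,F_{AB})$ be any feasible point of~(\ref{SDP f}) and $(R_{AB},Z_B)$ any feasible point of~(\ref{prime F+}). Using the Hilbert--Schmidt self-adjointness of the partial transpose, $\tr(P^{T_B}Q)=\tr(PQ^{T_B})$, together with $F_{AB}^{T_B}\ge 0$ and $J_\cN^{T_B}\le R_{AB}$, then with $F_{AB}\ge 0$ and $R_{AB}^{T_B}\le m\1_A\ox Z_B$, and finally with $\tr_A F_{AB}=\1_B/m$, one gets
\begin{equation*}
\tr J_\cN F_{AB}=\tr\!\big(J_\cN^{T_B}F_{AB}^{T_B}\big)\le\tr\!\big(R_{AB}F_{AB}^{T_B}\big)=\tr\!\big(R_{AB}^{T_B}F_{AB}\big)\le\tr\!\big((m\1_A\ox Z_B)F_{AB}\big)=\tr Z_B .
\end{equation*}
Taking the supremum over feasible $F_{AB}$ and then the infimum over feasible $(R_{AB},Z_B)$ gives $f_{\text{NS}\cap\text{PPT}}(\cN,m)\le f^{+}(\cN,m)$. (Only $F_{AB}\ge 0$, $F_{AB}^{T_B}\ge 0$, $\tr_A F_{AB}=\1_B/m$ and the two ``upper'' constraints of~(\ref{prime F+}) are used.)

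For submultiplicativity of $f^{+}$, I would show that tensoring feasible points of $f^{+}(\cN_1,m_1)$ and $f^{+}(\cN_2,m_2)$ gives a feasible point of $f^{+}(\cN_1\ox\cN_2,m_1m_2)$. First note the two-sided constraints of~(\ref{prime F+}) force $R_{AB}\ge 0$ and $Z_B\ge 0$. The elementary fact I need is: if $-R_i\le M_i\le R_i$ with $R_i\ge 0$ ($i=1,2$), then $-R_1\ox R_2\le M_1\ox M_2\le R_1\ox R_2$, which follows from the identities
\begin{equation*}
R_1\ox R_2\mp M_1\ox M_2=\tfrac12\big[(R_1\mp M_1)\ox(R_2\pm M_2)+(R_1\pm M_1)\ox(R_2\mp M_2)\big]
\end{equation*}
and the positivity of sums and tensor products of positive operators. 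Applying this with $M_i=J_{\cN_i}^{T_{B_i}}$ and $J_{\cN_1\ox\cN_2}^{T_{B_1B_2}}=J_{\cN_1}^{T_{B_1}}\ox J_{\cN_2}^{T_{B_2}}$ shows $R_1\ox R_2$ satisfies the first constraint of~(\ref{prime F+}) for $\cN_1\ox\cN_2$; applying it with $M_i=R_i^{T_{B_i}}$ and the positive operators $m_i\1_{A_i}\ox Z_i$ shows the second holds with $Z_1\ox Z_2$ and $m_1m_2$. Since $\tr(Z_1\ox Z_2)=\tr Z_1\,\tr Z_2$, this gives $f^{+}(\cN_1\ox\cN_2,m_1m_2)\le f^{+}(\cN_1,m_1)\,f^{+}(\cN_2,m_2)$.

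Chaining the two facts gives $f_{\text{NS}\cap\text{PPT}}(\cN_1\ox\cN_2,m_1m_2)\le f^{+}(\cN_1\ox\cN_2,m_1m_2)\le f^{+}(\cN_1,m_1)f^{+}(\cN_2,m_2)$, and $f_{\text{NS}\cap\text{PPT}}(\cN^{\ox n},m^n)\le f^{+}(\cN,m)^n$ then follows by a one-line induction on $n$. I do not expect a real obstacle --- the content is the two displayed operator identities. The only thing demanding care is the bookkeeping of which subsystems carry the partial transposes, and in particular never applying a partial transpose directly to an operator inequality (it is not positivity-preserving): in the displayed chain the inequalities come instead from the self-adjointness of the partial transpose and the positivity of $F_{AB}$ and $F_{AB}^{T_B}$.
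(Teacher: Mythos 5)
Your argument is correct, and for the first inequality it takes a genuinely different and more elementary route than the paper. The paper's proof writes down the dual SDP of $f^{+}(\cN,m)$, invokes Slater's condition to certify strong duality, and then observes that a restricted version of that dual (obtained by setting $X_{AB}=Y_{AB}=0$ and $V_{AB}^{T_B}=W_{AB}$) is already a relaxation of the primal program~(\ref{SDP f}). Your chain
\[
\tr J_\cN F_{AB}=\tr\bigl(J_\cN^{T_B}F_{AB}^{T_B}\bigr)\le\tr\bigl(R_{AB}F_{AB}^{T_B}\bigr)=\tr\bigl(R_{AB}^{T_B}F_{AB}\bigr)\le\tr\bigl((m\1_A\ox Z_B)F_{AB}\bigr)=\tr Z_B
\]
compares feasible points of the two primal SDPs directly, using only the Hilbert--Schmidt self-adjointness of the partial transpose and positivity, so it dispenses with duality and the Slater check altogether; it also transparently isolates which constraints carry the argument ($F_{AB}\ge 0$, $F_{AB}^{T_B}\ge 0$, $\tr_A F_{AB}=\1_B/m$, and the two upper inequalities in~(\ref{prime F+})). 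The submultiplicativity step is essentially the same as the paper's: tensor feasible points and expand $R_1\ox R_2\pm M_1\ox M_2$ as a positive combination of tensor products of positive operators.

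One cosmetic slip: the displayed $\mp/\pm$ identity is wrong in its ``$+$'' branch. With the usual convention of taking all upper (or all lower) signs simultaneously, $(R_1\mp M_1)\ox(R_2\pm M_2)+(R_1\pm M_1)\ox(R_2\mp M_2)$ always pairs opposite signs in the two tensor factors and hence always equals $2(R_1\ox R_2 - M_1\ox M_2)$, for either branch. The ``$+$'' case needs the same-sign pairing, $R_1\ox R_2 + M_1\ox M_2 = \tfrac12\bigl[(R_1+M_1)\ox(R_2+M_2)+(R_1-M_1)\ox(R_2-M_2)\bigr]$, which is what the paper writes out explicitly. This is a typo, not a gap: the positivity conclusion you need follows in both cases from the decompositions you clearly intend.
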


\begin{proof}
We utilize the duality theory of semidefinite programming in the proof. To be specific,
the dual SDP of $f^{+}(\cN,m)$ is given by
\begin{equation}\label{dual f+}
\begin{split}
f^{+}(\cN,m)= \max &\ \tr J_{\cN}(V_{AB}-X_{AB} )^{T_B}\\
   \rm{ s.t. }&\  V_{AB}+X_{AB}  \le (W_{AB}-Y_{AB})^{T_B}, \\
   &\tr_A (W_{AB}+Y_{AB} )\le  \1_B/m, \\
   & V_{AB},X_{AB},W_{AB},Y_{AB} \ge 0.
\end{split}\end{equation}
It is worth noting that the optimal values of the primal and the dual SDPs above coincide. This is a consequence of strong duality. By Slater's condition, one simply needs to show that there exists positive definite $V_{AB}$, $X_{AB}$, $W_{AB}$ and $Y_{AB}$ such that $V_{AB}+X_{AB}  <(W_{AB}-Y_{AB})^{T_B}$ and $\tr_A (W_{AB}+Y_{AB} )<  \1_B/m$, which holds for $W_{AB}=2Y_{AB}=5V_{AB}=X_{AB}=\1_{AB}/2md_{A}$. 

In SDP (\ref{dual f+}), let us choose $X_{AB}=Y_{AB}=0$ and $V_{AB}^{T_B}=W_{AB}$, then we have that
\begin{equation} \label{f+ lower}
\begin{split}
f^{+}(\cN,m) \ge K    \ge f_{\text{NS}\cap\text{PPT}}(\cN,m),    
\end{split}\end{equation}
where $K:=\max\{\tr J_{\cN}W_{AB}:  W_{AB}, W_{AB}^{T_B}\ge 0,  \tr_A W_{AB} \le  \1_B/m \}$. This means that the SDP (\ref{dual f+}) of $f^+(\cN,m)$   is a relaxation of the SDP  (\ref{SDP f}) of $f_{\text{NS}\cap\text{PPT}}(\cN,m)$.

To see  $f_{\text{NS}\cap\text{PPT}}(\cN_1\ox\cN_2, m_1m_2)\le f^{+}(\cN_1,m_1)f^{+}(\cN_2,m_2)$, we first suppose that
the optimal solution to SDP (\ref{prime F+}) of $f^+(\cN_1, m_1)$ is $ \{Z_1,R_1\}$  and
the optimal solution to SDP (\ref{prime F+}) of
$ f^+(\cN_2, m_2)$ is $\{Z_2,R_2\}$. Let us denote the  Choi-Jamio\l{}kowski  matrix of $\cN_1$ and $\cN_2$ by $J_1$ and $J_2$, respectively.
It is easy to see that $R_1  \ox R_2\pm J_1^{T_B}\ox J_2^{T_{B'}}\ge 0$ since
\begin{align*}
& R_1  \ox R_2+J_1^{T_B}\ox J_2^{T_{B'}} \\
 =& \frac{1}{2}[(R_1+J_1^{T_B})\ox (R_2+J_2^{T_{B'}})
 +(R_1-J_1^{T_B})\ox (R_2-J_2^{T_{B'}})],\\
& R_1  \ox R_2-J_1^{T_B}\ox J_2^{T_{B'}} \\
 =& \frac{1}{2}[(R_1+J_1^{T_B})\ox (R_2-J_2^{T_{B'}})
 +(R_1-J_1^{T_B})\ox (R_2+J_2^{T_{B'}})].\\
\end{align*}
Therefore,  we have that $$-R_1\ox R_2 \le J_1^{T_B}\ox J_2^{T_{B'}}\le R_1\ox R_2.$$
Applying similar techniques, it is easy to prove that 
 $$  -m_1m_2\1_{AA'}\ox Z_1\ox Z_2 \le  R_1^{T_B}\ox R_2^{T_{B'}} \le m_1m_2\1_{AA'}\ox Z_1\ox Z_2.$$

Hence, $\{Z_1\ox Z_2, R_1\ox R_2\}$ is a feasible solution to the SDP (\ref{prime F+}) of $f^{+}(\cN_1\ox\cN_2, m_1m_2)$, which means that 
\begin{align*}
f_{\text{NS}\cap\text{PPT}}& (\cN_1\ox\cN_2, m_1m_2)\le f^{+}(\cN_1\ox\cN_2, m_1m_2) \\
&\le \tr Z_1\ox Z_2= f^{+}(\cN_1,m_1)f^{+}(\cN_2,m_2).
\end{align*}
\end{proof}

Now, we are able to derive the strong converse bounds of the classical capacity.
\begin{theorem}\label{SDP strong converse}
For any quantum channel $\cN$,
\begin{align*}
C  (\cN)\le C_{\rm{NS\cap PPT}}(\cN)
\le C_{\beta}(\cN) 
= \log  \beta(\cN)\le \log (d_{B}\|J_{\cN}^{T_B}\|_\infty),
\end{align*}
where
\begin{equation}
\label{sdp_beta}
\begin{split}
\beta(\cN)=\min & \tr S_B \\
 \rm{ s.t. } &  -R_{AB}\le J_{\cN}^{T_B}\le R_{AB}, \\
 &  -\1_A\ox S_B\le R_{AB}^{T_B}\le \1_A\ox S_B.
\end{split}\end{equation}

In particular, when the communication rate exceeds $C_{ \beta}(\cN)$, the error probability goes to one exponentially fast as the number of channel uses increases.
\end{theorem}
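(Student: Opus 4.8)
The plan is to combine the one-shot bound of Proposition~\ref{F A UB} with a standard "strong converse via exponential decay of success probability" argument, and then to identify the resulting single-letter quantity with the SDP $\beta(\cN)$. First I would observe that by definition of $C_{\text{NS}\cap\text{PPT}}(\cN)$ it suffices to show that for every rate $r>\log\beta(\cN)$ one has $f_{\text{NS}\cap\text{PPT}}(\cN^{\ox n},2^{rn})\to 0$; the assertion $C(\cN)\le C_{\text{NS}\cap\text{PPT}}(\cN)$ is immediate since unassisted codes form a subclass of NS$\cap$PPT codes. The key observation is that $f^{+}(\cN,m)$ as defined in~\eqref{prime F+} is positively homogeneous of degree $-1$ in $m$ in a weak sense: rescaling $Z_B\mapsto Z_B/m$ and $R_{AB}\mapsto R_{AB}$ (keeping the constraint $-R_{AB}\le J_{\cN}^{T_B}\le R_{AB}$ and rescaling the second constraint by $m$) shows that $f^{+}(\cN,m) = \beta(\cN)/m$, where $\beta(\cN)$ is exactly the $m=1$ normalization in~\eqref{sdp_beta}. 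So the first real step is this rescaling identity $f^{+}(\cN,m)=\beta(\cN)/m$.

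Granting that, Proposition~\ref{F A UB} gives $f_{\text{NS}\cap\text{PPT}}(\cN^{\ox n},m^n)\le f^{+}(\cN,m)^n=(\beta(\cN)/m)^n$. Now take a rate $r>\log\beta(\cN)$ and set $m=\lceil 2^{r}\rceil$ (or more carefully pick an integer $m$ with $\log\beta(\cN)<\log m\le r$, possible since $r>\log\beta(\cN)$ and we may pass to a subsequence / use that $2^{rn}$ messages can always be grouped), so that $\beta(\cN)/m<1$; then $f_{\text{NS}\cap\text{PPT}}(\cN^{\ox n},2^{rn})\le f_{\text{NS}\cap\text{PPT}}(\cN^{\ox n},m^n)\le (\beta(\cN)/m)^n\to 0$ exponentially fast. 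This yields both $C_{\text{NS}\cap\text{PPT}}(\cN)\le\log\beta(\cN)$ and the stated exponential decay of the success probability above the bound. The monotonicity $f_{\text{NS}\cap\text{PPT}}(\cN^{\ox n},M)\le f_{\text{NS}\cap\text{PPT}}(\cN^{\ox n},m^n)$ for $M\le m^n$ is clear since increasing the number of messages only makes the success probability smaller.

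Finally, for the last inequality $\log\beta(\cN)\le\log(d_B\|J_{\cN}^{T_B}\|_\infty)$ I would exhibit an explicit feasible point of the SDP~\eqref{sdp_beta}: take $R_{AB}=\|J_{\cN}^{T_B}\|_\infty\,\1_{AB}$, which satisfies $-R_{AB}\le J_{\cN}^{T_B}\le R_{AB}$ since $\|J_{\cN}^{T_B}\|_\infty\1_{AB}\pm J_{\cN}^{T_B}\ge 0$, and note $R_{AB}^{T_B}=\|J_{\cN}^{T_B}\|_\infty\1_{AB}$, so taking $S_B=\|J_{\cN}^{T_B}\|_\infty\1_B$ satisfies $-\1_A\ox S_B\le R_{AB}^{T_B}\le\1_A\ox S_B$; this feasible point has objective value $\tr S_B=d_B\|J_{\cN}^{T_B}\|_\infty$, hence $\beta(\cN)\le d_B\|J_{\cN}^{T_B}\|_\infty$. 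The main obstacle is the rescaling identity $f^{+}(\cN,m)=\beta(\cN)/m$ and making sure the integer-rounding in choosing $m$ does not cost anything in the rate; both are routine but need to be stated cleanly, and one should double-check that $\beta(\cN)$ is multiplicative (i.e.\ $\beta(\cN_1\ox\cN_2)\le\beta(\cN_1)\beta(\cN_2)$, which follows from the $m=1$ case of the tensorization already proved in Proposition~\ref{F A UB}) so that the single-letter bound $C_{\beta}(\cN)=\log\beta(\cN)$ is genuinely a strong converse bound rather than merely its regularization.
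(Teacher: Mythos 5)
Your proposal is correct and follows essentially the same route as the paper: invoke Proposition~\ref{F A UB} for the tensor-power bound $f_{\text{NS}\cap\text{PPT}}(\cN^{\ox n},m^n)\le f^{+}(\cN,m)^n$, relate $f^{+}$ to $\beta$ by rescaling (the paper phrases this as ``$\{S_0/\tr S_0,R_0\}$ is feasible for $f^{+}(\cN,\tr S_0)$'' plus monotonicity of $f^{+}(\cN,\cdot)$, whereas you state the cleaner exact identity $f^{+}(\cN,m)=\beta(\cN)/m$, which is equivalent), and finish with the same explicit feasible point $(R_{AB},S_B)=(\|J_{\cN}^{T_B}\|_\infty\1_{AB},\|J_{\cN}^{T_B}\|_\infty\1_B)$ for the last inequality. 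Two cosmetic remarks: the displayed monotonicity should read $f_{\text{NS}\cap\text{PPT}}(\cN^{\ox n},M)\le f_{\text{NS}\cap\text{PPT}}(\cN^{\ox n},m^n)$ for $M\ge m^n$, not $M\le m^n$ (your surrounding prose makes the intent clear, but the inequality as written is reversed); and the integer-rounding digression is unnecessary since both $f_{\text{NS}\cap\text{PPT}}(\cdot,m)$ and $f^{+}(\cdot,m)$ are well-defined SDPs for any real $m>0$, which is how the paper applies Proposition~\ref{F A UB} directly with $m=2^r$.
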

\begin{proof}
For $n$ uses of the channel, we suppose that the rate of the communication is $r$.
By Proposition \ref{F A UB}, we have that
\begin{equation}
\begin{split}
        f_{\text{NS}\cap\text{PPT}}(\cN^{\ox n}, 2^{rn})\le
          f^{+}(\cN, 2^r)^n.
\end{split}\end{equation}
Therefore, the $n$-shot error probability satisfies that
\begin{equation}\label{n-shot error}
\e_n=1-f_{\text{NS}\cap\text{PPT}}(\cN^{\ox n}, 2^{rn})\ge 1-f^{+}(\cN, 2^r)^n.
\end{equation}

Suppose that the optimal solution to the SDP (\ref{sdp_beta}) of $\b(\cN)$ is $\{S_0,R_0\}$. It is easy to verify that $\{S_0/\tr S_0,R_0\}$ is a feasible solution to the SDP (\ref{prime F+}) of $f^{+}(\cN,\tr S_0)$. Therefore,  $$f^{+}(\cN,\b(\cN)) \le \tr (S_0/\tr S_0)=1.$$

It is not difficult to see that $f^{+}(\cN, m)$  monotonically decreases when $m$ increases.
Thus, for any $2^r>\beta(\cN)$, we  have $f^{+}(\cN, 2^r)<1$. Then, by Eq. (\ref{n-shot error}), it is clear that the corresponding $n$-shot error probability
 $\e_n$  will go to one exponentially fast as $n$ increases.
 Hence,  $C_\b(\cN)=\log \beta(\cN)$ is a strong converse bound for the NS$\cap$PPT-assisted classical capacity of $\cN$.

Furthermore, let us choose $R_{AB}=\|J_{\cN}^{T_B}\|_\infty \1_{AB}$ and $S_B=\|J_{\cN}^{T_B}\|_\infty \1_{B}$. It is clear that $\{R_{AB},S_{B}\}$ is a feasible solution to the SDP (\ref{sdp_beta}) of $\b(\cN)$, which means that $\b(\cN)\le d_{B}\|J_{\cN}^{T_B}\|_\infty$.
\end{proof}
\textbf{Remark}
$C_\b$ has some remarkable properties. For example, it is
additive: $C_\b(\cN_1\ox\cN_2)=C_\b(\cN_1)+C_\b(\cN_2)$ for different
quantum channels $\cN_1$ and $\cN_2$. This can be proved by utilizing
semidefinite programming duality.

With similar techniques, we are going to show another SDP strong converse bound for the classical capacity of a general quantum channel.
\begin{theorem}\label{SDP strong converse 2}
For a quantum channel $\cN$, we  derive the following strong converse bound for the NS$\cap$PPT assisted classical capacity, i.e.,
$$C(\cN)\le C_{\rm{NS \cap PPT}}(\cN)\le C_\zeta(\cN)= \log\zeta (\cN)$$  
with
 \begin{equation}\begin{split}\label{dual zeta}
\zeta (\cN)= \min &\tr S_{B} \\
 \rm{ s.t. }&  V_{AB}\ge J_{\cN}, -\1_A\ox S_B\le V_{AB}^{T_B}\le \1_A\ox S_B
\end{split}\end{equation}

And if the communication rate exceeds $C_{ \zeta}(\cN)$, the error probability will go to one exponentially fast as the number of channel uses increase.
\end{theorem}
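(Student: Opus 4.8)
The plan is to follow the template of Theorem~\ref{SDP strong converse} and Proposition~\ref{F A UB}, only replacing the single‑shot quantity $f^{+}$ by the closely related $g^{+}(\cN,m):=\min\{\tr Z_B : V_{AB}\ge J_{\cN},\ -m\1_A\ox Z_B\le V_{AB}^{T_B}\le m\1_A\ox Z_B\}$. The rescaling $Z_B\mapsto mZ_B$ immediately gives $g^{+}(\cN,m)=\zeta(\cN)/m$, so $g^{+}(\cN,\zeta(\cN))=1$ and $g^{+}(\cN,m)<1$ as soon as $m>\zeta(\cN)$. Three things then need to be checked: that $g^{+}$ dominates $f_{\text{NS}\cap\text{PPT}}$ in one shot, that $g^{+}$ is submultiplicative under tensor products, and that these combine to force exponential decay of the success probability.

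For the first point I would argue directly at the primal level. Take any feasible $\{\rho_A,F_{AB}\}$ of the SDP~(\ref{SDP f}) for $f_{\text{NS}\cap\text{PPT}}(\cN,m)$ and any feasible $\{Z_B,V_{AB}\}$ of $g^{+}(\cN,m)$. Since $V_{AB}\ge J_{\cN}$ and $F_{AB}\ge0$ we get $\tr J_{\cN}F_{AB}\le\tr V_{AB}F_{AB}=\tr(V_{AB}^{T_B}F_{AB}^{T_B})$ (partial transpose is self‑adjoint for the Hilbert–Schmidt product); since $F_{AB}^{T_B}\ge0$ and $V_{AB}^{T_B}\le m\1_A\ox Z_B$ this is at most $m\tr\big((\1_A\ox Z_B)F_{AB}^{T_B}\big)=m\tr\big(Z_B\,(\tr_A F_{AB})^{T_B}\big)=m\tr(Z_B\cdot\1_B/m)=\tr Z_B$, using $\tr_AF_{AB}=\1_B/m$ and that partial transpose on $B$ commutes with partial trace on $A$. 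Taking the supremum over $F_{AB}$ and the infimum over $Z_B$ yields $f_{\text{NS}\cap\text{PPT}}(\cN,m)\le g^{+}(\cN,m)$. One could equivalently write down the dual of $g^{+}(\cN,m)$, verify Slater's condition, and recognise it as a relaxation of SDP~(\ref{SDP f}), exactly as in Proposition~\ref{F A UB}.

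For submultiplicativity, take optimal solutions $\{Z_1,V_1\}$ and $\{Z_2,V_2\}$ for $g^{+}(\cN_1,m_1)$ and $g^{+}(\cN_2,m_2)$ and test $\{Z_1\ox Z_2,\,V_1\ox V_2\}$ for $g^{+}(\cN_1\ox\cN_2,m_1m_2)$. One has $V_1\ox V_2\ge J_{\cN_1}\ox J_{\cN_2}=J_{\cN_1\ox\cN_2}$, because for positive semidefinite operators with $A\ge C\ge0$ and $B\ge D\ge0$ one has $A\ox B-C\ox D=(A-C)\ox B+C\ox(B-D)\ge0$ (here $V_i\ge J_{\cN_i}\ge0$ since $J_{\cN_i}$ is a Choi matrix). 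For the other constraint, the identity $Y_1\ox Y_2\pm X_1\ox X_2=\tfrac12\big[(Y_1+X_1)\ox(Y_2\pm X_2)+(Y_1-X_1)\ox(Y_2\mp X_2)\big]$ with $X_i=V_i^{T_{B_i}}$ and $Y_i=m_i\1_{A_i}\ox Z_i$ (the same trick as in Proposition~\ref{F A UB}) gives $-m_1m_2\1\ox Z_1\ox Z_2\le V_1^{T_B}\ox V_2^{T_{B'}}\le m_1m_2\1\ox Z_1\ox Z_2$, and $V_1^{T_B}\ox V_2^{T_{B'}}=(V_1\ox V_2)^{T_{BB'}}$. Hence $g^{+}(\cN_1\ox\cN_2,m_1m_2)\le\tr(Z_1\ox Z_2)=g^{+}(\cN_1,m_1)g^{+}(\cN_2,m_2)$, and inductively $f_{\text{NS}\cap\text{PPT}}(\cN^{\ox n},m^n)\le g^{+}(\cN^{\ox n},m^n)\le g^{+}(\cN,m)^n=\big(\zeta(\cN)/m\big)^n$.

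The conclusion is then reached as in Theorem~\ref{SDP strong converse}: if the rate obeys $2^{r}>\zeta(\cN)$ then $f_{\text{NS}\cap\text{PPT}}(\cN^{\ox n},2^{rn})\le(\zeta(\cN)/2^{r})^n\to0$ exponentially fast, so the error probability $\e_n=1-f_{\text{NS}\cap\text{PPT}}(\cN^{\ox n},2^{rn})$ tends to $1$ exponentially, and the same holds for the unassisted success probability since unassisted codes form a subclass of NS$\cap$PPT codes. This yields $C(\cN)\le C_{\text{NS}\cap\text{PPT}}(\cN)\le\log\zeta(\cN)=C_\zeta(\cN)$ together with the strong‑converse statement. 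I do not expect a real obstacle: the argument is parallel to that of Theorem~\ref{SDP strong converse}, and the only place demanding care is checking that $\{Z_1\ox Z_2,V_1\ox V_2\}$ satisfies \emph{both} constraints of $g^{+}(\cN_1\ox\cN_2,m_1m_2)$ — the one‑sided $V\ge J_{\cN}$ and the two‑sided bound on $V^{T_B}$ — via the tensor‑product manipulations above.
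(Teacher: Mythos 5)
Your proof is correct and follows essentially the same route as the paper's, which simply defers to the techniques of Proposition~\ref{F A UB} and Theorem~\ref{SDP strong converse} after introducing $\widetilde f^{+}=g^{+}$; you fill in exactly those deferred details. The one small refinement you add is to argue $f_{\text{NS}\cap\text{PPT}}(\cN,m)\le g^{+}(\cN,m)$ directly at the primal level (rather than via the dual relaxation of Proposition~\ref{F A UB}) and to note the clean scaling identity $g^{+}(\cN,m)=\zeta(\cN)/m$, both of which are fine and make the exponential-decay conclusion immediate.
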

\begin{proof}
We first introduce the following SDP to estimate the optimal success probability:
\begin{equation}\label{prime hat F+}
\begin{split}
\widetilde f^{+}(\cN,m)=\min &\ \tr S_B \\
  \rm{ s.t. } &\ V_{AB}\ge J_{\cN},  \\
 & -m\1_A\ox S_B\le V_{AB}^{T_B}\le m\1_A\ox S_B.
\end{split}\end{equation}
Similar to Proposition \ref{F A UB}, we can prove that
\begin{equation}
 f_{\text{NS}\cap\text{PPT}}(\cN^{\ox n}, m^n)\le \widetilde f^{+}(\cN, m)^n.
\end{equation}

Then, when the communication rate exceeds $C_{ \zeta}(\cN)$, we can use the technique in Theorem \ref{SDP strong converse} to prove that the error probability will go to one exponentially fast as the number of channel uses increase.
%
\end{proof}

As an example, we first apply our bounds to the qudit noiseless channel. In this case, the bounds are tight and strictly smaller than the entanglement-assisted classical capacity.
\begin{proposition}
For the qudit noiseless channel $I_d(\rho)=\rho$, it holds that
\begin{equation}
C(I_d)=C_\beta(I_d)=C_\zeta(I_d)=\log d<2\log d=C_{\rm{E}}(I_d).
\end{equation}
\end{proposition}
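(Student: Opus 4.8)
The plan is to verify the chain $C(I_d)=C_\beta(I_d)=C_\zeta(I_d)=\log d<2\log d=C_{\rm E}(I_d)$ by combining the general inequality $C(\cN)\le C_{\rm{NS}\cap\rm{PPT}}(\cN)\le C_\beta(\cN)$ (and the analogous bound with $C_\zeta$) from Theorems \ref{SDP strong converse} and \ref{SDP strong converse 2} with direct evaluations for the identity channel. The Choi matrix of $I_d$ is the (unnormalised) maximally entangled state $J_{I_d}=\Phi_{AB}=\sum_{ij}\ketbra{ij}{ij}$ in the sense $\ketbra{i_A}{j_A}\ox\ketbra{i_B}{j_B}$, so $J_{I_d}^{T_B}=\sum_{ij}\ketbra{ij}{ji}=d\cdot\mathrm{SWAP}$, which has operator norm $d$, giving the crude bound $C_\beta(I_d)\le\log(d_B\|J_{\cN}^{T_B}\|_\infty)=\log(d\cdot d)=2\log d$; but this is not tight enough, so I would instead construct explicit feasible points achieving $\beta(I_d)=\zeta(I_d)=d$.

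For the upper bounds $C_\beta(I_d)\le\log d$ and $C_\zeta(I_d)\le\log d$, I would exhibit feasible solutions to the SDPs (\ref{sdp_beta}) and (\ref{dual zeta}) with $\tr S_B=d$. For $\zeta(I_d)$: since $V_{AB}\ge J_{I_d}=\Phi_{AB}\ge0$, try $V_{AB}=\Phi_{AB}$ itself; then $V_{AB}^{T_B}=d\cdot\mathrm{SWAP}$ and we need $-\1_A\ox S_B\le d\cdot\mathrm{SWAP}\le\1_A\ox S_B$. Since $\|\mathrm{SWAP}\|_\infty=1$, taking $S_B=d\cdot\1_B$ works and gives $\tr S_B=d^2$ — still too big. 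The fix is to not use $V_{AB}=\Phi_{AB}$ but rather $V_{AB}=\1_A\ox\1_B$ (or a suitable rescaling): one checks $\1_{AB}\ge\Phi_{AB}$ fails in general, so instead use $V_{AB}=\Phi_{AB}$ but exploit that $\mathrm{SWAP}=P_{\rm sym}-P_{\rm asym}$ has eigenvalues $\pm1$, and look for the optimal $S_B$; the right choice turns out to be $S_B=\1_B$, for which we need $\|V_{AB}^{T_B}\|_\infty\le1$, forcing a rescaled $V_{AB}$. I would work out that the genuinely optimal feasible point has $\tr S_B=d$, matching the lower bound. Similarly for $\beta(I_d)$ one takes $R_{AB}$ proportional to $\mathrm{SWAP}$-type operators with $\|R_{AB}^{T_B}\|_\infty$ controlled so that $\tr S_B=d$.

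For the lower bound $C(I_d)\ge\log d$, this is immediate: the noiseless qudit channel transmits $\log d$ classical bits per use with zero error using the computational basis, so $C(I_d)\ge\log d$ and hence all three quantities in the chain equal $\log d$ by the sandwiching. The final strict inequality $\log d<2\log d$ is trivial for $d\ge2$, and $C_{\rm E}(I_d)=2\log d$ is the standard superdense coding value (the entanglement-assisted capacity of the identity channel, $\max_{\rho_A}I(\rho_A;I_d)=2\log d$ attained at the maximally mixed state).

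The main obstacle is pinning down the exact optimal value of the SDPs (\ref{sdp_beta}) and (\ref{dual zeta}) for $I_d$, i.e.\ showing $\beta(I_d)=\zeta(I_d)=d$ rather than merely $\le d^2$ or $\ge d$: one needs to choose the feasible operators cleverly (using the symmetric/antisymmetric subspace decomposition of $\mathrm{SWAP}$ and the fact that the partial transpose of $\Phi_{AB}$ is $d\cdot\mathrm{SWAP}$) so that the constraints $-\1_A\ox S_B\le R_{AB}^{T_B}\le\1_A\ox S_B$ are saturated in exactly the directions that minimise $\tr S_B$. I expect that taking $R_{AB}=J_{I_d}^{T_B}$-adapted operators and $S_B$ a scalar multiple of $\1_B$ works after a short eigenvalue computation, with the strong converse inequality $C\le C_\beta$ then doing the rest.
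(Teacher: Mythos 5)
Your overall structure (lower bound $C(I_d)\ge\log d$ from sending computational basis states, $C_{\rm E}(I_d)=2\log d$ from superdense coding, and sandwiching via $C\le C_{\rm{NS}\cap\rm{PPT}}\le C_\beta, C_\zeta$) is correct and matches the paper. However, your upper-bound argument contains a genuine computational error that derails the rest.

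You assert that $J_{I_d}^{T_B}=d\cdot\mathrm{SWAP}$ with operator norm $d$. This is wrong. The Choi matrix is $J_{I_d}=\sum_{ij}\ket{ii}\bra{jj}=d\,\proj{\Phi}$, which indeed has operator norm $d$, but its partial transpose is $J_{I_d}^{T_B}=\sum_{ij}\ket{ij}\bra{ji}=\mathrm{SWAP}$, which has eigenvalues $\pm1$ on the symmetric/antisymmetric subspaces, hence $\|J_{I_d}^{T_B}\|_\infty=1$, not $d$. (Equivalently, $\proj{\Phi}^{T_B}=\tfrac1d\,\mathrm{SWAP}$, so $(d\,\proj{\Phi})^{T_B}=\mathrm{SWAP}$.) You appear to have conflated $\|J_{I_d}\|_\infty$ with $\|J_{I_d}^{T_B}\|_\infty$. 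Once this is corrected, the ``crude'' bound you dismiss, $C_\beta(I_d)\le\log\bigl(d_B\|J_{I_d}^{T_B}\|_\infty\bigr)=\log d$, is already tight, and likewise $V_{AB}=J_{I_d}$, $S_B=\1_B$ is immediately feasible for the SDP~(\ref{dual zeta}) with $\tr S_B=d$, giving $C_\zeta(I_d)\le\log d$. This is exactly the paper's proof; no clever choice of $R_{AB}$ or $V_{AB}$ is required.

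Beyond the norm miscalculation, your proposed ``fixes'' for the (nonexistent) gap do not work as sketched: rescaling $V_{AB}$ to make $\|V_{AB}^{T_B}\|_\infty\le1$ destroys the constraint $V_{AB}\ge J_{I_d}$ (for positive rank-one $J_{I_d}$, any strict scaling down fails), and you never actually exhibit a feasible point achieving $\tr S_B=d$, deferring instead to ``I would work out.'' So as written the upper bound is not established.
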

\begin{proof}
It is clear that $C(I_d)\ge \log d$.
 By the fact that $\|J_{I_d}^{T_B}\|_{\infty}=1$, it is easy to see that  $C_\beta(I_d)\le \log d\|J_{I_d}^{T_B}\|_{\infty}=\log d$. Similarly, we also have $C_\zeta(I_d)\le \log d$. And $C_{\rm{E}}(I_d)=2\log d$ is due to the  superdense coding \cite{Bennett1992}.
\end{proof}

\subsection{Amplitude damping channel}
For the amplitude damping channel $\cN_{\gamma}^{AD}=\sum_{i=0}^1 E_i\cdot E_i^\dag$ $(0\le \gamma\leq 1)$
with $E_0=\ketbra{0}{0}+\sqrt{1-\gamma}\ketbra{1}{1}$ and $E_1=\sqrt{\gamma}\ketbra{0}{1}$, the Holevo capacity $\chi(\cN_{\gamma}^{AD})$ is given in \cite{Giovannetti2005}. However, its classical capacity remains unknown. The only known nontrivial  and meaningful upper bound for the classical capacity of the amplitude damping channel was established in \cite{Brandao2011c}.
As an application of theorems \ref{SDP strong converse} and \ref{SDP strong converse 2}, we show a strong converse bound for the classical capacity of the qubit amplitude damping channel. Remarkably, our bound improves the best previously known upper bound  \cite{Brandao2011c}.

\begin{theorem}
For amplitude damping channel  $\cN_{\gamma}^{AD}$, 
$$C_{\rm{NS \cap PPT}}(\cN_{\gamma}^{AD})\le C_{\zeta}(\cN_{\gamma}^{AD})= C_{\beta}(\cN_{\gamma}^{AD})  =\log (1+\sqrt {1-\gamma}).$$
As a consequence,
$$C(\cN_{\gamma}^{AD}) \le\log (1+\sqrt {1-\gamma}).$$
\end{theorem}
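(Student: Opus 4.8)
The plan is to compute $C_\beta(\cN_\gamma^{AD})$ and $C_\zeta(\cN_\gamma^{AD})$ explicitly by exhibiting matching feasible points for the primal and dual SDPs, and then invoke Theorems~\ref{SDP strong converse} and \ref{SDP strong converse 2} together with the trivial inequality $C(\cN)\le C_{\rm{NS\cap PPT}}(\cN)$. First I would write down the Choi matrix $J_{\cN_\gamma^{AD}}$ in the computational basis $\{\ket{00},\ket{01},\ket{10},\ket{11}\}$: using $E_0=\ketbra 00+\sqrt{1-\gamma}\ketbra 11$ and $E_1=\sqrt\gamma\ketbra 01$ one gets a rank-two matrix supported on $\mathrm{span}\{\ket{00},\ket{01},\ket{11}\}$, with the $2\times2$ block on $\{\ket{00},\ket{11}\}$ equal to $\begin{pmatrix}1&\sqrt{1-\gamma}\\ \sqrt{1-\gamma}&1-\gamma\end{pmatrix}$ and the $\ket{01}$-diagonal entry equal to $\gamma$. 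Its partial transpose $J_{\cN_\gamma^{AD}}^{T_B}$ moves the off-diagonal term onto the $\{\ket{01},\ket{10}\}$ block, producing a block-diagonal matrix with blocks involving $\sqrt{1-\gamma}$ and eigenvalues $1$, $1-\gamma$, $\pm\sqrt{1-\gamma}$. This makes $\|J_{\cN_\gamma^{AD}}^{T_B}\|_\infty$ and the relevant operator inequalities tractable by hand.

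Next I would guess the optimal dual variables. For $\beta(\cN)$ the natural ansatz is a diagonal $S_B=\mathrm{diag}(s_0,s_1)$ and an $R_{AB}$ that is ``just large enough'' to dominate $\pm J_{\cN}^{T_B}$ while its own partial transpose is dominated by $\1_A\otimes S_B$; the block structure above suggests choosing $R_{AB}$ block-diagonal with the same support pattern, with entries built from $\sqrt{1-\gamma}$. Optimizing $\tr S_B=s_0+s_1$ subject to the two sandwiched constraints should yield $\beta(\cN_\gamma^{AD})=1+\sqrt{1-\gamma}$. I would then verify optimality either by producing a primal-feasible $\{Z_B,R_{AB}\}$ for $f^+$-type dual (equivalently a feasible point of the SDP \eqref{sdp_beta} written in primal form) achieving the same value, or more directly by checking that the candidate is feasible and that $1+\sqrt{1-\gamma}\le d_B\|J_{\cN_\gamma^{AD}}^{T_B}\|_\infty = 2\cdot 1 = 2$ is consistent and then supplying a lower-bound certificate. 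For $\zeta(\cN)$ the ansatz $V_{AB}=J_{\cN}+ (\text{small PSD correction})$ or even a clean closed-form $V_{AB}$ with $V_{AB}\ge J_{\cN}$ and $\|V_{AB}^{T_B}\|$ controlled by the same $S_B$ should give $\zeta(\cN_\gamma^{AD})=1+\sqrt{1-\gamma}$ as well; showing $C_\zeta=C_\beta$ here is thus a matter of checking both SDPs land on the same optimal value.

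The main obstacle I anticipate is the explicit optimization: guessing dual-optimal $R_{AB}$ (resp. $V_{AB}$) with the right off-diagonal structure so that \emph{both} sandwich constraints $-R_{AB}\le J_{\cN}^{T_B}\le R_{AB}$ and $-\1_A\otimes S_B\le R_{AB}^{T_B}\le \1_A\otimes S_B$ are simultaneously tight, and then proving this is genuinely optimal rather than merely feasible (i.e.\ exhibiting a matching primal feasible point). Once the value $1+\sqrt{1-\gamma}$ is pinned down for one of the two bounds and the other is shown to coincide, the theorem follows immediately: $C(\cN_\gamma^{AD})\le C_{\rm{NS\cap PPT}}(\cN_\gamma^{AD})\le \min\{C_\beta,C_\zeta\}(\cN_\gamma^{AD})=\log(1+\sqrt{1-\gamma})$. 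A secondary check worth doing is the sanity limits $\gamma=0$ (giving $\log 2 = C(I_2)$, consistent with the noiseless-qubit proposition) and $\gamma=1$ (giving $\log 1 = 0$, as the channel becomes useless), which the formula $\log(1+\sqrt{1-\gamma})$ passes.
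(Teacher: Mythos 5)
Your strategy is exactly the paper's: produce explicit feasible points for the $\zeta$- and $\beta$-SDPs whose objective equals $1+\sqrt{1-\gamma}$, then invoke Theorems~\ref{SDP strong converse} and \ref{SDP strong converse 2} together with $C\le C_{\rm NS\cap PPT}\le\min(C_\beta,C_\zeta)$. The proposal, however, stops at the ansatz stage, and the actual content of the proof is the specific certificate. The paper takes
\[
S_B=\tfrac{\sqrt{1-\gamma}+1+\gamma}{2}\proj 0+\tfrac{\sqrt{1-\gamma}+1-\gamma}{2}\proj 1,\qquad
V_{AB}=J_{\cN_\gamma^{AD}}+\bigl(\sqrt{1-\gamma}-1+\gamma\bigr)\proj v,\quad \ket v=\tfrac{1}{\sqrt2}(\ket{00}+\ket{11}),
\]
checks $V_{AB}\ge J_{\cN_\gamma^{AD}}$ (the coefficient is $\sqrt{1-\gamma}-(1-\gamma)\ge0$) and $\pm V_{AB}^{T_B}\le\1_A\ox S_B$ by a short block computation, and reads off $\tr S_B=1+\sqrt{1-\gamma}$; the matching lower bound comes from a dual feasible point. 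So your ``$V_{AB}=J_\cN+$ small PSD correction, diagonal $S_B$'' guess is the right shape, but without the explicit correction (a multiple of the maximally entangled projector) and the verification of the two sandwich inequalities, there is no proof yet, and the equality claims $C_\zeta=C_\beta=\log(1+\sqrt{1-\gamma})$ still need a lower-bound certificate.

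Two small arithmetic slips to fix in your setup: the diagonal weight $\gamma$ in $J_{\cN_\gamma^{AD}}$ lives at $\ketbra{10}{10}$, not $\ketbra{01}{01}$, since $\cN_\gamma^{AD}(\proj 1)=\gamma\proj0+(1-\gamma)\proj1$; and the $2\times2$ block of $J_{\cN_\gamma^{AD}}^{T_B}$ on $\mathrm{span}\{\ket{01},\ket{10}\}$ is $\bigl(\begin{smallmatrix}0&\sqrt{1-\gamma}\\ \sqrt{1-\gamma}&\gamma\end{smallmatrix}\bigr)$, whose eigenvalues are $1$ and $\gamma-1$, not $\pm\sqrt{1-\gamma}$. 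The full spectrum of $J^{T_B}$ is thus $\{1,1,1-\gamma,\gamma-1\}$, which still gives $\|J^{T_B}\|_\infty=1$ as you used. Your sanity checks at $\gamma=0$ and $\gamma=1$ are sensible and do pass.
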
 
\begin{proof}
Suppose that $$S_B=\frac{\sqrt{1-\gamma}+1+\gamma}{2}\proj{0}+\frac{\sqrt{1-\gamma}+1-\gamma}{2}\proj{1}$$
and
$$V_{AB}=J_{\gamma}^{AD}+(\sqrt{1-\gamma}-1+\gamma)\proj{v}
$$
with $\ket v=\frac{1}{\sqrt 2}(\ket{00}+\ket{11})$.

It is clear that $V_{AB}\ge J_{\gamma}^{AD}$. Moreover, it is easy to see that
$$
\1_A\ox S_B-V_{AB}^{T_B}=\frac{\sqrt{1-\gamma}+1-\gamma}{2}(\ket{01}-\ket{10})(\bra{01}-\bra{10})\ge 0$$
and
$\1_A\ox S_B+V_{AB}^{T_B} 
= (\sqrt{1-\gamma}+1+\gamma)\proj{00}
+(\sqrt{1-\gamma}+1-\gamma)\proj{11} 
+ \frac{\sqrt{1-\gamma}+1-\gamma}{2}(\proj{01}+\ketbra{01}{10}+\ketbra{10}{01})
+\frac{\sqrt{1-\gamma}+1+3\gamma}{2}\proj{10}
\ge0
$.

Therefore, $\{S_B, V_{AB}\}$ is a feasible solution to SDP (\ref{dual zeta}), which means that
$$C_{\zeta}(\cN_{\gamma}^{AD}) \le \log\tr S_B=\log (1+\sqrt {1-\gamma}).$$
One can also use the dual SDP of $C_{\beta}$ to show that $C_{\beta}(\cN_{\gamma}^{AD})\ge\log (1+\sqrt {1-\gamma})$. Hence, we have that $C_{\zeta}(\cN_{\gamma}^{AD})=\log (1+\sqrt {1-\gamma})$. 

Similarly, it can also be calculated that $C_{\beta}(\cN_{\gamma}^{AD})=\log (1+\sqrt {1-\gamma})$.
\end{proof}
\textbf{Remark:}
It is worth noting that our bound is strictly smaller than the entanglement-assisted capacity when  $\gamma\le 0.75$ as shown in the following FIG. \ref{adCE}.
We further compare our bound with the previous upper bound \cite{Brandao2011c} and lower bound  \cite{Giovannetti2005} in FIG. \ref{ad}. 
The authors of \cite{Giovannetti2005} showed that $$C(\cN_\gamma^{AD})\ge \max_{0\le p\le 1} \{H_2[(1-\gamma)p]-H_2[(1+\sqrt{1-4(1-\gamma)\gamma p^2})/2]\},$$ where $H_2$ is the binary entropy.
It is clear that our bound provides a tighter bound to the classical capacity than the previous bound \cite{Brandao2011c}.

\begin{figure}[htbp]
\centering
\begin{minipage}{.48\textwidth}
 \centering{\includegraphics[width=6.22cm]{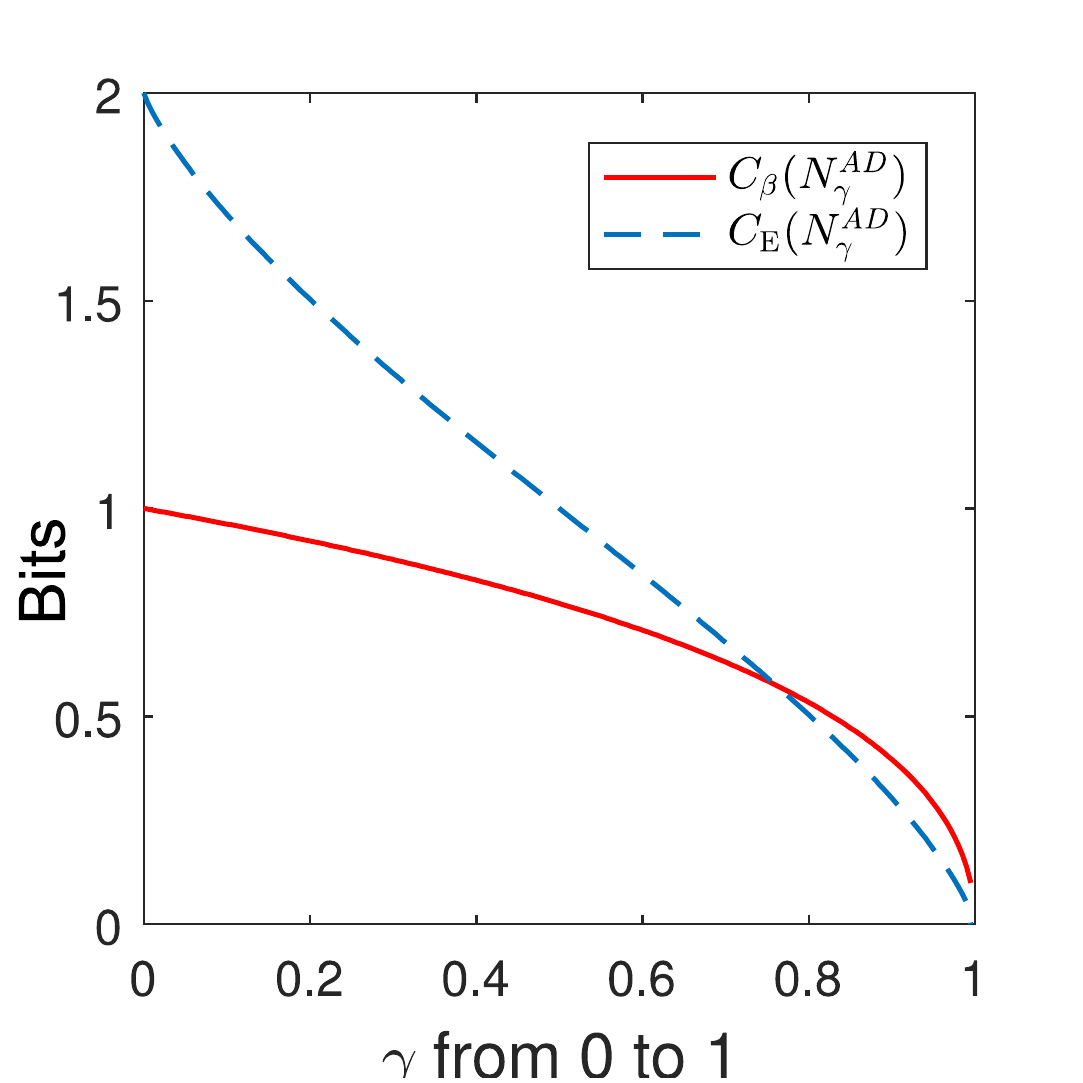}}
  \caption{The  solid line depicts $C_\beta(\cN_{\gamma}^{AD})$ while the dashed line depicts $C_{\rm E}(\cN_{\gamma}^{AD})$. It is worth noting that  $C_\beta(\cN_{\gamma}^{AD})$ is strictly smaller than $C_{\rm E}(\cN_{\gamma}^{AD})$ for any $\gamma\le0.75$.}
\label{adCE}
\end{minipage}
\hspace{0.1cm}
\begin{minipage}{.48\textwidth}
 \centering{\includegraphics[width=6.3cm]{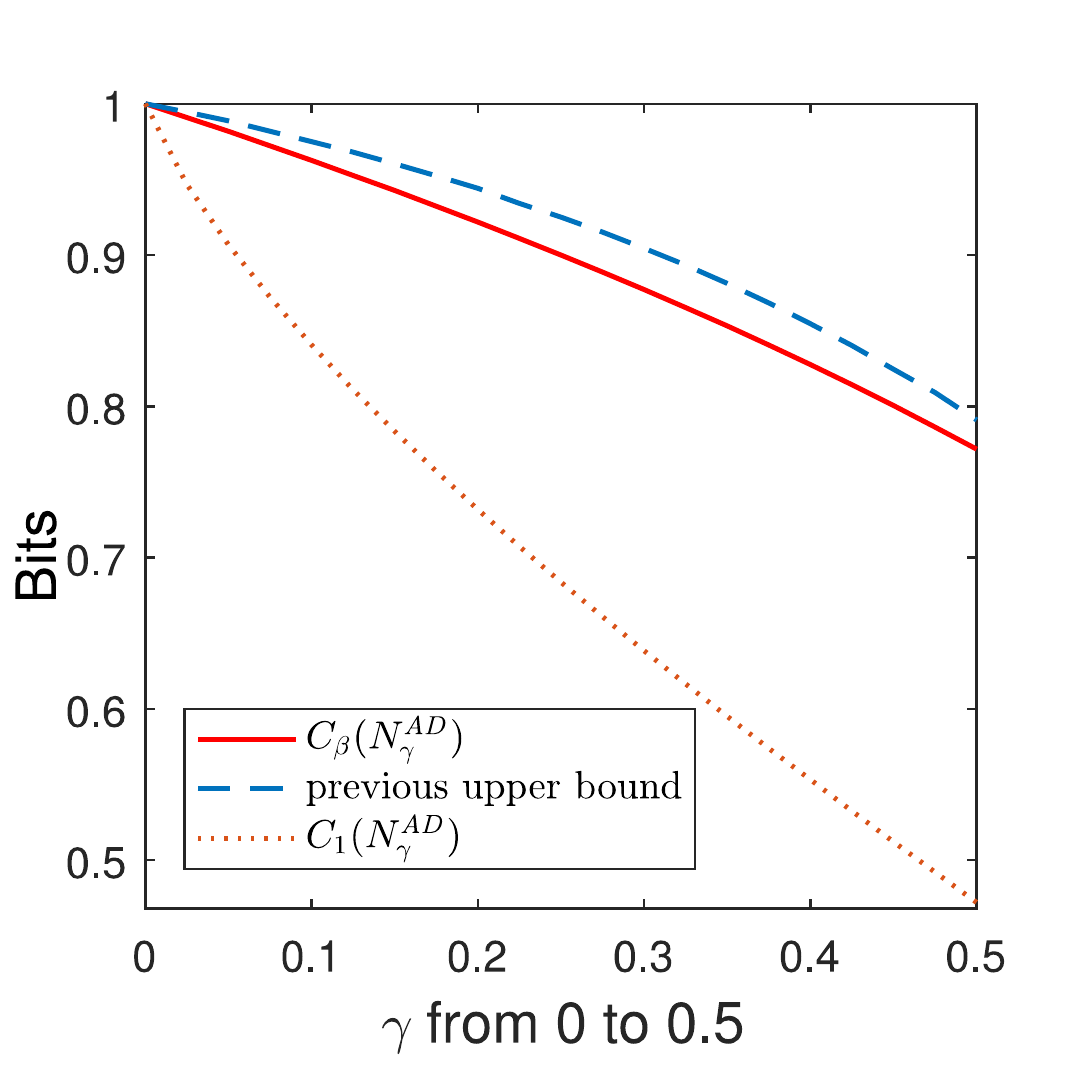}}
  \caption{The  solid line depicts  $C_\beta(\cN_{\gamma}^{AD})$,  the dashed line depicts the previous bound of  $C(\cN_{\gamma}^{AD})$   \cite{Brandao2011c}, and the dotted line depicts the lower bound \cite{Giovannetti2005}.
 Our bound is tighter than the previous bound in \cite{Brandao2011c}.}
\label{ad}
\end{minipage}
\end{figure}

%

\subsection{Strong converse property for a new class of quantum channels}
In  \cite{Wang2016f}, a class of qutrit-to-qutrit channels was introduced to show the separation between quantum Lov\'asz number and entanglement-assisted zero-error classical capacity. 
It turns out that this class of channels also has strong converse property for classical or private communication.
To be specific, the channel from register $A$ to $B$ is given by $\cN_{\alpha}(\rho)=E_0\rho E_0^{\dagger}+E_1\rho E_1^{\dagger}$ $(0<\alpha\le \pi/4)$
 with 
\begin{align*}
E_0 = \sin \alpha\ketbra{0}{1}+\ketbra{1}{2}, E_1=\cos\alpha\ketbra{2}{1}+\ketbra{1}{0}.
\end{align*}
It follows that the complementary channel of $\cN_\a$ is 
$\cN_\a^c(\rho)=\sum_{i=0}^{2}F_i\rho F_i^{\dagger}$ with 
\begin{align*}
F_0 = \sin \alpha\ketbra{0}{1}, F_1 = \ketbra{0}{2}+\ketbra{1}{0}, 
F_2 = \cos\alpha\ketbra{1}{1}.
\end{align*}

\begin{proposition}\label{C N_A}
For  $\cN_{\alpha}$ $(0<\alpha\le \pi/4)$,
we have that
\begin{equation*}
C(\cN_{\a})= C_{\rm{NS}\cap \rm{PPT}}(\cN_{\a})=C_{\beta}(\cN_\a)=1.
\end{equation*}
\end{proposition}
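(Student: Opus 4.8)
The plan is to establish three things: the lower bound $C(\cN_\a)\ge 1$, the upper bound $C_\b(\cN_\a)\le 1$, and then chain them together using the general inequalities $C(\cN_\a)\le C_{\rm{NS}\cap\rm{PPT}}(\cN_\a)\le C_\b(\cN_\a)$ from Theorem \ref{SDP strong converse}, which forces all four quantities to equal $1$. The lower bound is the easy direction: one checks that $\cN_\a$ can transmit one classical bit perfectly in a single use. Looking at $E_0,E_1$, the input states $\ket{0}$ and $\ket{2}$ are mapped to $\cN_\a(\proj 0)=\proj 1$ (via $E_1\ket0=\ket1$, $E_0\ket0=0$) and $\cN_\a(\proj 2)=\proj 1$ as well — so that particular pair does not work. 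Instead I would look for two inputs whose outputs are perfectly distinguishable; e.g. compare the images of $\ket 1$ and $\ket 0$ (or $\ket 2$): $\cN_\a(\proj 1)=\sin^2\a\proj 0+\cos^2\a\proj 2$, which is orthogonal to $\cN_\a(\proj 0)=\proj 1$. Hence sending $\ket 0$ versus $\ket 1$ gives orthogonal outputs, so $\chi(\cN_\a)\ge 1$ and therefore $C(\cN_\a)\ge 1$.

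The substantive direction is the upper bound $C_\b(\cN_\a)=\log\beta(\cN_\a)\le 1$, i.e. $\beta(\cN_\a)\le 2$. Here I would exhibit an explicit dual-feasible (primal, in the formulation of (\ref{sdp_beta})) pair $\{S_B,R_{AB}\}$ satisfying $-R_{AB}\le J_{\cN_\a}^{T_B}\le R_{AB}$ and $-\1_A\ox S_B\le R_{AB}^{T_B}\le \1_A\ox S_B$ with $\tr S_B=2$. The natural first attempt is $R_{AB}=\|J_{\cN_\a}^{T_B}\|_\infty\,\1_{AB}$ and $S_B=\|J_{\cN_\a}^{T_B}\|_\infty\,\1_B$, which by the last line of Theorem \ref{SDP strong converse} gives $\beta(\cN_\a)\le d_B\|J_{\cN_\a}^{T_B}\|_\infty = 3\|J_{\cN_\a}^{T_B}\|_\infty$; this is not good enough unless $\|J_{\cN_\a}^{T_B}\|_\infty\le 2/3$, so a tailored choice is needed. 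I would write out $J_{\cN_\a}=\sum_{ij}\ketbra ij_A\ox\cN_\a(\ketbra ij)_B$ explicitly using $E_0,E_1$, take its partial transpose, and then look for $R_{AB}$ supported on the small invariant blocks (the Choi matrix here has rank $2$ and lives in a low-dimensional subspace), choosing $R_{AB}$ block-diagonally just large enough to dominate $\pm J_{\cN_\a}^{T_B}$, and then picking $S_B$ diagonal with the two nonzero-support entries summing to $2$ so that $\pm R_{AB}^{T_B}\le \1_A\ox S_B$. The structural parallel with the amplitude-damping computation (where the optimal $S_B$ is diagonal and $V_{AB}$ is $J$ plus a rank-one maximally-entangled correction) suggests the same shape of ansatz will work here, possibly via $C_\zeta$ instead of $C_\b$ if that SDP is easier to certify.

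The main obstacle is guessing the right explicit feasible point for the SDP (\ref{sdp_beta}) (or (\ref{dual zeta})) and then verifying the two operator inequalities, which amounts to checking positive semidefiniteness of a handful of small matrices whose entries involve $\sin\a,\cos\a$; the delicate part is that the bound $2$ must hold \emph{uniformly} in $\a\in(0,\pi/4]$, so the eigenvalue conditions cannot degenerate at the endpoints. I would also double-check the complementary direction for sharpness — that $\beta(\cN_\a)\ge 2$, so that $C_\b(\cN_\a)$ is exactly $1$ rather than merely $\le 1$ — using either the explicit primal-feasible $F_{AB}$ coming from the perfect one-bit code above, or weak duality against a dual-feasible point of the SDP for $\beta$. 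Once the sandwich $1\le C(\cN_\a)\le C_{\rm{NS}\cap\rm{PPT}}(\cN_\a)\le C_\b(\cN_\a)\le 1$ closes, the proposition follows immediately, and the strong converse property (error $\to 1$ above rate $1$) is inherited from Theorem \ref{SDP strong converse}.
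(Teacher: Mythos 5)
Your outline follows the same strategy as the paper: establish $C(\cN_\a)\ge 1$ via an explicit perfectly-distinguishable input pair, establish $C_\b(\cN_\a)\le 1$ by exhibiting a feasible point of the SDP (\ref{sdp_beta}) with $\tr S_B=2$, and close the sandwich $1\le C(\cN_\a)\le C_{\rm NS\cap PPT}(\cN_\a)\le C_\b(\cN_\a)\le 1$. Your lower-bound reasoning is exactly right and matches the paper's: $\cN_\a(\proj 0)=\proj 1$ while $\cN_\a(\proj 1)=\sin^2\a\,\proj 0+\cos^2\a\,\proj 2$ are orthogonal, so one bit goes through noiselessly; you also correctly ruled out the $\{\ket 0,\ket 2\}$ pair. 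You are also right that sharpness of $\beta(\cN_\a)=2$ follows for free from the sandwich, so no separate lower bound on $\beta$ is needed.

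The gap is that you never actually produce the feasible pair $\{S_B,R_{AB}\}$, which is the entire nontrivial content of this proposition. You candidly flag this as ``the main obstacle,'' but a proof must overcome it, not merely locate it. Your proposed ansatz — $S_B$ diagonal and $R_{AB}$ of the form ``$J_{\cN_\a}$ plus a rank-one maximally-entangled correction,'' mirroring the amplitude-damping proof via $C_\zeta$ — is not what works here; the paper stays with the $\beta$ program and takes
\begin{equation*}
Z_B=\sin^2\!\a\,\proj 0+\cos^2\!\a\,\proj 2+\proj 1,
\end{equation*}
with $R_{AB}$ a sum of several diagonal rank-one terms on $A\ox B$ plus an off-diagonal coupling $\sin\a\cos\a(\ketbra{02}{20}+\ketbra{20}{02})$, chosen block-by-block to dominate $\pm J_{\cN_\a}^{T_B}$ while $R_{AB}^{T_B}$ stays dominated by $\1_A\ox Z_B$. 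Without writing down a concrete $R_{AB}$ and verifying the four operator inequalities (which do hold uniformly on $\a\in(0,\pi/4]$, but only for the right choice), the upper bound is unproven. So the architecture is correct and identical to the paper's, but the proposal as written stops short of a proof.
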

\begin{proof}
Suppose the $Z_B=\sin^2\a\proj 0 +\cos^2\a\proj 2+\proj{1}$ and 
\begin{align*}
R&_{AB}=\proj{01}+\proj{11}+\proj{21}
+\sin^2\a(\proj{10}+\proj{20})\\
&+\cos^2\a(\proj{02}+\proj{12})
+\sin\a\cos\a(\ketbra{02}{20}+\ketbra{20}{02}).
\end{align*}
It is easy to check that 
\begin{align*}
-R_{AB}\le J_{\cN_\a}^{T_B}\le R_{AB} \text{ and }
 -\1_A\ox Z_B\le R_{AB}^{T_B}\le \1_A\ox Z_B,
 \end{align*}
 where $J_{\cN_\a}$ is the Choi-Jamio\l{}kowski matrix of $\cN_{\alpha}$. 
 
 Therefore,  $\{Z_B, R_{AB}\}$ is a feasible solution of  SDP (\ref{sdp_beta}) of $\b(\cN_{\a})$, which means that $$\b(\cN_{\a})\le \tr Z_B=2.$$
Noticing that we can use input $\proj 0$ and $\proj{1}$ to transmit two messages via $\cN$, we can conclude that
$$C(\cN_{\a})=C_{\rm{NS}\cap \rm{PPT}}(\cN_{\a})= 1.$$
\end{proof}

\begin{remark}
In \cite{Wang2016f}, the entanglement-assisted  capacity of $\cN_\a$ is shown to be 
$$C_{\rm{E}}(\cN_\a)=2.$$
Therefore, for $\cN_\a$ $(0<\alpha\le \pi/4)$, our bound $C_\b$ is strictly smaller than the entanglement-assisted capacity. In this case, we also note that 
$C_{\b}(\cN_\a) < C_{\zeta}(\cN_\a)$.
However, it remains unknown whether $C_{\b}$ is always smaller than or equal to $C_{\zeta}$.

Furthermore, it is easy to see that $\cN_\a$ is neither an entanglement-breaking channel nor a Hadamard channel. Note also that $\cN_\a$ is not belong to the three classes in \cite{Koenig2009}, for which the strong converse for classical capacity has been established. Thus, our results show a new class of quantum channels which satisfy the strong converse property for classical capacity.
\end{remark}

Moreover, we find that the strong converse property also holds for the private classical capacity  \cite{Devetak2005a,Cai2004} of $\cN_\a$. Note that private capacity requires that no information leaked to the environment and is usually called $P(\cN)$. Recently, several converse bounds for private communication were established in \cite{Wilde2016c,Pirandola2015b,Wilde2016d,Takeoka2014,Christandl2016}. 
\begin{proposition}
 The private capacity of $\cN_\a$ is exactly one bit, i.e.,
$P(\cN_\a)=1$.
In particular,
$$Q(\cN_\a)\le \log(1+\cos \a)<1=P(\cN_\a)=C(\cN_\a)=\frac{1}{2}C_{E}(\cN_\a). $$
\end{proposition}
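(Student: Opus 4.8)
The plan has two largely independent pieces: establishing $P(\cN_\a)=1$, and then the quantum-capacity bound together with the remaining (in)equalities in the display.

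For $P(\cN_\a)=1$: a private classical code is in particular a reliable classical code, so $P(\cN_\a)\le C(\cN_\a)$, and $C(\cN_\a)=1$ by Proposition~\ref{C N_A}; it therefore suffices to certify $P(\cN_\a)\ge 1$ by exhibiting a single use of $\cN_\a$ with private information at least $1$. I would first record the action of $\cN_\a$ and of the complementary channel $\cN_\a^c$ on the computational basis, which the given Kraus operators make immediate: $\cN_\a(\proj0)=\cN_\a(\proj2)=\proj1$ and $\cN_\a(\proj1)=\sin^2\!\a\,\proj0+\cos^2\!\a\,\proj2$, whereas $\cN_\a^c(\proj0)=\proj1$, $\cN_\a^c(\proj2)=\proj0$ and $\cN_\a^c(\proj1)=\sin^2\!\a\,\proj0+\cos^2\!\a\,\proj1$. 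The crucial asymmetry is that Bob's outputs for $\ket0$ and $\ket2$ coincide (both $\proj1$), while the environment's outputs for $\ket0,\ket2$ are the orthogonal pure states $\proj1,\proj0$.

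Then I would take the binary wiretap ensemble with uniform prior: message $0$ encoded as $\rho_0=\proj1$, message $1$ encoded as the \emph{mixed} state $\rho_1=\cos^2\!\a\,\proj0+\sin^2\!\a\,\proj2$. On Bob's side $\cN_\a(\rho_0)=\sin^2\!\a\,\proj0+\cos^2\!\a\,\proj2$ and $\cN_\a(\rho_1)=\proj1$ are supported on orthogonal subspaces, so the Holevo quantity of this uniform binary ensemble is $I(X;B)=\log 2=1$. On the environment's side $\cN_\a^c(\rho_0)=\cN_\a^c(\rho_1)=\sin^2\!\a\,\proj0+\cos^2\!\a\,\proj1$, so $I(X;E)=0$; the weights $(\cos^2\!\a,\sin^2\!\a)$ on message $1$ are chosen precisely so that the environment's average state equals $\cN_\a^c(\proj1)$, erasing all leakage. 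Hence the one-shot private information is $\ge I(X;B)-I(X;E)=1$, so $P(\cN_\a)\ge 1$, and combined with the upper bound, $P(\cN_\a)=1$.

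For the quantum-capacity bound I would invoke a strong converse bound for the quantum capacity of the same flavour as Theorem~\ref{SDP strong converse} — an SDP for the coherent-information problem in the spirit of~\eqref{sdp_beta} — and check that a feasible dual choice tailored to $\cN_\a$ evaluates it to $\log(1+\cos\a)$, the appearance of $\cos\a=\sqrt{1-\sin^2\!\a}$ mirroring the amplitude-damping value $\log(1+\sqrt{1-\gamma})$ with $\gamma=\sin^2\!\a$. Assembling everything: $Q(\cN_\a)\le\log(1+\cos\a)<1$ since $0<\a\le\pi/4$ forces $\cos\a<1$; $1=P(\cN_\a)=C(\cN_\a)$ by the above and Proposition~\ref{C N_A}; and $C_{\rm E}(\cN_\a)=2$ (from \cite{Wang2016f}) gives $1=\tfrac12 C_{\rm E}(\cN_\a)$. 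The main obstacle is the first piece — realising that randomising the message-$1$ input over $\ket0,\ket2$ with exactly the weights $\cos^2\!\a,\sin^2\!\a$ simultaneously zeroes the leakage and keeps Bob's two outputs perfectly distinguishable; once this is guessed, the private-information and quantum-capacity computations are routine.
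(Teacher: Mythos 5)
Your proof is correct and follows the same overall strategy as the paper: bound $P(\cN_\a)$ from above by $C(\cN_\a)=C_\beta(\cN_\a)=1$, bound it from below by exhibiting a one-shot wiretap ensemble with private information $1$, and handle $Q(\cN_\a)$ via an external SDP bound (the paper cites \cite{Wang2016a}). The one substantive difference is your encoding of message~$1$, and it matters. The paper uses the \emph{pure} input $\ket{\psi_1}=\cos\a\ket0+\sin\a\ket2$ and asserts $\cN_\a^c(\proj{\psi_1})=\sin^2\!\a\,\proj0+\cos^2\!\a\,\proj1$, but this drops the cross terms: since $\cN_\a^c(\ketbra{0}{2})=\ketbra{1}{0}$, one actually gets $\cN_\a^c(\proj{\psi_1})=\sin^2\!\a\,\proj0+\cos^2\!\a\,\proj1+\sin\a\cos\a(\ketbra01+\ketbra10)$, a \emph{pure} state on the environment, which is not equal to $\cN_\a^c(\proj{\psi_0})=\sin^2\!\a\,\proj0+\cos^2\!\a\,\proj1$, so the leakage is strictly positive for that encoding. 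Your \emph{mixed} input $\rho_1=\cos^2\!\a\,\proj0+\sin^2\!\a\,\proj2$ — equivalently, phase-randomizing over $\cos\a\ket0\pm\sin\a\ket2$ — kills exactly those cross terms while leaving Bob's output $\cN_\a(\rho_1)=\proj1$ unchanged, and then $\cN_\a^c(\rho_0)=\cN_\a^c(\rho_1)$ holds identically, giving $I(X;E)=0$ and hence $P(\cN_\a)\ge 1$. So your choice is not merely an alternative: it is the correct one, and it repairs a small but genuine error in the paper's displayed computation while preserving the proposition's conclusion. The remaining pieces (the $Q$ bound via an SDP in the spirit of $C_\beta$ with the substitution $\gamma=\sin^2\!\a$, and $C_{\rm E}(\cN_\a)=2$ from \cite{Wang2016f}) match the paper's treatment.
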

\begin{proof}
On one hand, it is easy to see that $P(\cN_\a)\le C(\cN_\a)=C_\beta(\cN_\a)=1$.

On the other hand, Alice can choose two input states $\ket{\psi_0}=\ket{1}$ and  $\ket{\psi_1}=\cos\a\ket 0+\sin\a\ket 2$, then the corresponding output states Bob received are
\begin{align*}
\cN_\a(\proj{\psi_0})&=\sin\a^2\proj{0} +\cos\a^2\proj{2},\\
\cN_\a(\proj{\psi_1})&= \proj{1}.
\end{align*}
 It is clear that Bob can perfectly distinguish these two output states. Meanwhile, the corresponding outputs of the complementary channel $\cN_\a^c$ are same, i.e.,
 \begin{align*}
\cN_\a^c(\proj{\psi_0})=\cN_\a^c(\proj{\psi_1})=\sin\a^2\proj{0} +\cos\a^2\proj{1},
\end{align*}
which means that the environment obtain zero information during the communication.

 Applying the SDP bound of the quantum capacity in \cite{Wang2016a}, the quantum capacity of $\cN_\a$ is strictly smaller than $\log(1+\cos \a)$.
\end{proof}

Our result establishes the strong converse property  for  both the classical and private capacities of $\cN_\a$. For the classical capacity, such a property was previously only known for classical channels,  identity channel, entanglement-breaking channels, Hadamard channels and particular covariant quantum channels \cite{Wilde2014a,Koenig2009}. For the private capacity,  such a property was previously only known for generalized dephasing channels and quantum erasure channels \cite{Wilde2016c}. Moreover, our result also shows a simple example of the distinction between the private and the quantum capacities, which were discussed in \cite{Horodecki2005,Leung2014}.

\section{Zero-error capacity}
\label{NS PPT zero error}
 While ordinary information theory focuses on sending messages with asymptotically vanishing errors \cite{Shannon1948}, Shannon also investigated this problem in the zero-error setting and described the zero-error capacity of a channel as the maximum rate at which it can be used to transmit information with zero probability of error \cite{Shannon1956}.  Recently the zero-error information theory has been studied in the quantum setting and many new interesting phenomena have been found 
 \cite{Duan2008a, Cubitt2010,Duan2013, Cubitt2011a, Leung2012, Cubitt2012, Duan2015a, Duan2015}.

The one-shot zero-error capacity of a quantum channel $\cN$ is the maximum number of inputs such that the receiver can perfectly distinguish the corresponding output states.  Cubitt et al. \cite{Cubitt2011} first introduced  the zero-error communication via classical channels assisted by classical no-signalling correlations. Recently,  no-signalling-assisted   zero-error communication over quantum channels was introduced in \cite{Duan2016}.

Using the expression (\ref{C1 NSPPT}) for our one-shot $\e$-error capacity, we are going to show a formula for the one-shot zero-error classical capacity assisted by NS (or NS$\cap$PPT)  codes.
\begin{theorem}
The one-shot zero-error classical capacity (quantified as messages) of
  $\cN$ assisted by $\rm{NS \cap PPT}$ codes is given by
\begin{equation}\begin{split}
  \label{eq:Upsilon PPT}
  M_{0, \rm{NS \cap PPT}} (\cN) = \max &\ \tr S_A \\
   \rm{ s.t.} &\ 0 \leq U_{AB} \leq S_A \ox \1_B, \\
   &\ \tr_A U_{AB} = \1_B, \\
         &\ \tr J_{\cN}(S_A\ox\1_B-U_{AB}) = 0, \\
         &\ 0 \leq U_{AB}^{T_{B}} \leq S_A \ox \1_B \ (\rm{PPT}).
\end{split}\end{equation}
To obtain $M_{0, \rm{NS}}(\cN)$, one only needs to remove the PPT constraint.
By the regularization,
 the $\O$-assisted zero-error classical capacity is
$$C_{0,\O}(\cN)    = \sup_{n\geq 1} \frac1n \log M_{0, \O}(\cN^{\ox n}).$$
\end{theorem}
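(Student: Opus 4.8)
The plan is to obtain the formula as the $\e\to 0$ specialization of the one-shot $\e$-error NS$\cap$PPT capacity from Theorem~\ref{converse PPT}. Since $f_{\text{NS}\cap\text{PPT}}(\cN,m)\le 1$ always, the condition $1-f_{\text{NS}\cap\text{PPT}}(\cN,m)\le 0$ in the definition~(\ref{error capacity}) of $C_{\text{NS}\cap\text{PPT}}^{(1)}(\cN,0)$ is equivalent to $f_{\text{NS}\cap\text{PPT}}(\cN,m)=1$; and because the feasible region of the SDP~(\ref{SDP f}) is compact, the supremum defining $f_{\text{NS}\cap\text{PPT}}$ is attained, so this holds iff there is a feasible pair $(\rho_A,F_{AB})$ for~(\ref{SDP f}) with $\tr J_\cN F_{AB}=1$. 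Hence $M_{0,\text{NS}\cap\text{PPT}}(\cN)=2^{C_{\text{NS}\cap\text{PPT}}^{(1)}(\cN,0)}$, and I would start from the SDP~(\ref{C1 NSPPT}) with $\e$ set to $0$.

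The key observation is that at $\e=0$ the constraint $\tr J_\cN F_{AB}\ge 1$ is automatically tight. Since $\cN$ is trace-preserving, $\tr_B J_\cN=\1_A$, so $\tr J_\cN(\rho_A\ox\1_B)=\tr\rho_A=1$; together with $J_\cN\ge 0$ and $F_{AB}\le\rho_A\ox\1_B$ this forces $\tr J_\cN F_{AB}\le 1$, so the constraint becomes $\tr J_\cN(\rho_A\ox\1_B-F_{AB})=0$. Thus at $\e=0$ the SDP~(\ref{C1 NSPPT}) reads: minimize $\eta$ subject to $0\le F_{AB}\le\rho_A\ox\1_B$, $\tr\rho_A=1$, $\tr_A F_{AB}=\eta\1_B$, $\tr J_\cN(\rho_A\ox\1_B-F_{AB})=0$, and $0\le F_{AB}^{T_B}\le\rho_A\ox\1_B$. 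Note also $\eta>0$, since $\eta=0$ would force $F_{AB}=0$ and then $\tr J_\cN(\rho_A\ox\1_B)=1\ne 0$, contradicting the equality constraint.

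Next I would substitute $S_A:=\rho_A/\eta$ and $U_{AB}:=F_{AB}/\eta$. Dividing each constraint by $\eta>0$ turns them into exactly $0\le U_{AB}\le S_A\ox\1_B$, $\tr_A U_{AB}=\1_B$, $\tr J_\cN(S_A\ox\1_B-U_{AB})=0$, $0\le U_{AB}^{T_B}\le S_A\ox\1_B$, while the objective transforms as $\tr S_A=\tr\rho_A/\eta=1/\eta$. The map is invertible: for any $(S_A,U_{AB})$ feasible in the zero-error SDP one has $\tr U_{AB}=\tr\tr_A U_{AB}=d_B$ and $\tr U_{AB}\le\tr(S_A\ox\1_B)=d_B\tr S_A$, so $\tr S_A\ge 1>0$, and $\eta:=1/\tr S_A$, $\rho_A:=\eta S_A$, $F_{AB}:=\eta U_{AB}$ recover a feasible point of the $\e=0$ SDP. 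Therefore $\min\eta=1/\max\tr S_A$, whence $M_{0,\text{NS}\cap\text{PPT}}(\cN)=2^{-\log\min\eta}=\max\tr S_A$, which is the claimed formula. Deleting the line $0\le F_{AB}^{T_B}\le\rho_A\ox\1_B$ (equivalently $0\le U_{AB}^{T_B}\le S_A\ox\1_B$) throughout gives $M_{0,\text{NS}}(\cN)$, and the final display $C_{0,\O}(\cN)=\sup_{n\ge1}\frac1n\log M_{0,\O}(\cN^{\ox n})$ is immediate from the definition of $C_{0,\O}$ together with this one-shot characterization and the monotonicity of $f_\O(\cN,\cdot)$ in the number of messages.

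The computation is largely routine rescaling; the two points requiring care are (i) the attainment of $f_{\text{NS}\cap\text{PPT}}(\cN,m)=1$, which I would justify by compactness of the feasible set of~(\ref{SDP f}), and (ii) the tightness of $\tr J_\cN F_{AB}\ge 1$ at $\e=0$, which rests on trace preservation of $\cN$ via $\tr_B J_\cN=\1_A$. I do not expect a genuine obstacle beyond bookkeeping of which constraints survive the $\e\to 0$ limit and rescale cleanly under $F\mapsto F/\eta$, $\rho\mapsto\rho/\eta$.
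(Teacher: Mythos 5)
Your proof is correct and follows essentially the same route as the paper's: specialize the SDP~(\ref{C1 NSPPT}) to $\e=0$, rescale $S_A=\rho_A/\eta$, $U_{AB}=F_{AB}/\eta$, and use trace preservation ($\tr_B J_\cN=\1_A$) together with $0\le F_{AB}\le\rho_A\ox\1_B$ and $J_\cN\ge 0$ to upgrade $\tr J_\cN F_{AB}\ge 1$ to an equality. The only cosmetic difference is that you tighten the inequality before rescaling while the paper tightens it after; your added remarks on $\eta>0$, compactness, and invertibility of the substitution are welcome rigor the paper leaves implicit.
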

\begin{proof}
When $\e=0$, it is easy to see that
\begin{equation}\begin{split}
C_{\text{\rm{NS}}\cap\text{\rm{PPT}}}^{(1)}(\cN,0)= -\log \min &\  \eta  \\
 \rm{ s.t. }  &\  0\le F_{AB}\le \rho_A\otimes \1_B, \\
 &\ \tr\rho_A=1, \tr_{A}F_{AB}=\eta\1_B, \\
&\ \tr J_{\cN}F_{AB}\ge 1, \\
&\ 0\le F_{AB}^{T_B}\le \rho_A\otimes \1_B.\\
\end{split}\end{equation}

Then, assuming that $x=1/\eta$, $U_{AB}=xF_{AB}$ and $S_A=x\rho_A$, we have that
\begin{equation}\label{M NS PPT 1}
\begin{split}
M_{0, \rm{NS \cap PPT}}(\cN)&=2^{C_{\text{\rm{NS}}\cap\text{\rm{PPT}}}^{(1)}(\cN,0)} \\
& =\max \ \tr S_A \\  
&\phantom{==}\rm{ s.t. }  \  0\le U_{AB}\le S_A\otimes \1_B, \\
&\phantom{==  \rm{ s.t. }} \tr_{A}U_{AB}=\1_B, \\
&\phantom{==  \rm{ s.t. } } \tr J_{\cN}U_{AB}\ge \tr S_A, \\
&\phantom{== \rm{ s.t. } } \ 0\le U_{AB}^{T_B}\le S_A\otimes \1_B.\\
\end{split}\end{equation}

By the fact that $\tr S_A=\tr J_\cN(S_A\ox\1_B)$, the third constraint in Eq. (\ref{M NS PPT 1}) is equivalent to 
$\tr J_{\cN}(S_A\ox\1_B-U_{AB})\le 0$. Noticing that $S_A\otimes \1_B-U_{AB}\ge 0$, we can simplify Eq. (\ref{M NS PPT 1}) to
\begin{equation}\label{M NS PPT 2}
\begin{split}
M_{0, \rm{NS \cap PPT}}(\cN)= \max &\ \tr S_A  \\
  \rm{ s.t. }  &\  0\le U_{AB}\le S_A\otimes \1_B, \\
  &\ \tr_{A}U_{AB}=\1_B, \\
&\ \tr J_{\cN}(S_A\ox\1_B-U_{AB}) = 0, \\
&\  0\le U_{AB}^{T_B}\le S_A\otimes \1_B.\\
\end{split}\end{equation}
\end{proof}
\begin{remark}
It is worth noting that $M_{0, \rm{NS}}(\cN)$ coincides with the no-signalling assisted zero-error capacity in \cite{Duan2016}. Also, it can be proved that $M_{0, \rm{NS\cap PPT}}$  also depends only on the non-commutative bipartite graph \cite{Duan2016} of $\cN$.
\end{remark}

A natural application of $M_{0, \O}(\cN)$  is to upper bound the one-shot zero-error capacity, i.e.,
$$M_{0}(\cN) \le M_{0, \rm{NS\cap PPT}}(\cN)\le M_{0, {\rm{NS}}}(\cN).$$
It is known that computing the one-shot  zero-error capacity of a quantum channel is QMA-complete \cite{Beigi2007}. However, our bounds can be efficiently solved by semidefinite programming. 
To the best of our knowledge, 
for a general quantum channel $\cN=\sum_iE_i\cdot E_i^\dagger$,
the best known bound of the one-shot zero-error  capacity is the naive form of the Lov\'asz number $\vartheta(\cN)$ in \cite{Duan2013}, i.e.,
$$\vartheta(\cN)=\vartheta(\cS)=\max \{\|\1+T \|_{\infty}: \ T\in \cS^{\perp}, \1+T\ge 0 \},$$
where $\cS=\text{span}\{E_j^\dagger E_k\}$ is the non-commutative graph of $\cN$.

In the next Proposition, we show that $M_{0, \rm{NS\cap PPT}}(\cN)$ can be strictly smaller than $\vartheta(\cN)$ for some quantum channel $\cN$. This implies that $M_{0, \rm{NS\cap PPT}}(\cN)$ can provide a more accurate estimation of the one-shot zero-error capacity of some general quantum channels.
\begin{proposition}
For  $\cN_{\alpha}$ $(0<\alpha\le \pi/4)$,
$$M_{0, \rm{NS\cap PPT}}(\cN_\a)<\vartheta(\cN_\a).$$
\end{proposition}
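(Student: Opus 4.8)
The plan is to compute both quantities exactly: to show $M_{0,\rm{NS\cap PPT}}(\cN_\a)=2$ and $\vartheta(\cN_\a)=3$ for every $0<\a\le\pi/4$, so that the strict inequality follows at once.

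For $M_{0,\rm{NS\cap PPT}}(\cN_\a)\ge 2$ I would reuse the two-message zero-error code already produced in the proof that $P(\cN_\a)=1$: feeding in $\ket 1$ and $\cos\a\ket 0+\sin\a\ket 2$ gives, after a one-line computation with $E_0,E_1$, the output states $\sin^2\a\proj 0+\cos^2\a\proj 2$ and $\proj 1$, which have orthogonal support and are therefore perfectly distinguishable by Bob; since $M_0(\cN_\a)\le M_{0,\rm{NS\cap PPT}}(\cN_\a)$, this gives the lower bound. For the matching upper bound, note that $\log M_{0,\rm{NS\cap PPT}}(\cN_\a)\le C_{0,\rm{NS\cap PPT}}(\cN_\a)\le C_{\rm{NS\cap PPT}}(\cN_\a)=1$ by Proposition~\ref{C N_A} (a zero-error rate is in particular achievable with vanishing error), hence $M_{0,\rm{NS\cap PPT}}(\cN_\a)\le 2$. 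Alternatively one can argue directly from the one-shot SDP: by Proposition~\ref{F A UB}, $f_{\rm{NS\cap PPT}}(\cN_\a,m)\le f^{+}(\cN_\a,m)$, while the proof of Theorem~\ref{SDP strong converse} shows $f^{+}(\cN_\a,m)<1$ for every $m>\b(\cN_\a)$; hence no integer $m>\b(\cN_\a)$ can have $f_{\rm{NS\cap PPT}}(\cN_\a,m)=1$, and since $\b(\cN_\a)\le 2$ by Proposition~\ref{C N_A} we again get $M_{0,\rm{NS\cap PPT}}(\cN_\a)\le 2$. Either way, $M_{0,\rm{NS\cap PPT}}(\cN_\a)=2$.

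It remains to evaluate $\vartheta(\cN_\a)$. First I would compute the non-commutative graph from the Kraus operators: $E_0^\dagger E_0=\sin^2\a\proj 1+\proj 2$, $E_1^\dagger E_1=\cos^2\a\proj 1+\proj 0$, $E_0^\dagger E_1=\ketbra 2 0$ and $E_1^\dagger E_0=\ketbra 0 2$, so $\cS=\text{span}\{\sin^2\a\proj 1+\proj 2,\ \cos^2\a\proj 1+\proj 0,\ \ketbra 2 0,\ \ketbra 0 2\}$. The key observation is that $\1=(\sin^2\a\proj 1+\proj 2)+(\cos^2\a\proj 1+\proj 0)\in\cS$, so every feasible $\1+T$ in the program defining $\vartheta(\cN_\a)$ satisfies $\tr(\1+T)=\tr\1=3$ and hence $\|\1+T\|_\infty\le 3$. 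To see that $3$ is attained, I would take the unit vector $\ket\psi=\sqrt{(1+\sec^2\a)/3}\,\ket 1+\sqrt{(2-\sec^2\a)/3}\,\ket 2$ (a genuine state since $1<\sec^2\a\le 2$ on $0<\a\le\pi/4$) and set $M=3\proj\psi$: then $M\ge 0$, the $(0,2)$ and $(2,0)$ entries of $M$ vanish because $\psi_0=0$ (so $M$ is orthogonal to $\ketbra 2 0$ and $\ketbra 0 2$), and a short check using $\sec^2\a=1+\tan^2\a$ gives $\langle M,\sin^2\a\proj 1+\proj 2\rangle=1+\sin^2\a=\tr(\sin^2\a\proj 1+\proj 2)$ and likewise $\langle M,\cos^2\a\proj 1+\proj 0\rangle=1+\cos^2\a$, i.e. $M-\1\in\cS^\perp$. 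Thus $M=\1+T$ is feasible with $\|M\|_\infty=3$, so $\vartheta(\cN_\a)=3$, and combining with the previous paragraph, $M_{0,\rm{NS\cap PPT}}(\cN_\a)=2<3=\vartheta(\cN_\a)$.

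The main obstacle is the evaluation of $\vartheta(\cN_\a)$: one must spot that $\1\in\cS$ --- this simultaneously caps $\vartheta(\cN_\a)$ at $3$ and suggests looking for a rank-one saturating operator --- and then choose the diagonal of that operator to meet the two linear constraints coming from $\sin^2\a\proj 1+\proj 2$ and $\cos^2\a\proj 1+\proj 0$, which is possible precisely because the off-diagonal block $\ketbra 1 2$ is unconstrained. Everything else is bookkeeping built on Propositions~\ref{C N_A} and \ref{F A UB} and the monotonicity of $f^{+}$ from Theorem~\ref{SDP strong converse}.
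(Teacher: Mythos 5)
Your proof is correct, and it takes a genuinely different (more ambitious) route on the $\vartheta$ side. The paper is content to show $M_{0,\rm{NS\cap PPT}}(\cN_\a)\le 2$ via Proposition~\ref{C N_A} (as you also do) and then produce a single diagonal witness $T_0=-\proj 0+\cos^{-2}\a\,\proj 1+(1-\cos^{-2}\a)\proj 2\in\cS^\perp$ with $\1+T_0\ge 0$ and $\|\1+T_0\|_\infty=1+\cos^{-2}\a>2$, yielding a strict lower bound on $\vartheta(\cN_\a)$ that depends on $\a$ and is tight only at $\a=\pi/4$. You instead evaluate $\vartheta(\cN_\a)$ exactly: the observation that $\1=E_0^\dagger E_0+E_1^\dagger E_1\in\cS$ forces $\tr(\1+T)=3$ for every feasible Hermitian $T\in\cS^\perp$ (a fact available for any trace-preserving channel but unexploited by the paper), which caps $\vartheta\le 3$, and your rank-one $M=3\proj\psi$ with $\psi$ supported on $\ket 1,\ket 2$ meets the two remaining linear constraints and saturates the cap, giving $\vartheta(\cN_\a)=3$ on the whole range $0<\a\le\pi/4$. (I checked your witness: $\langle M,\sin^2\a\proj 1+\proj 2\rangle=1+\sin^2\a$, $\langle M,\cos^2\a\proj 1+\proj 0\rangle=1+\cos^2\a$, the off-diagonal constraints are automatic since $\psi_0=0$, and $(2-\sec^2\a)/3\ge 0$ precisely when $\a\le\pi/4$.) Your computation is strictly stronger than the paper's bound for $\a<\pi/4$ and reveals the somewhat striking fact that $\vartheta$ for this graph equals the ambient dimension. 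The small caveat is that your alternative argument for $M_{0,\rm{NS\cap PPT}}(\cN_\a)\le 2$ relies on the strict inequality $f^{+}(\cN_\a,m)<1$ for $m>\b(\cN_\a)$, which the paper asserts in Theorem~\ref{SDP strong converse} without spelling out; it does hold (rescaling the optimal $Z_B$ gives $f^{+}(\cN,m')\le (m/m')f^{+}(\cN,m)$), but your primary route through $C_{\rm{NS\cap PPT}}(\cN_\a)=1$ is cleaner and matches the paper's intent.
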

\begin{proof}
One one hand,  one can also use the prime and dual SDPs of $M_{0, \rm{NS\cap PPT}}$ to prove $M_{0, \rm{NS\cap PPT}}(\cN_\a)\le 2$.
Indeed, this is also easy to see by Proposition \ref{C N_A}.

On the other hand, 
we are going to prove $\vartheta(\cN_\a)\ge 1+\cos^{-2}\a $. Suppose that $T_0=-\proj{0}+\cos^{-2}\a \proj{1}+(1-\cos^{-2}\a )\proj{2}$. It is clear that $T_0 \in S^\perp$ and $\1+T_0 \ge 0$. Thus, $$\vartheta(\cN_\a)\ge \|\1+T_0\|_\infty = \bra 1 (\1+T_0)\ket 1 =  1+\cos^{-2}\a>2.$$
\end{proof}

For this class of quantum channels, it is worth noting that the private zero-error capacity is also one bit while its quantum zero-error capacity is strictly smaller than one qubit, i.e.,
$Q_0(\cN_\a)<1=P_0(\cN_\a)=C_0(\cN_\a)$.
This shows a difference between the quantum and the private capacities of a quantum
channel in the zero-error setting,
which relates to the work about maximum privacy without coherence in the zero-error case \cite{Leung2015}.

\section{Conclusions and Discussions}\label{conclusion}
In summary,  we have established fundamental limits for classical communication over quantum channels by considering general codes with NS constraint or NS$\cap$PPT constraint. New SDP bounds for classical communication under both finite blocklength and asymptotic settings are obtained in this work.

We first study the finite blocklength regime. By imposing both no-signalling and PPT-preserving constraints, we have obtained the optimal success probabilities of transmitting classical information assisted by NS and NS$\cap$PPT codes. Based on this, we have also derived the 
one-shot $\e$-error NS-assisted and NS$\cap$PPT-assisted capacities. In particular, all of these one-shot characterizations are in the form of semidefinite programs.  The one-shot NS-assisted and NS$\cap$PPT-assisted)  $\e$-error capacities provide an improved finite blocklength estimation of the classical communication than the previous quantum hypothesis testing converse bounds  in \cite{Matthews2014}.  Moreover, for classical channels, the one-shot NS-assisted and NS$\cap$PPT-assisted  $\e$-error capacities are equal to the linear program for the Polyanskiy-Poor-Verd\'u converse bound \cite{Polyanskiy2010,Matthews2012}, thus giving an alternative proof of that result.
Furthermore, in the asymptotic regime, we derive two SDP strong converse bounds of the classical capacity of a general quantum channel, which are efficiently computable and can be strictly smaller than the entanglement-assisted capacity.
As an example, we have shown an improved upper bound on the classical capacity of the qubit amplitude damping channel. 
Moreover, we have proved that the strong converse property holds for both classical and private capacities for a new class of quantum channels. This result may help us deepen the understanding of the limit ability of a quantum channel to transmit classical information. 

Finally, we apply our results to the study of zero-error capacity. To be specific, based on our SDPs of optimal success probability, we have derived the  one-shot NS-assisted (or NS$\cap$PPT-assisted )  zero-error capacity. Our result of NS-assisted capacity  provides an alternative derivation for the NS-assisted zero-error capacity in \cite{Duan2016}. Moreover, the one-shot NS$\cap$PPT-assisted zero-error capacity  also provide some insights in quantum zero-error information theory.

It would be interesting to study the asymptotic capacity $C_{\rm{NS}\cap \rm{PPT}}$ using such techniques as quantum hypothesis testing. Maybe it also has a single-letter formula similar to entanglement-assisted classical capacity. Perhaps one can obtain tighter converse bounds via the study of $C_{\rm{NS}\cap \rm{PPT}}$.
Another direction is to further tighten the one-shot and strong converse bounds by involving the separable constraint  \cite{Harrow2015}.  It would also be interesting to study how to implement the no-signalling and PPT-preserving codes.

\section*{Acknowledgement}
The authors are grateful to Jens Eisert, Stefano Pirandola, Mark M. Wilde and Dong Yang for their helpful suggestions.
Xin Wang would like to thank Mario Berta, Hao-Chung Cheng, Omar Fawzi, William Matthews, Volkher B. Scholz, Marco Tomamichel and Andreas Winter for helpful discussions. 
We also thank the Associate Editor and the anonymous referees for valuable comments that helped improving the paper.
This work was partly supported by the Australian Research Council under Grant Nos. DP120103776 and FT120100449.

\bibliographystyle{IEEEtran}
\bibliography{Bib}

\end{document}